\documentclass[aps,pra,twocolumn,superscriptaddress,10pt]{revtex4-2}
\usepackage{amsthm}
\usepackage{amsmath,bm}
\usepackage{amssymb}
\usepackage{amsfonts}
\usepackage{graphicx}
\usepackage{txfonts}
\usepackage{xcolor}
\usepackage{float}
\usepackage{braket}
\usepackage[colorlinks=true,linkcolor=blue,citecolor=blue,urlcolor=blue]{hyperref}
\usepackage[capitalise]{cleveref}
\usepackage{bbm}
\usepackage{changes}
\usepackage{ragged2e}
\usepackage[justification=justified, format=plain]{subcaption}
\usepackage[justification=raggedright]{caption}

\newcommand{\Tr}{\text{Tr}}

\newcommand{\ignore}[1]{} 

\newcounter{SaveEqnCntr}

\newcommand{\be}{\begin{align}}
\newcommand{\ee}{\end{align}}
\newcommand{\ba}{\begin{eqnarray}}
\newcommand{\ea}{\end{eqnarray}}

\newtheorem{theorem}{Theorem}
\newtheorem{corollary}{Corollary}
\newtheorem{definition}{Definition}
\newtheorem{proposition}{Proposition}

\newtheorem{remark}{Remark}
\newtheorem{lemma}{Lemma}

\def\>{\rangle}
\def\<{\langle}

\begin{document}
	
\title{Measurement-induced remote activation of nonclassicality}

\author{Sudip Chakrabarty}
\email{sudip27042000@gmail.com}
\affiliation{S. N. Bose National Centre for Basic Sciences, Block JD, Sector III, Salt Lake, Kolkata 700 106, India}

\begin{abstract}
Nonclassicality, defined through nonpositivity of quasiprobability distributions such as the Wigner and Kirkwood-Dirac distributions, underlies key quantum advantages. We show that it can be remotely activated: local measurements by a correlated partner alone can drive a party whose reduced state is classical into a genuinely nonclassical conditional state. For two-qubit systems we obtain closed forms for the maximal single-branch effect and for its probability-weighted average under the best two-outcome projective measurement, the latter proven optimal over all measurements when the shared state carries no local bias. Local noise on the steered party, evaluated relative to its own noise-updated reference basis, can enhance the effect when non-unital and misaligned with that party's axis, whereas no unital channel ever helps. This nonclassicality is shown to unlock, for the same reference observable, anomalous weak values unavailable to the unsteered state under arbitrary postselection. An exactly solvable hybrid qubit-oscillator model shows the same mechanism generates Wigner negativity. These results identify measurement-induced steering as an operational resource for remotely activating nonclassicality.
\end{abstract}

\maketitle

\section{Introduction} \label{sec:intro}
 
Whether a quantum state or process admits any classical description is ultimately a question about the mathematical objects available to represent it. Superposition~\cite{streltsov2017colloquium}, entanglement~\cite{horodecki2009quantum}, nonlocality~\cite{RevModPhys.86.419}, and contextuality~\cite{budroni2022kochen,PhysRevA.71.052108} name the specific ways a quantum state or measurement can defeat a classical account, each underlying part of the advantage quantum systems offer for computation, communication, and sensing. Quasiprobability distributions make this boundary formally sharp: they represent a state through a function or array that behaves like an ordinary probability distribution in every respect but one, its values allowed to go negative or complex, a signature with no classical counterpart~\cite{ferrie2011quasi,cahill1969density,tan2020negativity,veitch2012negative}.

Prominent examples include the Wigner function~\cite{wigner1932quantum,kenfack2004negativity}, the Glauber-Sudarshan $P$ function~\cite{glauber1963coherent,sudarshan1963equivalence}, and the Husimi $Q$ function~\cite{husimi1940some}. In continuous-variable (CV) systems, Wigner negativity is widely regarded as a hallmark of nonclassicality~\cite{kenfack2004negativity, mari2011directly, mallick2025efficient, chakrabarty2026operationaldetectionwignernegativity}, and, beyond its foundational significance, is a central resource for quantum advantage in quantum information and computation, notably through its established equivalence with quantum contextuality~\cite{PhysRevLett.129.230401} and its role as an obstruction to classical simulability~\cite{mari2012positive}.

A complementary quasiprobability framework, better suited to discrete-variable (DV) systems, is the Kirkwood-Dirac (KD) distribution~\cite{arvidsson2024properties,PhysRevA.109.012215,budiyono2023quantifying,umekawa2024advantages, chakrabarty2026probingkirkwooddirac}, introduced by Kirkwood~\cite{kirkwood1933quantum} and Dirac~\cite{dirac1945analogy}: it assigns complex quasiprobabilities to pairs of measurement bases, capturing both statistical and dynamical aspects of a quantum system, its real part, the Margenau-Hill distribution~\cite{margenau1961correlation}, encoding nonclassical features in its own right. KD negativity or non-reality has been linked to contextuality~\cite{kunjwal2019anomalous,pusey2014anomalous}, quantum computation~\cite{x819-898d}, measurement back-action~\cite{jozsa2007complex,dressel2012significance,monroe2021weak}, thermodynamic anomalies~\cite{lostaglio2020certifying,gonzalez2019out}, quantum imaginarity~\cite{chakrabarty2026detectionquantumimaginarityusing}, and weak-value amplification~\cite{arvidsson2020quantum}, and is itself accessible experimentally through weak measurements, interferometry, and direct reconstruction~\cite{PhysRevA.76.012119,PhysRevLett.110.230602,PhysRevLett.110.230601,lundeen2011direct,PhysRevLett.108.070402}, typically quantified by the total weight of negative or non-real terms~\cite{gonzalez2019out,PhysRevA.109.012215,PhysRevLett.127.190404,arvidsson2021conditions,PhysRevA.106.022217,de2023relating,Razieh19}.

With the growing use of nonclassical elements in quantum information processing, we ask whether nonclassicality of either kind, Wigner negativity in CV systems or KD nonpositivity in DV systems, can be activated operationally, on demand, through local actions on a correlated system, rather than treated as a fixed property a state either possesses or lacks. We address this through the lens of quantum steering. Steering is most often discussed as the absence of local-hidden-state (LHS) models~\cite{wiseman2007steering}, asking whether the assemblage of conditional states Alice can prepare for Bob admits a classical explanation; that is not the question asked here. We instead use steering in a purely geometric, operational sense, first made precise for two-qubit states through the quantum steering ellipsoid~\cite{jevtic2014quantum}: which states Alice's local measurements can actually prepare for Bob, regardless of whether the resulting assemblage could also be explained by an LHS model.

A related terminological point is worth flagging here, developed fully in Sec.~\ref{s2A}. That a classical, resource-free $\rho_B$ can admit an ensemble decomposition containing individually nonclassical pure states, and that such a decomposition can always be remotely realized through a suitable measurement on a purifying system, is guaranteed in general by the Hughston-Jozsa-Wootters (HJW) theorem and is the same mechanism behind entanglement of assistance and coherence of assistance; this bare possibility is accordingly not new to this work. What this paper contributes beyond it is an exact two-qubit optimization of the resulting KD nonclassicality over both Alice's measurement and Bob's evaluation basis, a closed-form geometric criterion for exactly when the effect vanishes, a complete account of how local noise on Bob's side affects it, and explicit realizations of the resource as anomalous weak values and as steered Wigner negativity.

For two-qubit KD nonclassicality, we obtain the maximal quantity in closed form for an arbitrary shared state, and the average quantity in closed form for the best two-outcome projective measurement; the two provably coincide, making the latter fully optimal, whenever the shared state carries no local Bloch bias, and otherwise the second is only a proven lower bound on the fully optimized average. We give an exact algebraic vanishing condition for both quantities, and prove that no unital channel on Bob, short of erasing his own polarization direction, can increase either one, while an explicit non-unital amplitude-damping channel can, once sufficiently misaligned with that axis past a numerically located threshold. We connect the KD resource to anomalous weak values, proving that, for the reference observable singled out by this construction, a party holding only the unsteered state is denied this capacity under arbitrary postselection while steering restores it. The same construction, carried into the CV setting, steers a Wigner-positive marginal into genuinely Wigner-negative conditional states in an exactly solvable qubit-oscillator model, with the negativity itself reduced, at fixed measurement settings, to an exact one-dimensional integrand, though the optimal setting and resulting maximal and average negativities are located numerically.

The remainder of this paper is organized as follows. Section~\ref{sec:preliminaries} reviews the geometric notion of quantum steering used throughout, together with the KD and Wigner quasiprobability constructions. Section~\ref{sec:properties} develops the general framework, defining the maximal and average steered nonclassicality, establishing their convexity-based properties, and identifying the states incapable of remote activation and those that saturate it. Section~\ref{sec:twoqubitKD} specializes this framework to two-qubit systems and KD nonclassicality, obtaining both quantities in closed form and illustrating them on worked examples. Section~\ref{sec:local_ops} examines how local noise on the steered party affects the result. Section~\ref{sec:application} connects the resulting resource to anomalous weak values. Section~\ref{sec:steered_wigner} carries the construction into the CV setting and studies steered Wigner negativity. Section~\ref{s_conclusions} concludes with a summary and open directions; the longer proofs and the numerical procedure are collected in the Appendix.

\section{Preliminaries}\label{sec:preliminaries}

\subsection{Quantum steering and the steering ellipsoid}\label{s2A}
 
Quantum steering was first noted by Schr\"odinger \cite{schrodinger1935discussion} in response to the Einstein-Podolsky-Rosen (EPR) paradox \cite{einstein1935can}: local measurements by Alice can remotely prepare, at Bob's location, an ensemble of conditional states depending on her choice of measurement, with no physical interaction between them. One programme treats this operationally: a state is steerable if this assemblage cannot be reproduced by Bob holding a fixed ensemble about which Alice has only classical information, a local-hidden-state (LHS) model~\cite{wiseman2007steering}, reviewed in \cite{cavalcanti2017quantum,uola2020quantum}. This notion plays no role here: we never ask whether $\rho_{AB}$ is steerable in this sense. Accordingly, whenever the term ``steering" is used below, it refers to this measurement-induced remote state preparation, or equivalently to the geometry of the set of states reachable at Bob by Alice's local measurements, and not to EPR steering as certified by the failure of an LHS model.
 
This work builds on a second, geometric programme: steering as a statement about which states at Bob's location are \emph{reachable} by Alice's measurements, regardless of whether the same set could also arise from an LHS strategy. For two qubits, the Bloch vectors one party can remotely prepare at the other's location, over all POVMs, trace out an ellipsoid~\cite{jevtic2014quantum}, the quantum steering ellipsoid (QSE), inside the second party's Bloch ball, its centre, orientation, and semi-axes fixed by $\rho_{AB}$'s local Bloch vectors and correlation matrix. With Alice measuring and Bob steered, this is the set traced by $\boldsymbol m\mapsto\boldsymbol r(\boldsymbol m)$ as Alice's direction ranges over the Bloch sphere; it is this reachable-set object that underlies the constructions of Sec.~\ref{sec:properties} and \ref{sec:twoqubitKD}.
 
The same reachable-set idea extends beyond a qubit Bob. Whenever Alice's system is a qubit, Corollary~\ref{cor:rankone} lets her POVM be restricted to rank-one elements without loss of generality, so the states Bob can be steered to are the image of a single Bloch sphere of measurement directions under a fixed map into Bob's state space. For two qubits this image is the ellipsoid above; in Sec.~\ref{sec:steered_wigner}, where Bob instead holds a CV mode, the same construction gives an explicit one-parameter family of coherent-state superpositions, Eq.~\eqref{eq:psi_pm}, again the image of Alice's Bloch sphere, now landing in an infinite-dimensional Hilbert space rather than a finite ellipsoid. The steering ellipsoid is thus the two-qubit case of one reachable-set construction that reappears throughout Secs.~\ref{sec:twoqubitKD}-\ref{sec:steered_wigner}.
 
A terminological point is worth making explicit. The nonclassicality reported below is not manufactured from nothing: it is present, operationally, in the correlations of $\rho_{AB}$ from the outset, and Alice's measurement never acts on Bob's system at all; it only selects, and communicates to Bob, which member of an ensemble decomposition of $\rho_B$ he in fact holds. That such a decomposition exists whenever it is asked to contain a chosen target pure state, and can always be remotely realized through a suitable POVM on Alice's side, is guaranteed in general by the HJW theorem~\cite{hughston1993complete}, invoked explicitly in the proof of Theorem~\ref{thm:optimal} below, and is structurally the same observation underlying assisted-resource constructions such as entanglement of assistance and coherence of assistance, a precedent already invoked for $\mathcal C_{\mathcal N}^{\mathrm a}(\rho_B)$ after Theorem~\ref{thm:optimal}. ``Activation," as used throughout, refers to this act of conditioning, turning an already-classical average into an accessible, individually nonclassical branch, not to any local process creating nonclassicality the shared state's correlations could not already support; as already flagged in Sec.~\ref{sec:intro}, the distinctive content of this work lies not in that general, HJW-guaranteed possibility but in the exact two-qubit optimization, the steering-geometrical characterization of when the effect vanishes, and the analysis of how local channels on Bob affect it, developed below.

\subsection{Nonclassicality via quasiprobability nonpositivity}\label{s2B}

We now review the two quasiprobability representations used throughout this work: the KD distribution, suited to finite-dimensional systems, and the Wigner function, suited to CV systems.

\subsubsection{KD quasiprobability distribution}\label{s2B1}

We consider a Hilbert space of dimension $d$, on which all relevant operators are defined. Let $\{\ket{a_i}\}$ and $\{\ket{b_k}\}$ be two orthonormal basis sets associated with the spectral resolutions of the observables
\(
A=\sum_i a_i\,\ket{a_i}\bra{a_i},\;
B=\sum_k b_k\,\ket{b_k}\bra{b_k}.
\)
Given a density operator $\rho$, the KD quasiprobability distribution corresponding to these two bases is defined as
\begin{align}
Q_{ik}(\rho)
\equiv \langle b_k|a_i\rangle\,\bra{a_i}\rho\ket{b_k}
= \mathrm{Tr}\!\left(\Pi^b_k \Pi^a_i \rho\right),
\label{Eq:KD}
\end{align}
with $\Pi^a_i=\ket{a_i}\bra{a_i}$ and $\Pi^b_k=\ket{b_k}\bra{b_k}$. 
The quantities $Q_{ik}(\rho)$ satisfy basic consistency relations,
\begin{align}
\sum_{i,k} Q_{ik} = 1, \quad \sum_k Q_{ik} = \mathrm{Tr}(\Pi^a_i \rho), \quad \sum_i Q_{ik} = \mathrm{Tr}(\Pi^b_k \rho),
\end{align}
ensuring that the marginals reproduce the Born rule probabilities for measurements in the $\{\ket{a_i}\}$ and $\{\ket{b_k}\}$ bases.
When the two bases coincide, $Q_{ik}(\rho)=\mathrm{Tr}(\Pi^a_i \rho)\,\delta_{ik}$, an ordinary probability distribution. For general bases and states, however, the KD elements need not be real or positive and therefore, in general, cannot be interpreted as an ordinary probability distribution; the total weight of negative and non-real elements is the resource of interest here, and we refer to it throughout as KD sum-negativity, following Eq.~\eqref{eq:N_measure} below, to keep in view that it jointly captures sign negativity and non-reality rather than sign alone.

The KD framework extends to a sequence of $l$ observables $A^{(r)}=\sum_i a^{(r)}_i\Pi^{a^{(r)}}_i$, $r=1,\ldots,l$, via the extended distribution
\begin{align}
Q^\star_{i_1,\ldots,i_l}(\rho)
=
\mathrm{Tr}\!\left(
\Pi^{a^{(l)}}_{i_l}\cdots
\Pi^{a^{(1)}}_{i_1}\,\rho
\right),
\label{KDExt}
\end{align}
which encodes complete information about $\rho$: provided the relevant overlaps are nonzero, the state is reconstructed as
\begin{align}
\rho
=
\sum_{i_1,\ldots,i_l}
\frac{
\ket{a^{(1)}_{i_1}}\bra{a^{(l)}_{i_l}}
}{
\langle a^{(l)}_{i_l}|a^{(1)}_{i_1}\rangle
}
\,Q^\star_{i_1,\ldots,i_l}(\rho).
\label{KDDecom}
\end{align}
KD quasiprobabilities have found applications throughout quantum information, foundations, and thermodynamics, where negativity or non-reality is used to witness contextuality, quantify measurement disturbance, and constrain thermodynamic and metrological protocols \cite{Dressel15,Yunger18,gonzalez2019out,Razieh19,arvidsson2020quantum,kunjwal2019anomalous,pusey2014anomalous,dressel2011experimental,lostaglio2020certifying}; see \cite{arvidsson2024properties} for a detailed review.

\subsubsection{Wigner function}\label{s2B2}

The Wigner function is the corresponding quasiprobability representation for CV systems, providing a complete phase-space description of a quantum state \cite{cahill1969density}. Writing $(x_k,p_k)$ for the position and momentum quadratures of the $k$-th mode, with $\vec x=(x_1,\ldots,x_N)$, $\vec p=(p_1,\ldots,p_N)$, the Wigner function of an $N$-mode density matrix $\hat\rho$ is
\begin{align}\label{eq:wigner_def}
W(\vec{x}, \vec{p}) = \frac{1}{{(2 \pi) }^N} \int\limits_{-\infty}^{\infty} \bra{ \vec{x} +\frac{\vec{y}}{2}}  \hat{\rho} \ket{\vec{x} - \frac{\vec{y}}{2} } e^{-i \vec{p} \cdot \vec{y}} \, d^N y,
\end{align}
which for a single mode reduces to \cite{wigner1932quantum}
\begin{align}
  W(x,p)=   \frac{1}{2 \pi} \int\limits_{ -\infty}^{\infty} \bra{x+\frac{y}{2}}\hat{\rho} \ket{x-\frac{y}{2}}e^{-ipy} \, dy,
\end{align}
and, for a pure state $\ket\psi$ with wavefunction $\psi(x)$, to
\begin{align}
W(x,p)=\frac{1}{2\pi} \int\limits_{-\infty}^{\infty}  \psi\!\left(x+\frac{y}{2}\right)\psi^{*}\!\left(x-\frac{y}{2}\right) e^{-ipy} \, dy.
\end{align}
The Wigner function is real-valued by construction, and normalized so that $\int\!\int W(x,p)\,dx\,dp = 1$. It resembles a probability distribution but can take negative values; its marginals, by contrast, are genuine probability distributions, realizable by homodyne detection,
\begin{align}
|\psi(x)|^2 = \int\limits_{-\infty}^{\infty} W(x,p)\,dp,\quad |\phi(p)|^2 = \int \limits_{-\infty}^{\infty} W(x,p)\,dx,
\end{align}
with $\phi(p)$ the momentum wavefunction.

CV states split into two classes: Gaussian states, whose Wigner function is Gaussian and hence positive, and non-Gaussian states. By Hudson's theorem \cite{hudson1974wigner}, a \emph{pure} state with positive Wigner function is always Gaussian, but this fails for mixed states, for which Wigner-positive non-Gaussian states exist. Wigner negativity is accordingly taken as the figure of merit for CV nonclassicality throughout this work.

\subsubsection{Quantifying nonclassicality}
Throughout, we make use of a structural property common to both notions of nonclassicality considered in this work.

\begin{lemma}
\label{lem:convexity}
The KD quasiprobability $Q(\rho)$ and the Wigner function $W(\rho)$ are linear functionals of $\rho$. Consequently, the associated nonclassicality measure
\begin{equation} \label{eq:N_measure}
\mathcal N(\rho) = \tfrac12\big(\|\,\text{quasiprobability of }\rho\,\|_1 - 1\big)
\end{equation}
is a convex function of $\rho$, and the classical set $\mathcal C = \{\rho : \mathcal N(\rho) = 0\}$ is convex.
\end{lemma}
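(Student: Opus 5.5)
The argument has three steps: linearity, then convexity of $\mathcal N$, then convexity of $\mathcal C$ as a consequence.

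\emph{Linearity.} This is read directly off the definitions. For the KD distribution, Eq.~\eqref{Eq:KD} writes each entry as $Q_{ik}(\rho)=\mathrm{Tr}(\Pi^b_k\Pi^a_i\rho)$. The trace against a fixed operator is linear in $\rho$, so $Q(p\rho+(1-p)\sigma)=pQ(\rho)+(1-p)Q(\sigma)$ entrywise. The same argument covers the extended distribution \eqref{KDExt}. For the Wigner function, Eq.~\eqref{eq:wigner_def} integrates the matrix element $\bra{\vec x+\vec y/2}\hat\rho\ket{\vec x-\vec y/2}$ against a fixed kernel, and both the matrix element and the integral are linear in $\hat\rho$. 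Hence $W(\rho)$ is pointwise linear in $\rho$.

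\emph{Convexity of $\mathcal N$.} Let $\rho=\sum_j p_j\rho_j$ with $p_j\ge 0$ and $\sum_j p_j=1$, and write $\mathcal Q(\cdot)$ for either quasiprobability. Here $\|\cdot\|_1$ is the $\ell^1$ norm: $\sum_{i,k}|Q_{ik}|$ in the KD case and $\iint|W|\,dx\,dp$ in the Wigner case.
\begin{itemize}
\item By linearity, $\mathcal Q(\rho)=\sum_j p_j\mathcal Q(\rho_j)$.
\item The triangle inequality (pointwise for absolute values, then summed or integrated) together with absolute homogeneity gives $\|\mathcal Q(\rho)\|_1\le\sum_j p_j\|\mathcal Q(\rho_j)\|_1$.
\item Since $\sum_j p_j=1$, the constant $-1$ satisfies $-1=\sum_j p_j(-1)$.
\end{itemize}
Combining these, $\mathcal N(\rho)\le\sum_j p_j\mathcal N(\rho_j)$. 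This extends to continuous mixtures by the integral form of the triangle inequality.

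\emph{Convexity of $\mathcal C$.} First I need $\mathcal N\ge 0$. Normalization, $\sum Q=1$ or $\iint W=1$, gives $\|\mathcal Q\|_1\ge|\sum\mathcal Q|=1$. So if $\rho_1,\rho_2\in\mathcal C$, convexity gives $0\le\mathcal N(p\rho_1+(1-p)\rho_2)\le p\cdot 0+(1-p)\cdot 0=0$, and the mixture lies in $\mathcal C$. Equivalently, $\mathcal C$ is a sublevel set $\{\mathcal N\le 0\}$ of a convex function, and such sets are convex.

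\emph{Main obstacle.} The only delicate point is well-definedness in the CV case: $W$ must be in $L^1$ for $\|W\|_1$ to be finite. I would either restrict to states with integrable Wigner functions or allow $\mathcal N=+\infty$, since the convexity inequality remains valid in the extended reals. In the KD case everything is finite-dimensional, so no such issue arises. I would also note explicitly that $\mathcal N=0$ is equivalent to every entry being real and nonnegative, because equality in $|\sum \mathcal Q|\le\sum|\mathcal Q|$ forces a common phase, and the sum equal to $1$ fixes that phase as positive. This identifies $\mathcal C$ with the classical (nonnegative-quasiprobability) states, as intended.
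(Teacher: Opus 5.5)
Your proof is correct and follows the same route as the paper. Linearity is read off the definitions, convexity of $\mathcal N$ follows from the triangle inequality applied to the $\ell^1$ norm (the paper phrases this as convexity of the modulus of a linear functional, summed or integrated), and convexity of $\mathcal C$ follows from $\mathcal N\ge 0$ together with convexity. Your explicit derivation of $\mathcal N\ge 0$ from normalization, and your $L^1$ caveat for Wigner functions, fill in points the paper's proof takes for granted.
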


\begin{proof}
Linearity of $Q$ and $W$ in $\rho$ is immediate from their definitions. The modulus of a linear functional of $\rho$ is a convex function of $\rho$, and a non-negative sum or integral of convex functions is convex; hence $\|\,\text{quasiprobability of }\rho\,\|_1$, and therefore $\mathcal N(\rho)$, is convex. For the second claim, if $\mathcal N(\rho_1) = \mathcal N(\rho_2) = 0$, then for any $q \in [0,1]$, convexity of $\mathcal N$ together with $\mathcal N \geq 0$ gives
\[
0 \leq \mathcal N(q\rho_1 + (1-q)\rho_2) \leq q\mathcal N(\rho_1) + (1-q)\mathcal N(\rho_2) = 0,
\]
so $q\rho_1 + (1-q)\rho_2 \in \mathcal C$.
\end{proof}

A point worth making explicit: linearity of $Q(\rho)$, and hence convexity of $\mathcal N_{KD}$, holds for a \emph{fixed} pair of KD bases; optimizing over the second basis afterward preserves convexity, since a pointwise maximum of convex functions is itself convex. Every application of Lemma~\ref{lem:convexity} in this work, including in the proof of Theorem~\ref{thm:incapable} below, holds the first (reference) KD basis fixed throughout the argument, at the value determined once by the state under consideration at the outset, before any subsequent mixing over conditional branches; the lemma is never applied with a reference basis that is simultaneously varied as a function of the state being mixed.

This lemma is the mechanism behind every result that follows: it is precisely because the classical set is convex that mixing classical conditional states can never produce a nonclassical average, while conversely a nonclassical average can only arise from an underlying mixture in which at least one conditional state was already nonclassical.

\section{Framework}
\label{sec:properties}

Let $\rho_{AB}$ be a bipartite state shared by Alice and Bob, with Bob's reduced state $\rho_B = \Tr_A(\rho_{AB})$ satisfying $\mathcal N(\rho_B) = 0$ 
for a chosen nonclassicality functional $\mathcal N$. To ensure $\mathcal N(\rho_B) = 0$, in the KD setting, the first reference basis is fixed by the eigenbasis of Bob's initial reduced
state $\rho_B$, while the second basis is optimized for each conditional state. In the CV setting, $\mathcal N$ is the Wigner negativity defined with respect to fixed phase-space quadratures. Alice performs a POVM $\{M_a\}$ on her subsystem, and conditioned on the outcome $a$ Bob's state is steered to
\[
\rho_{B|a} = \frac{\Tr_A[(M_a \otimes I)\rho_{AB}]}{p(a)}, \qquad p(a) = \Tr[(M_a \otimes I)\rho_{AB}].
\]
The conditional states satisfy $\rho_B = \sum_a p(a) \rho_{B|a}$ for every choice of POVM. This formula for $\rho_{B|a}$ depends only on Alice's POVM elements $\{M_a\}$, not on which physical instrument realizes them: since Alice's measurement acts on her subsystem alone and no interaction ever touches Bob's system, Bob's conditional state for a given outcome is fixed once its effect $M_a$ is fixed, independently of any further processing of Alice's own post-measurement state. Refining a POVM element, as in Proposition~\ref{prop:refinement} below, is accordingly a statement about splitting an effect $M_{a_0}$ into finer effects $M_{a_0,k}$, not about an arbitrary choice of instrument realizing a fixed set of effects.

Two distinct operational questions can be asked about how much nonclassicality Alice can remotely induce in Bob's system. The first concerns the most nonclassical single conditional state that Alice can prepare, optimized jointly over her measurement and over which outcome is used.

\begin{definition}\label{def:RN}
The maximal steered nonclassicality of $\rho_{AB}$ is
\begin{equation}
\mathcal R_{\mathcal N}(\rho_{AB}) = \max_{\{M_a\},\, a} \mathcal N(\rho_{B|a}),
\end{equation}
where the maximization runs over all POVMs $\{M_a\}$ on Alice's system and over all outcomes $a$ of the chosen POVM.
\end{definition}

The second question concerns the nonclassicality generated on average, per run of the protocol, once the outcome probabilities are taken into account.

\begin{definition}\label{def:CN_general}
The average steered nonclassicality of $\rho_{AB}$ is
\begin{equation}\label{def:CN}
\mathcal C_{\mathcal N}(\rho_{AB}) = \max_{\{M_a\}} \sum_a p(a)\, \mathcal N(\rho_{B|a}),
\end{equation}
where the maximization runs over all POVMs $\{M_a\}$ on Alice's system.
\end{definition}

$\mathcal R_{\mathcal N}$ answers the question of how nonclassical Bob's state can be made to look in the best single branch of the protocol, regardless of how rarely that branch occurs. $\mathcal C_{\mathcal N}$ answers the question of how much nonclassicality is generated per use of the shared state, and is the more operationally relevant quantity whenever the protocol is to be repeated and outcomes cannot be selectively postprocessed, that is, when every outcome $a$ must contribute to the reported average rather than being discarded after the fact. This is compatible with Bob adapting how he accesses the nonclassicality of $\rho_{B|a}$ to the outcome $a$ once it is communicated to him, for instance by evaluating $\mathcal N$ against a second basis chosen depending on $a$, as the constructions of Sec.~\ref{sec:twoqubitKD} do explicitly (Theorem~\ref{thm:CKD}): what "cannot be selectively postprocessed" rules out is discarding unfavorable outcomes from the average, not outcome-dependent choice of how the retained ones are evaluated. Both quantities vanish whenever $\rho_{AB}$ is a product state, since then $\rho_{B|a} = \rho_B$ for every $a$ and every measurement.

\subsection{Ordering and refinement}

\begin{proposition}
\label{prop:ordering}
For every bipartite state $\rho_{AB}$,
\[
0 \leq \mathcal C_{\mathcal N}(\rho_{AB}) \leq \mathcal R_{\mathcal N}(\rho_{AB}).
\]
\end{proposition}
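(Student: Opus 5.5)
The plan is to prove the two inequalities separately, both directly from Definitions~\ref{def:RN} and~\ref{def:CN_general}, using only that $\mathcal N\ge 0$ and that the outcome probabilities $p(a)$ form a probability distribution.

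\emph{Lower bound.} First I note that $\mathcal N(\rho)\ge 0$ for every $\rho$. The quasiprobability (KD array or Wigner function) sums or integrates to one, so its $\ell_1$ norm is at least $|\sum Q|=1$; by Eq.~\eqref{eq:N_measure} this gives $\mathcal N\ge 0$. For the KD case with an optimized second basis, $\mathcal N$ is a maximum of such nonnegative quantities and is therefore nonnegative as well. Hence for any POVM the sum $\sum_a p(a)\mathcal N(\rho_{B|a})$ has nonnegative terms, since $p(a)\ge 0$. Its maximum $\mathcal C_{\mathcal N}(\rho_{AB})$ is therefore $\ge 0$. Alternatively, the trivial POVM $\{I\}$ yields the value $\mathcal N(\rho_B)\ge 0$, which gives the same bound.

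\emph{Upper bound.} Fix any POVM $\{M_a\}$. The probabilities satisfy $p(a)\ge0$ and $\sum_a p(a)=\Tr\rho_{AB}=1$, because $\sum_a M_a=I$. So the weighted sum is a convex combination:
\begin{equation*}
\sum_a p(a)\,\mathcal N(\rho_{B|a}) \le \max_a \mathcal N(\rho_{B|a}) \le \mathcal R_{\mathcal N}(\rho_{AB}).
\end{equation*}
The second inequality holds because $(\{M_a\},a)$ is an admissible pair in Definition~\ref{def:RN}. Outcomes with $p(a)=0$ contribute nothing, so $\rho_{B|a}$ need not be defined for them. Taking the maximum over POVMs on the left gives $\mathcal C_{\mathcal N}\le\mathcal R_{\mathcal N}$.

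\emph{Main obstacle.} The only subtlety is whether the maxima in both definitions are attained. The plan is to read them as suprema throughout, in which case the argument goes through unchanged. For the KD evaluation with outcome-dependent second bases, the same pointwise bound applies: each branch value $\mathcal N(\rho_{B|a})$, whatever second basis is chosen for it, is at most its own optimum, and that optimum is at most $\mathcal R_{\mathcal N}$. The reference basis is fixed by $\rho_B$ in both definitions, so the two quantities are compared consistently.
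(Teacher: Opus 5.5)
Your proof is correct and follows the same route as the paper. Nonnegativity comes from $\mathcal N\ge0$, and for each fixed POVM the probability-weighted average is bounded by the largest branch value, hence by $\mathcal R_{\mathcal N}(\rho_{AB})$, before maximizing over POVMs. Your extra remarks (deriving $\mathcal N\ge0$ from the triangle inequality and normalization, discarding zero-probability outcomes, and reading the maxima as suprema) are harmless refinements that the paper leaves implicit.
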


\begin{proof}
Non-negativity of both quantities follows from $\mathcal N \geq 0$. For any fixed POVM $\{M_a\}$,
\begin{align}
\sum_a p(a)\, \mathcal N(\rho_{B|a}) &\leq \sum_a p(a) \max_{a'} \mathcal N(\rho_{B|a'}) \\
&= \max_{a'} \mathcal N(\rho_{B|a'}) \leq \mathcal R_{\mathcal N}(\rho_{AB}).
\end{align}
Since this bound holds for every POVM, it holds in particular for the POVM that maximizes the left-hand side, giving $\mathcal C_{\mathcal N}(\rho_{AB}) \leq \mathcal R_{\mathcal N}(\rho_{AB})$.
\end{proof}

A second structural fact concerns what happens under refinement of Alice's measurement, that is, splitting one POVM element into several. This property was implicit in the choice, made later in this work, to restrict the optimization over Alice's measurements to rank-one elements; we now justify that choice explicitly.

\begin{proposition}
\label{prop:refinement}
Let $\{M_a\}$ be a POVM, and suppose the element $M_{a_0}$ is refined into $K$ positive operators $M_{a_0,1},\ldots,M_{a_0,K}$ with $\sum_{k=1}^K M_{a_0,k} = M_{a_0}$, while all other elements are left unchanged. Then both $\mathcal R_{\mathcal N}$ and $\mathcal C_{\mathcal N}$, evaluated on the refined POVM in place of the original, do not decrease.
\end{proposition}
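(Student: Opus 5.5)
The plan is to reduce everything to one identity: the coarse-grained conditional state $\rho_{B|a_0}$ is a probability mixture of the refined conditional states $\rho_{B|a_0,k}$. Convexity of $\mathcal N$ (Lemma~\ref{lem:convexity}) then controls the average, while a simple maximum bound controls the best branch.

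First I establish the identity. Set $p(a_0,k)=\Tr[(M_{a_0,k}\otimes I)\rho_{AB}]$. Linearity of the partial trace and $\sum_k M_{a_0,k}=M_{a_0}$ give $\sum_k p(a_0,k)=p(a_0)$. If $p(a_0)>0$, they also give
\begin{equation}
\rho_{B|a_0}=\sum_{k}\frac{p(a_0,k)}{p(a_0)}\,\rho_{B|a_0,k}.
\end{equation}
Terms with $p(a_0,k)=0$ are simply omitted. If $p(a_0)=0$, every refined branch also has probability zero, so nothing changes. All other outcomes $a\neq a_0$ keep the same $p(a)$ and the same $\rho_{B|a}$.

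For $\mathcal C_{\mathcal N}$, I compare the objective $\sum_a p(a)\mathcal N(\rho_{B|a})$ between the original and the refined POVM. The two sums differ only in the $a_0$ term. Convexity of $\mathcal N$ applied to the identity above gives $p(a_0)\mathcal N(\rho_{B|a_0})\le\sum_k p(a_0,k)\mathcal N(\rho_{B|a_0,k})$. So the refined objective is at least the original one.

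For $\mathcal R_{\mathcal N}$, the same convexity inequality gives $\mathcal N(\rho_{B|a_0})\le\max_k\mathcal N(\rho_{B|a_0,k})$. The other branches are unchanged, so the best single branch of the refined POVM is at least as nonclassical as that of the original. Taking the maximum over POVMs then yields the monotonicity for both quantities.

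The main subtlety, and the step I would state most carefully, is the reference basis in the KD case. Convexity of $\mathcal N_{KD}$ holds only for a fixed first basis. That basis is the eigenbasis of $\rho_B$, which is determined before Alice measures and is shared by every conditional branch, so Lemma~\ref{lem:convexity} applies legitimately. Optimizing the second basis per branch preserves convexity, since a maximum of convex functions is convex. Concretely, Bob can evaluate every refined branch $\rho_{B|a_0,k}$ with the second basis that was optimal for $\rho_{B|a_0}$, which already reaches the coarse value. Re-optimizing the second basis for each branch can only increase the values further. In the CV case the phase-space quadratures are fixed, so the argument is immediate.
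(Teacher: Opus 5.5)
Your proposal is correct and takes essentially the same route as the paper: you write $\rho_{B|a_0}$ as the $p(a_0,k)/p(a_0)$-weighted mixture of the refined branches, apply convexity of $\mathcal N$ (multiplied by $p(a_0)$) for $\mathcal C_{\mathcal N}$, and bound by $\max_k$ for $\mathcal R_{\mathcal N}$. Your extra care about zero-probability outcomes, and about keeping the KD reference basis fixed at the eigenbasis of $\rho_B$ while re-optimizing the second basis per branch, is consistent with the paper's remark after Lemma~\ref{lem:convexity} and leaves the argument unchanged.
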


\begin{proof}
Write $p(a_0,k) = \Tr[(M_{a_0,k}\otimes I)\rho_{AB}]$ and $\rho_{B|a_0,k}$ for the corresponding conditional state, so that $p(a_0) = \sum_k p(a_0,k)$ and
\[
\rho_{B|a_0} = \sum_{k=1}^K \frac{p(a_0,k)}{p(a_0)}\, \rho_{B|a_0,k},
\]
a convex combination of the refined conditional states. By convexity of $\mathcal N$ (Lemma~\ref{lem:convexity}),
\[
\mathcal N(\rho_{B|a_0}) \leq \sum_k \frac{p(a_0,k)}{p(a_0)}\, \mathcal N(\rho_{B|a_0,k}) \leq \max_k \mathcal N(\rho_{B|a_0,k}).
\]
The first inequality, multiplied by $p(a_0)$, shows that the contribution of the outcome $a_0$ to the sum defining $\mathcal C_{\mathcal N}$ cannot decrease under the refinement, since the remaining outcomes are unaffected. The second inequality shows that the maximum defining $\mathcal R_{\mathcal N}$ over all outcomes, including the refined ones, cannot decrease either.
\end{proof}

\begin{corollary}
\label{cor:rankone}
The optimizations defining $\mathcal R_{\mathcal N}(\rho_{AB})$ and $\mathcal C_{\mathcal N}(\rho_{AB})$ can be restricted, without loss of generality, to POVMs whose elements are rank one. Indeed, any POVM element admits a spectral decomposition into positive multiples of rank-one projectors, and by Proposition~\ref{prop:refinement} this refinement can only increase both quantities.
\end{corollary}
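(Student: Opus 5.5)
The plan is to deduce Corollary~\ref{cor:rankone} directly from Proposition~\ref{prop:refinement}, applied once to every element of an arbitrary POVM. Fix any POVM $\{M_a\}$ on Alice's system, which we take to be finite-dimensional. Each element $M_a\geq 0$ has a spectral decomposition
\[
M_a=\sum_{k=1}^{K_a}\lambda_{a,k}\,\ket{e_{a,k}}\bra{e_{a,k}},
\]
where $\lambda_{a,k}>0$ and zero eigenvalues are discarded. Each term $M_{a,k}=\lambda_{a,k}\ket{e_{a,k}}\bra{e_{a,k}}$ is a positive rank-one operator, and $\sum_k M_{a,k}=M_a$. So the collection $\{M_{a,k}\}_{a,k}$ is again a POVM, and every element is rank one. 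If some $M_a=0$, that outcome has $p(a)=0$. It contributes nothing to the sum defining $\mathcal C_{\mathcal N}$, and its conditional state is undefined, so it can simply be dropped.

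Next I would refine one outcome at a time. Order the outcomes $a_1,a_2,\dots$ and replace $M_{a_1}$ by its rank-one pieces, keeping all other elements fixed. This is exactly the situation of Proposition~\ref{prop:refinement}, so neither the average $\sum_a p(a)\mathcal N(\rho_{B|a})$ nor the best-branch value $\max_a\mathcal N(\rho_{B|a})$ decreases. Then I repeat with $a_2$ on the new POVM, and so on. After finitely many steps we reach the rank-one POVM above. By induction, both figures of merit for it are at least their values for $\{M_a\}$.

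To finish, I would take suprema. For every POVM there is a rank-one POVM that is at least as good for the objective of $\mathcal C_{\mathcal N}$, and one at least as good for that of $\mathcal R_{\mathcal N}$. Hence the supremum over all POVMs equals the supremum over rank-one POVMs. Any maximizer can therefore be replaced by a rank-one one, which is what the corollary claims.

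The only point needing care is branches with $p(a_0,k)=0$ created by the refinement. Proposition~\ref{prop:refinement} tacitly assumes the refined conditional states are defined. I would handle this by dropping zero-probability refined outcomes. They do not enter the convex decomposition of $\rho_{B|a_0}$, since their weights vanish, so the inequalities in that proof still hold over the remaining $k$.

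Beyond that, the argument is bookkeeping. It does rely on convexity of $\mathcal N$ with the reference basis held fixed, as required by the remark after Lemma~\ref{lem:convexity}. In the KD setting that basis is the eigenbasis of $\rho_B$, which is independent of Alice's POVM, so this is satisfied.
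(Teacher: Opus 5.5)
Your proposal is correct and follows the paper's own argument: spectrally decompose each POVM element into positive multiples of rank-one projectors, then apply Proposition~\ref{prop:refinement} to conclude that neither the average nor the best-branch value can decrease. Your extra steps (excluding zero-probability refined outcomes from the objective, and keeping the KD reference basis fixed so that the convexity of Lemma~\ref{lem:convexity} applies) fill in bookkeeping the paper leaves implicit.
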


A rank-one element of a qubit POVM has the form $E_a={\lambda_a}(\mathbb I+\boldsymbol m_a\cdot\boldsymbol\sigma)/2$ for some direction $\boldsymbol m_a$ and weight $\lambda_a\in(0,1]$, and this corollary only licenses the restriction to such elements; it does not by itself single out the weight-one, two-outcome case $M_{\boldsymbol m}=(\mathbb I+\boldsymbol m\cdot\boldsymbol\sigma)/2$ used for a qubit Alice in Sec.~\ref{sec:twoqubitKD}. The two operational quantities need separate arguments for that further reduction, given in the proofs of Theorem~\ref{thm:sigma_max} and Theorem~\ref{thm:CKD} below.

\subsection{States incapable of remote activation}

We now identify a class of states for which neither quantity can ever be made positive, generalizing the original quantum-classical criterion to a form that applies uniformly to the discrete- and CV settings.

\begin{theorem}
\label{thm:incapable}
Suppose $\rho_{AB}$ admits a decomposition
\[
\rho_{AB} = \sum_i p_i\, \rho_i^A \otimes \sigma_i^B,
\]
where $\{p_i\}$ is a probability distribution, each $\rho_i^A$ is a state on Alice's system, and each $\sigma_i^B$ satisfies $\mathcal N(\sigma_i^B) = 0$. Then
\[
\mathcal R_{\mathcal N}(\rho_{AB}) = \mathcal C_{\mathcal N}(\rho_{AB}) = 0.
\]
\end{theorem}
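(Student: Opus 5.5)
The plan is to compute Bob's conditional states explicitly from the given decomposition, show that each is a convex mixture of the classical states $\sigma_i^B$, and then invoke convexity of the classical set $\mathcal C$ (Lemma~\ref{lem:convexity}).

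\emph{Conditional states.} Fix an arbitrary POVM $\{M_a\}$ on Alice and an outcome $a$ with $p(a)>0$. By linearity of the partial trace,
\[
\Tr_A[(M_a\otimes I)\rho_{AB}] = \sum_i p_i \Tr(M_a\rho_i^A)\,\sigma_i^B .
\]
Set $q_i(a)=p_i\Tr(M_a\rho_i^A)/p(a)$. These weights are nonnegative because $M_a\ge0$. Taking the trace of the display gives $p(a)=\sum_i p_i\Tr(M_a\rho_i^A)$, so $\sum_i q_i(a)=1$. Hence
\[
\rho_{B|a}=\sum_i q_i(a)\,\sigma_i^B
\]
is a genuine convex combination of states in $\mathcal C$.

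\emph{Applying convexity.} By Lemma~\ref{lem:convexity}, $\mathcal C$ is convex, so $\rho_{B|a}\in\mathcal C$ and $\mathcal N(\rho_{B|a})=0$. For a countable or continuous decomposition, use the convexity inequality $\mathcal N(\sum_i q_i\sigma_i)\le\sum_i q_i\mathcal N(\sigma_i)=0$, which extends to integrals. This inequality, together with $\mathcal N\ge0$, also gives the result directly. Since the POVM and outcome were arbitrary, the maximum in Definition~\ref{def:RN} is zero, so $\mathcal R_{\mathcal N}(\rho_{AB})=0$. Proposition~\ref{prop:ordering} then yields $0\le\mathcal C_{\mathcal N}\le\mathcal R_{\mathcal N}=0$. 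Outcomes with $p(a)=0$ contribute nothing to either quantity and can be discarded.

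\emph{Main obstacle: the KD reference basis.} The KD functional depends on a reference basis, and the conditional states $\rho_{B|a}$ need not share the eigenbasis of $\rho_B$. As stated after Lemma~\ref{lem:convexity}, I fix the first basis once, as the eigenbasis of $\rho_B$, and read the hypothesis $\mathcal N(\sigma_i^B)=0$ with respect to that same fixed basis. The second basis is optimized. Since a pointwise maximum of convex functions is convex, the optimized $\mathcal N$ is still convex with a fixed reference, so the argument above goes through unchanged. In the CV case the quadratures are fixed, so no such issue arises.
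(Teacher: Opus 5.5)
Your proposal is correct and follows the paper's proof essentially step for step. You write $\rho_{B|a}$ as the convex combination $\sum_i p_i\Tr(M_a\rho_i^A)\,\sigma_i^B/p(a)$, invoke convexity of the classical set (Lemma~\ref{lem:convexity}) to get $\mathcal R_{\mathcal N}=0$, and then use Proposition~\ref{prop:ordering} for $\mathcal C_{\mathcal N}$. Your treatment of the KD reference basis, with the first basis fixed once, the second basis optimized, and convexity preserved under the pointwise maximum, is exactly the convention the paper states after Lemma~\ref{lem:convexity}; the extra care about zero-probability outcomes and infinite decompositions is harmless.
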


\begin{proof}
For an arbitrary POVM $\{M_a\}$ on Alice's system, the conditional state on Bob's side is
\[
\rho_{B|a} = \sum_i \frac{p_i \Tr(M_a \rho_i^A)}{p(a)}\, \sigma_i^B,
\]
which is a convex combination of the operators $\sigma_i^B$, since the coefficients are non-negative and sum to one. By Lemma~\ref{lem:convexity} the classical set is convex, so $\rho_{B|a}$ remains in the classical set for every outcome $a$ and every choice of $\{M_a\}$, that is $\mathcal N(\rho_{B|a}) = 0$ identically. This immediately gives $\mathcal R_{\mathcal N}(\rho_{AB}) = 0$, and Proposition~\ref{prop:ordering} then gives $\mathcal C_{\mathcal N}(\rho_{AB}) = 0$ as well.
\end{proof}

Two special cases of Theorem~\ref{thm:incapable} are worth stating separately, since they correspond directly to the two settings treated later in this work.

\begin{corollary}[DV case]
\label{cor:dv}
Let $\rho_{AB} = \sum_i p_i\, \rho_i^A \otimes |a_i\rangle\langle a_i|_B$ be a quantum-classical state with respect to an orthonormal basis $\{|a_i\rangle\}$ of Bob's system, and let the KD nonclassicality be evaluated with $\{|a_i\rangle\}$ as the first reference basis. Then $\rho_{B|a}$ remains diagonal in $\{|a_i\rangle\}$ for every measurement outcome, so that $\sigma_i^B = |a_i\rangle\langle a_i|$ trivially satisfies $\mathcal N_{KD}(\sigma_i^B) = 0$, and $\mathcal R_{KD}(\rho_{AB}) = \mathcal C_{KD}(\rho_{AB}) = 0$.
\end{corollary}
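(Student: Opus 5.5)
The plan is to obtain Corollary~\ref{cor:dv} as a direct specialization of Theorem~\ref{thm:incapable}. Take $\sigma_i^B = |a_i\rangle\langle a_i|$. The hypothesis of that theorem then reduces to a single check: each $\sigma_i^B$ has vanishing KD sum-negativity when $\{|a_i\rangle\}$ is the first reference basis, for \emph{every} choice of second basis $\{|b_k\rangle\}$. I will first make explicit that Bob's reduced state $\rho_B=\sum_i p_i |a_i\rangle\langle a_i|$ is diagonal in $\{|a_i\rangle\}$. This basis is therefore an admissible eigenbasis choice, in line with the convention of Sec.~\ref{sec:properties} that the reference basis is fixed once by $\rho_B$. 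I will also fix that basis before any mixing over branches, as the remark following Lemma~\ref{lem:convexity} requires.

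Next, I will compute the KD distribution of $\sigma_j^B$ from Eq.~\eqref{Eq:KD}:
\[
Q_{ik}(|a_j\rangle\langle a_j|) = \langle b_k|a_i\rangle\langle a_i|a_j\rangle\langle a_j|b_k\rangle = \delta_{ij}\,|\langle a_j|b_k\rangle|^2 ,
\]
which is real and non-negative and sums to one. Hence $\|Q\|_1=1$ and $\mathcal N_{KD}(\sigma_j^B)=0$ for every second basis, so the maximum over the second basis also vanishes. The same computation shows directly that any operator diagonal in $\{|a_i\rangle\}$ with non-negative weights has $Q_{ik}=\lambda_i|\langle a_i|b_k\rangle|^2\ge 0$. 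This gives an independent confirmation of the claim in the corollary.

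Then I will write out the conditional state
\[
\rho_{B|a}=\sum_i \frac{p_i\,\mathrm{Tr}(M_a\rho_i^A)}{p(a)}\,|a_i\rangle\langle a_i|,
\]
which is diagonal in $\{|a_i\rangle\}$ for every POVM and every outcome. Its weights are non-negative because $M_a\ge0$, and they sum to one. By the computation above, or equivalently by convexity of the classical set (Lemma~\ref{lem:convexity}) applied with the fixed reference basis, $\mathcal N_{KD}(\rho_{B|a})=0$. Theorem~\ref{thm:incapable} then yields $\mathcal R_{KD}=0$, and Proposition~\ref{prop:ordering} yields $\mathcal C_{KD}=0$.

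The only delicate point is that the second KD basis may be chosen depending on the conditional state. Convexity of $\mathcal N_{KD}$ is stated for a fixed pair of bases, but the direct entrywise computation above handles this: non-negativity of $Q_{ik}$ holds for every second basis at once, so no interchange of a maximum and a convex combination is needed. A minor point concerns degenerate $\rho_B$, where the eigenbasis is not unique. There I will simply take $\{|a_i\rangle\}$ as the designated eigenbasis, which the statement of the corollary explicitly permits.
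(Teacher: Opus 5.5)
Your proposal is correct and follows the paper's route: specialize Theorem~\ref{thm:incapable} with $\sigma_i^B=|a_i\rangle\langle a_i|$, observe that every $\rho_{B|a}$ stays diagonal in the reference basis, and invoke Proposition~\ref{prop:ordering} for $\mathcal C_{KD}$. Your entrywise computation $Q_{ik}=\lambda_i|\langle a_i|b_k\rangle|^2\ge 0$ spells out the paper's ``trivially''. It also removes the second-basis optimization subtlety directly, whereas the paper handles that point through the pointwise-maximum-of-convex-functions remark after Lemma~\ref{lem:convexity}.
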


\begin{corollary}[CV case]
\label{cor:cv}
Let $\rho_{AB} = \sum_i p_i\, \rho_i^A \otimes \sigma_i^B$, where each $\sigma_i^B$ is individually Wigner-positive, for instance a coherent, squeezed, or thermal state. Then $\mathcal R_W(\rho_{AB}) = \mathcal C_W(\rho_{AB}) = 0$.
\end{corollary}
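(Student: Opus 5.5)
The statement is Corollary~\ref{cor:cv}. I would obtain it by specializing Theorem~\ref{thm:incapable} to the Wigner-negativity functional $\mathcal N = \mathcal N_W$, defined with respect to the fixed phase-space quadratures. All that must be checked is that this setting meets the theorem's hypotheses: $\mathcal N_W$ is a convex, nonnegative functional with a convex classical set, and each $\sigma_i^B$ lies in that set.

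First, I would confirm that Lemma~\ref{lem:convexity} applies verbatim. The Wigner function $W(\rho)$ is linear in $\rho$ and the quadratures are fixed once and for all. Hence $\mathcal N_W(\rho)=\tfrac12\left(\int\!\!\int |W_\rho(x,p)|\,dx\,dp - 1\right)$ is convex and nonnegative, using the normalization $\int\!\!\int W=1$. For the same reason, the set of Wigner-positive states is convex.

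Second, I would note that a state is Wigner-positive exactly when $W\ge 0$, and in that case $\int\!\!\int|W| = \int\!\!\int W = 1$, so $\mathcal N_W=0$. The listed examples are Gaussian states: coherent, squeezed and thermal states all have Gaussian, hence strictly positive, Wigner functions. So the hypothesis $\mathcal N_W(\sigma_i^B)=0$ holds.

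Third, I would invoke Theorem~\ref{thm:incapable}. For any POVM on Alice's side, $\rho_{B|a}$ is a convex mixture of the $\sigma_i^B$, so $\mathcal N_W(\rho_{B|a})=0$. This gives $\mathcal R_W=0$, and Proposition~\ref{prop:ordering} then gives $\mathcal C_W=0$.

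The only step requiring care is the case of a possibly infinite, or continuous, decomposition $\{p_i,\sigma_i^B\}$, as arises with mixtures of coherent states. There I would replace the sum by an integral against a probability measure. Pointwise nonnegativity of $W_{\rho_{B|a}}=\int W_{\sigma}\,d\mu_a(\sigma)$ then follows directly, exchanging the integrals by Fubini. This uses that the Wigner function of a trace-class operator is bounded in magnitude by $1/\pi$, which justifies the exchange.
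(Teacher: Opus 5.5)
Your proposal is correct and follows the paper's route: the corollary is simply Theorem~\ref{thm:incapable} specialized to $\mathcal N=\mathcal N_W$. Wigner-positive branch states have $\mathcal N_W=0$, every conditional state is a mixture of them, and Proposition~\ref{prop:ordering} then gives $\mathcal C_W=0$. Your extra treatment of continuous mixtures goes beyond the paper's discrete-sum statement and is sound, since $\rho\mapsto W_\rho(x,p)$ is a bounded linear functional on trace-class operators (with norm $1/\pi$) and therefore commutes with the mixture.
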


The two corollaries rest on different mechanisms. In the discrete-variable case, orthogonality of the branch states $\{|a_i\rangle\}$ automatically renders each branch classical with respect to a first reference basis chosen to coincide with that same set of states, so any orthogonal quantum-classical decomposition suffices. In the CV case there is no such automatic guarantee: an orthogonal basis of Bob's Hilbert space need not consist of individually Wigner-positive states. The Fock basis is the standard counterexample, since $|n\rangle\langle n|$ has a negative Wigner function for every $n \geq 1$; a state of the form $\rho_{AB} = \sum_n p_n \rho_n^A \otimes |n\rangle\langle n|_B$ is diagonal on Bob's side in an orthonormal basis, yet Bob's own reduced state already carries Wigner negativity as soon as any $p_{n\geq 1} > 0$, let alone the conditional states. Theorem~\ref{thm:incapable} therefore isolates the correct sufficient condition, individual classicality of the branch states $\sigma_i^B$, of which orthogonality is neither necessary nor sufficient.

\begin{remark} \label{remark:1}
Theorem~\ref{thm:incapable} gives a sufficient condition for both quantities to vanish. Whether the condition is also necessary, that is, whether $\mathcal R_{\mathcal N}(\rho_{AB}) = 0$ forces $\rho_{AB}$ into a decomposition of the stated form, is a more delicate question. It is not simply the statement that $\rho_{AB}$ is separable, since separability alone permits the branch states $\sigma_i^B$ to be individually nonclassical while still combining, for a specific measurement, into a classical conditional state; the requirement here is that no POVM element on Alice's side, however chosen, extracts a nonclassical branch. This is a strictly finer requirement than existence of a local-hidden-state model in the usual Einstein-Podolsky-Rosen sense, since the latter permits the hidden states themselves to be arbitrary quantum states rather than classical ones. For two-qubit systems the question can be resolved geometrically, since the full set of states reachable by Alice's measurements is characterized by the steering ellipsoid; we return to this point in Sec.~\ref{sec:twoqubitKD}.
\end{remark}

\subsection{Optimal shared states}

We next identify the states for which the two quantities attain their largest possible values. Both results rely on the HJW theorem~\cite{hughston1993complete}, which guarantees that for a purification of full Schmidt rank, every ensemble decomposition of the reduced state on one side can be remotely prepared by a suitable POVM on the other side.

\begin{theorem}
\label{thm:optimal}
Let $\mathcal H_B$ be finite dimensional, $d=\dim\mathcal H_B<\infty$, and let $|\Psi\rangle_{AB} = \sum_{i=1}^d \lambda_i |\phi_i\rangle_A |a_i\rangle_B$ be a pure state of full Schmidt rank, $\lambda_i > 0$ for every $i$ (so in particular $\dim\mathcal H_A\ge d$, as required for $d$ orthonormal Schmidt vectors $\{|\phi_i\rangle_A\}$ to exist on Alice's side). Then:
\begin{enumerate}
\item[(i)] Alice can remotely prepare any pure state supported on $\mathcal H_B$, and consequently
\[
\mathcal R_{\mathcal N}(\Psi) = \max_{|\psi\rangle \in \mathcal H_B} \mathcal N(|\psi\rangle\langle\psi|).
\]
\item[(ii)] Alice can remotely prepare any ensemble decomposition of $\rho_B = \Tr_A|\Psi\rangle\langle\Psi|$ into pure states of $\mathcal H_B$, and consequently
\[
\mathcal C_{\mathcal N}(\Psi) = \max_{\{q_j,\,|\psi_j\rangle\} :\ \sum_j q_j |\psi_j\rangle\langle\psi_j| = \rho_B} \ \sum_j q_j\, \mathcal N(|\psi_j\rangle\langle\psi_j|).
\]
\end{enumerate}
\end{theorem}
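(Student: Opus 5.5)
The plan is to prove part (ii) first by an explicit HJW-type construction, and then obtain part (i) as a special case combined with the upper bound $\mathcal R_{\mathcal N}\le\max_{\psi}\mathcal N(|\psi\rangle\langle\psi|)$.

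\textbf{Steering map.} Write $\rho_B=\sum_i\lambda_i^2|a_i\rangle\langle a_i|$, which has full rank $d$ because every $\lambda_i>0$. Define the antilinear map $J:\mathcal H_B\to\mathcal H_A$ by $J\big(\sum_i c_i|a_i\rangle\big)=\sum_i c_i^*|\phi_i\rangle$. Using the Schmidt form, a direct computation gives, for any positive operator $M$ on $\mathcal H_A$,
\[
\Tr_A[(M\otimes I)|\Psi\rangle\langle\Psi|]=\rho_B^{1/2}\,\widetilde M\,\rho_B^{1/2},
\]
where $\widetilde M=\sum_{ij}\langle\phi_j|M|\phi_i\rangle\,|a_i\rangle\langle a_j|$ is the transpose-like pullback of $M$ through the Schmidt bases. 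This pullback is a positive operator. Moreover, $\widetilde{P_A}=I_B$ when $P_A$ is the projector onto $\mathrm{span}\{|\phi_i\rangle\}$.

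\textbf{Realizing an ensemble.} Given a decomposition $\rho_B=\sum_j q_j|\psi_j\rangle\langle\psi_j|$, set
\[
\widetilde M_j=q_j\,\rho_B^{-1/2}|\psi_j\rangle\langle\psi_j|\rho_B^{-1/2},
\]
which is well defined because $\rho_B$ has full rank. These operators are positive and rank one, and they sum to $\rho_B^{-1/2}\rho_B\rho_B^{-1/2}=I_B$. Transport each one back to an operator $M_j$ supported on $\mathrm{span}\{|\phi_i\rangle\}$. If $\dim\mathcal H_A>d$, add one more element $I-P_A$; this element has zero probability because $\rho_A$ is supported on that span. The result is a valid POVM, and it gives $p(j)\rho_{B|j}=q_j|\psi_j\rangle\langle\psi_j|$, so Bob's conditional states are exactly the $|\psi_j\rangle$ with probabilities exactly $q_j$.

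\textbf{Conclusion of part (ii).} The construction shows that $\mathcal C_{\mathcal N}(\Psi)$ is at least the right-hand side. For the reverse inequality, take any POVM. Its conditional states form an ensemble decomposition of $\rho_B$, but they may be mixed. By Corollary~\ref{cor:rankone} we may restrict to rank-one $M_a$. For rank-one $M_a$ and pure $\Psi$, $\rho_{B|a}$ is pure, so the resulting ensemble is a pure-state decomposition of $\rho_B$ and is therefore admissible in the maximization. Part (ii) follows.

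\textbf{Part (i).} Given any pure $|\psi\rangle$, choose $q>0$ small enough that $\rho_B-q|\psi\rangle\langle\psi|\ge0$. This is possible because $\rho_B\ge\lambda_{\min}^2 I$, so any $q\le\lambda_{\min}^2$ works. Spectrally decompose the remainder into pure states. Applying the construction above to this ensemble makes $|\psi\rangle$ one of Bob's conditional branches, which gives $\mathcal R_{\mathcal N}(\Psi)\ge\max_{\psi}\mathcal N(|\psi\rangle\langle\psi|)$. For the upper bound, every conditional state is a convex combination of pure states. By convexity (Lemma~\ref{lem:convexity}), $\mathcal N(\rho_{B|a})$ is at most the maximum of $\mathcal N$ over pure states. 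Alternatively, the rank-one reduction via Proposition~\ref{prop:refinement} gives the same bound.

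\textbf{Main obstacle.} The main difficulty is the reference-basis subtlety in the KD case. $\mathcal N$ is evaluated with the first basis fixed by $\rho_B$'s eigenbasis, and with the second basis optimized per branch, which is a maximum of convex functions and hence still convex. I would state explicitly that this fixed functional is the one used on both sides of each equality. Checking the transpose and antilinearity bookkeeping in $\widetilde M$ is the only computational step, and it is routine.
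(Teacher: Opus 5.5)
Your proposal is correct and follows the same route as the paper: realize an arbitrary pure-state ensemble of $\rho_B$ by the HJW construction, and for (i) insert the target state with small positive weight. Your bound $q\le\lambda_{\min}^2$ is a uniform version of the paper's threshold $1/\langle\psi^\ast|\rho_B^{-1}|\psi^\ast\rangle$. The differences are in detail only: you write the HJW POVM out explicitly as $\widetilde M_j=q_j\,\rho_B^{-1/2}|\psi_j\rangle\langle\psi_j|\rho_B^{-1/2}$ rather than citing the theorem, and you justify the reverse inequalities, which the paper only asserts, via convexity of $\mathcal N$ for (i) and via rank-one refinement together with purity of rank-one branches of a pure $\Psi$ for (ii).
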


\begin{proof}
By the HJW theorem, for a purification of full Schmidt rank, every ensemble decomposition $\rho_B = \sum_j q_j |\psi_j\rangle\langle\psi_j|$ with $|\psi_j\rangle \in \mathcal H_B$ can be remotely prepared: there exists a POVM $\{M_j\}$ on Alice's system such that the outcome $j$ occurs with probability $q_j$ and steers Bob to $\rho_{B|j} = |\psi_j\rangle\langle\psi_j|$.

For part (i), since $\mathcal R_{\mathcal N}(\Psi)$ is obtained by optimizing over all POVMs and all outcomes, Alice may in particular choose a decomposition that contains the maximally nonclassical pure state $|\psi^\ast\rangle = \arg\max_{|\psi\rangle \in \mathcal H_B} \mathcal N(|\psi\rangle\langle\psi|)$ as one of its branches, occurring with any sufficiently small probability $p>0$. This uses finite-dimensionality essentially: because $d<\infty$ and $\lambda_i>0$ for every $i$, $\rho_B=\Tr_A|\Psi\rangle\langle\Psi|$ is a strictly positive $d\times d$ matrix, hence invertible with bounded inverse $\rho_B^{-1}$, and for any unit vector $|\psi^\ast\rangle\in\mathcal H_B$ the standard Loewner-order criterion $\rho_B-p\,|\psi^\ast\rangle\langle\psi^\ast|\succeq0 \iff p\le1/\langle\psi^\ast|\rho_B^{-1}|\psi^\ast\rangle$ gives a strictly positive threshold, since $\langle\psi^\ast|\rho_B^{-1}|\psi^\ast\rangle\le\|\rho_B^{-1}\|_{\mathrm{op}}<\infty$ is finite for every unit vector by boundedness of $\rho_B^{-1}$. So $\rho_B=p\,|\psi^\ast\rangle\langle\psi^\ast|+(1-p)\tau$ with $\tau=(\rho_B-p\,|\psi^\ast\rangle\langle\psi^\ast|)/(1-p)$ a valid density operator for all sufficiently small $p>0$, and this decomposition is completed by any further decomposition of $\tau$ into pure states of the residual weight. This achieves $\mathcal N(\rho_{B|a^\ast}) = \mathcal N(|\psi^\ast\rangle\langle\psi^\ast|)$ for the corresponding outcome $a^\ast$, and no POVM can do better since every conditional state $\rho_{B|a}$ is supported on $\mathcal H_B$, giving the stated equality. The corresponding outcome probability may be arbitrarily small, which is consistent with
the postselected definition of $\mathcal R_{\mathcal N}$.

For part (ii), since $\mathcal C_{\mathcal N}(\Psi)$ optimizes the probability-weighted average of $\mathcal N$ over all POVMs, and every decomposition of $\rho_B$ into pure states on $\mathcal H_B$ is achievable as the outcome ensemble of some POVM by the HJW theorem, the optimization over POVMs is equivalent to the optimization over pure-state decompositions of $\rho_B$ stated on the right-hand side.
\end{proof}

Part (ii) identifies $\mathcal C_{\mathcal N}(\Psi)$, for a full-Schmidt-rank purification, with the maximal pure-state ensemble extension of $\mathcal N$, analogous to a concave-roof or assisted-resource construction, evaluated on $\rho_B$. This is the nonclassicality analogue of entanglement of assistance and of assisted coherence, quantities defined in exactly the same way, as the largest average resource achievable over all ensemble decompositions of a fixed reduced state, in contrast to measures such as entanglement of formation which instead minimize over decompositions. 
We refer to $\mathcal C_{\mathcal N}^{\mathrm a}(\rho_B) \equiv \max_{\{q_j,\psi_j\}} \sum_j q_j \mathcal N(|\psi_j\rangle\langle\psi_j|)$ as the nonclassicality of assistance of $\rho_B$, using the superscript $\mathrm a$ to keep this single-system quantity visually distinct from the bipartite $\mathcal C_{\mathcal N}(\rho_{AB})$ of Definition~\ref{def:CN_general}, even though Theorem~\ref{thm:optimal}(ii) shows the two agree for a full-Schmidt-rank purification.

\begin{remark} \label{remark:2}
Theorem~\ref{thm:optimal} is stated and proved only for finite-dimensional $\mathcal H_B$; this is essential to the proof of part (i), which relies on $\rho_B^{-1}$ being a bounded operator, not merely a simplifying restriction. For a genuinely infinite-dimensional $\mathcal H_B$, a full-rank $\rho_B$ (trivial kernel, every eigenvalue $\lambda_i>0$) need not have a bounded inverse: since $\rho_B$ is trace class, its eigenvalues necessarily accumulate at $0$, so $\rho_B^{-1}$, defined by spectral calculus, is an unbounded operator whose domain is a dense but proper subspace of $\mathcal H_B$. For a target state $|\psi^\ast\rangle$ outside this domain, $\langle\psi^\ast|\rho_B^{-1}|\psi^\ast\rangle=\sum_i|\langle a_i|\psi^\ast\rangle|^2/\lambda_i=\infty$, and no $p>0$ satisfies $\rho_B-p\,|\psi^\ast\rangle\langle\psi^\ast|\succeq0$: such a $|\psi^\ast\rangle$ cannot enter any ensemble decomposition of $\rho_B$ with nonzero weight, however small. The proof of Theorem~\ref{thm:optimal}(i) therefore does not extend to a general infinite-dimensional $\mathcal H_B$ without a further, explicit assumption relating $|\psi^\ast\rangle$ to the spectrum of $\rho_B$, and we claim no such extension.

This is precisely why Sec.~\ref{sec:steered_wigner} never invokes Theorem~\ref{thm:optimal} on the full, infinite-dimensional oscillator Hilbert space: the marginal $\rho_B$ constructed there is exactly rank two by design, Eq.~\eqref{eq:bob_marginal_cat}, supported on the finite-dimensional subspace $\mathrm{span}\{\ket\alpha,\ket{-\alpha}\}$. Theorem~\ref{thm:optimal} is applied with $\mathcal H_B$ taken to be this two-dimensional subspace itself, so its finite-dimensional hypothesis is satisfied exactly, with no extension required, the Hilbert-space restriction already flagged where the theorem is used, Sec.~\ref{sec:steered_wigner}A. Within that restricted setting, $\max_{|\psi\rangle\in\mathrm{span}\{\ket\alpha,\ket{-\alpha}\}}\mathcal N_W(|\psi\rangle\langle\psi|)$ is finite and attained, since the maximization is again over a compact set, exactly as in the DV case. The unrestricted supremum $\max_{|\psi\rangle\in\mathcal H_B}\mathcal N_W(|\psi\rangle\langle\psi|)$ over the entire oscillator Hilbert space, by contrast, is unbounded: Wigner negativity can be made arbitrarily large by states with sufficiently fine phase-space structure, the negativity volume of Fock states $|n\rangle$ growing without bound as $n\to\infty$~\cite{kenfack2004negativity}. This unrestricted quantity is never computed or claimed here; only the finite-dimensional, rank-restricted construction of Sec.~\ref{sec:steered_wigner} is used.
\end{remark}

\subsection{Invariance under local unitary operations}

Finally, we examine the behaviour of both quantities under local unitary transformations $\rho_{AB} \mapsto (U_A \otimes U_B)\rho_{AB}(U_A^\dagger \otimes U_B^\dagger)$.

\begin{theorem}
\label{thm:lu}
For any local unitary $U_A$ on Alice's system,
\begin{align}
\mathcal R_{\mathcal N}\big((U_A\otimes I)\rho_{AB}(U_A^\dagger\otimes I)\big) = \mathcal R_{\mathcal N}(\rho_{AB}), \\
\mathcal C_{\mathcal N}\big((U_A\otimes I)\rho_{AB}(U_A^\dagger\otimes I)\big) = \mathcal C_{\mathcal N}(\rho_{AB}),
\end{align}
for both quantities and either choice of $\mathcal N$. For a local unitary $U_B$ on Bob's system, the same equalities hold in the KD case under the convention used throughout this work, in which the first KD reference basis is defined by the eigenbasis of Bob's reduced state and therefore co-rotates under
$U_B$, and in the Wigner case whenever $U_B$ is a Gaussian unitary.
\end{theorem}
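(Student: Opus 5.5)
The plan is to treat Alice's and Bob's unitaries separately. In each case we exhibit a bijection between POVMs, or between conditional states, that preserves every outcome probability and every value of $\mathcal N$. Equality of the two optimized quantities $\mathcal R_{\mathcal N}$ and $\mathcal C_{\mathcal N}$ then follows, since both are maxima over the same index set of the same numbers.

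\emph{Alice's unitary.} Write $\rho'_{AB}=(U_A\otimes I)\rho_{AB}(U_A^\dagger\otimes I)$. For any POVM $\{M_a\}$, cyclicity of the partial trace in Alice's factor gives
\[
\Tr_A[(M_a\otimes I)\rho'_{AB}]=\Tr_A[(U_A^\dagger M_aU_A\otimes I)\rho_{AB}].
\]
Thus the POVM $\{M_a\}$ on $\rho'_{AB}$ produces exactly the same probabilities $p(a)$ and the same conditional states $\rho_{B|a}$ as the POVM $\{U_A^\dagger M_aU_A\}$ on $\rho_{AB}$. The map $M_a\mapsto U_A^\dagger M_aU_A$ is a bijection of the set of POVMs, since it preserves positivity and completeness. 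Bob's marginal is unchanged, so the KD reference basis is unchanged too. The sets $\{p(a),\rho_{B|a}\}$ over which Definitions~\ref{def:RN} and \ref{def:CN_general} maximize therefore coincide, and both equalities follow for either choice of $\mathcal N$.

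\emph{Bob's unitary.} Here the same partial-trace argument gives $\rho'_{B|a}=U_B\rho_{B|a}U_B^\dagger$ with unchanged $p(a)$ for every POVM. The claim therefore reduces to the invariance $\mathcal N(U_B\sigma U_B^\dagger)=\mathcal N(\sigma)$, and this is the one step that needs care.

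\emph{KD case.} The new marginal is $U_B\rho_BU_B^\dagger$, so under the stated convention its reference eigenbasis is $\{U_B\ket{a_i}\}$. The second basis is optimized over all orthonormal bases, and $\{\ket{b_k}\}\mapsto\{U_B\ket{b_k}\}$ is a bijection of that set. Each KD entry is preserved:
\[
\Tr(U_B\Pi^b_kU_B^\dagger\,U_B\Pi^a_iU_B^\dagger\,U_B\sigma U_B^\dagger)=\Tr(\Pi^b_k\Pi^a_i\sigma).
\]
So the $\ell_1$ norm, and hence the optimized $\mathcal N_{KD}$, is invariant. One subtlety should be flagged. If $\rho_B$ is degenerate, its eigenbasis is not unique, and the convention must fix a choice that co-rotates. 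I would state that the bases are chosen so that $U_B$ maps the old choice to the new one, which the convention permits.

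\emph{Wigner case.} Gaussian unitaries act on phase space as $W_{U\sigma U^\dagger}(\boldsymbol\xi)=W_\sigma(S^{-1}(\boldsymbol\xi-\boldsymbol d))$ for a symplectic $S$ and a displacement $\boldsymbol d$. Since $\det S=1$, the change of variables leaves $\int|W|$ unchanged, and so $\mathcal N_W$ is invariant.

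The main obstacle is only the bookkeeping that the reference-basis convention genuinely co-rotates, together with the degenerate case noted above. I would also note why the restriction to Gaussian $U_B$ is needed: a non-Gaussian unitary can create Wigner negativity, so invariance fails in general.
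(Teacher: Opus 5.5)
Your proposal is correct and takes essentially the same route as the paper. You relabel Alice's POVM by conjugation with $U_A$, reduce the $U_B$ case to invariance of $\mathcal N$ under $\sigma\mapsto U_B\sigma U_B^\dagger$ with unchanged outcome probabilities, and then use co-rotation of both KD bases in the KD case and measure preservation of the affine symplectic phase-space map for Gaussian $U_B$. Your extra remarks, that $U_A$ leaves $\rho_B$ and hence the KD reference basis untouched, and that a degenerate $\rho_B$ needs a co-rotating choice of eigenbasis, are correct refinements the paper leaves implicit (under the fixed-auxiliary-direction convention used for $\boldsymbol b=0$, invariance under $U_B$ can genuinely fail).
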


\begin{proof}
Invariance under $U_A$ is immediate: relabelling Alice's POVM as $M_a' = U_A M_a U_A^\dagger$ gives the same set of conditional states $\rho_{B|a}$ and the same outcome probabilities as the original POVM $\{M_a\}$ applied to $\rho_{AB}$, so the sets of values over which both optimizations run are unchanged.

For $U_B$, a POVM $\{M_a\}$ applied to $(U_A\otimes U_B)\rho_{AB}(U_A^\dagger\otimes U_B^\dagger)$ steers Bob to $U_B \rho_{B|a} U_B^\dagger$ with the same outcome probability as $\{M_a\}$ applied to $\rho_{AB}$ itself. The two quantities are therefore invariant under $U_B$ if and only if $\mathcal N(U_B \sigma U_B^\dagger) = \mathcal N(\sigma)$ for every state $\sigma$ on Bob's system.

In the KD case considered throughout this work, the first reference basis is fixed by the eigenbasis of Bob's initial reduced state $\rho_B$, while the second basis is optimized freely. Concretely, for the general state $\sigma$ of the criterion in the previous paragraph (Bob's reduced state $\rho_B$ itself, when evaluating $\mathcal N_{KD}$ at the marginal level, or a steered conditional state $\rho_{B|a}$, when evaluating it branch by branch), the first basis is by definition the eigenbasis of $\rho_B$ specifically, not of $\sigma$; under $\sigma \mapsto U_B \sigma U_B^\dagger$ this reference eigenbasis rotates to the eigenbasis of $U_B\rho_B U_B^\dagger$, i.e. is simply rotated by $U_B$, exactly tracking the state it is attached to. The optimal second basis, being unconstrained, rotates correspondingly, so the value of the optimized functional is left unchanged for arbitrary $U_B$.

In the Wigner case, $\mathcal N_W$ is defined with respect to a fixed pair of quadratures $(x,p)$ that does not co-rotate with $U_B$. If $U_B$ is a Gaussian unitary, generated by a Hamiltonian at most quadratic in the mode operators, its action on phase space is an affine symplectic transformation, which is measure preserving and maps the Wigner function of $\sigma$ to the pulled-back function evaluated at the transformed phase-space point; the integral $\int |W_\sigma| \, dx\,dp$ is therefore unchanged. This argument does not extend to a general, non-Gaussian $U_B$: such a unitary can map a Wigner-positive state to a Wigner-negative one, as is the case for any unitary that prepares a Schr\"odinger-cat-like superposition from a coherent state, so $\mathcal N_W(U_B\sigma U_B^\dagger) \neq \mathcal N_W(\sigma)$ in general.
\end{proof}

Theorem~\ref{thm:lu} shows that both quantities are intrinsic to the correlations in $\rho_{AB}$, independent of Alice's local reference frame in either setting, and independent of Bob's local reference frame whenever $\mathcal N$ is itself defined relative to that frame in a co-rotating way, as is the case for the KD construction used here. In the Wigner case, by contrast, the choice of quadratures on Bob's side is physical rather than a free labelling convention, and only the Gaussian subgroup of local operations on Bob leaves the figure of merit unchanged; this restriction should be kept in mind whenever the CV results of Sec.~\ref{sec:steered_wigner} are compared across different local Gaussian frames.

\section{Remote activation of KD nonclassicality for two-qubit states}
\label{sec:twoqubitKD}

We now specialize the general framework to two-qubit systems and consider KD nonclassicality as the resource generated on Bob's side. The goal of this section is to obtain closed expressions for both the maximal and the average KD nonclassicality that Alice can remotely induce in Bob's conditional qubit state.

\subsection{General setup}

Any two-qubit state can be written as
\begin{align}
\rho_{AB} = \frac14\Big[\mathbb I\otimes\mathbb I + \boldsymbol a\cdot\boldsymbol\sigma\otimes\mathbb I + \mathbb I\otimes\boldsymbol b\cdot\boldsymbol\sigma + \sum_{i,j=1}^3 T_{ij}\,\sigma_i\otimes\sigma_j\Big],
\label{eq:bloch_twoqubit}
\end{align}
where $\boldsymbol a$ and $\boldsymbol b$ are the local Bloch vectors of Alice and Bob and $T$ is the correlation matrix. Bob's reduced state is $\rho_B = \tfrac12(\mathbb I + \boldsymbol b\cdot\boldsymbol\sigma)$. We assume $\boldsymbol b \neq 0$, so that the eigenbasis of $\rho_B$ is uniquely defined by the direction $\boldsymbol n_B = \boldsymbol b/|\boldsymbol b|$; the degenerate case $\boldsymbol b = 0$, for which $\rho_B$ is maximally mixed and its eigenbasis is not unique, is discussed separately at the end of this section. Because $\boldsymbol n_B$ is tied to Bob's own marginal in this way, the KD nonclassicality studied below is a \emph{marginal-adapted} quantity: it is evaluated relative to a reference basis that is itself a function of the state under consideration, and that consequently moves when the marginal does, for instance under the local channels studied in Sec.~\ref{sec:local_ops}. We flag this explicitly here because it is what makes the comparisons of Sec.~\ref{sec:local_ops} well posed: what is compared there is this marginal-adapted quantity evaluated before and after a channel, each time relative to the reference direction appropriate to that state, not a single fixed-basis functional evaluated on two different states.

By Corollary~\ref{cor:rankone}, both optimizations may be restricted without loss of generality to rank-one POVM elements $E_{\boldsymbol m}=\lambda\,(\mathbb I+\boldsymbol m\cdot\boldsymbol\sigma)/2$, $|\boldsymbol m|=1$, $\lambda\in(0,1]$. Conditioned on such an outcome, Bob is steered to $\rho_{B|\boldsymbol m} = \tfrac12(\mathbb I+\boldsymbol r(\boldsymbol m)\cdot\boldsymbol\sigma)$ with
\begin{align}
\boldsymbol r(\boldsymbol m) = \frac{\boldsymbol b + T^T\boldsymbol m}{1+\boldsymbol a\cdot\boldsymbol m}, \qquad p(\boldsymbol m) = \frac{1+\boldsymbol a\cdot\boldsymbol m}{2},
\label{eq:steered_bloch}
\end{align}
defined for every outcome of nonzero probability, $p(\boldsymbol m)>0$ (the denominator is shown below, in the proof of Corollary~\ref{cor:vanishing_iff}, to vanish for at most a single point $\boldsymbol m\in S^2$),
independently of $\lambda$, so every direction $\boldsymbol m\in S^2$ is reachable this way whatever weight and completing outcomes are chosen; as $\boldsymbol m$ ranges over the Bloch sphere, $\boldsymbol r(\boldsymbol m)$ traces out Bob's steering ellipsoid $\mathcal E$. This alone reduces $\mathcal R_{KD}$ to a single-branch maximization over $\mathcal E$, since Definition~\ref{def:RN} never involves an outcome probability; the further reduction to the weight-one dichotomic pair $\{M_{\boldsymbol m},M_{-\boldsymbol m}\}$, $M_{\boldsymbol m}=\tfrac12(\mathbb I+\boldsymbol m\cdot\boldsymbol\sigma)$, used for both quantities below, is justified separately in the proofs of Theorem~\ref{thm:RKD} and Theorem~\ref{thm:CKD}, exactly as already noted after Corollary~\ref{cor:rankone}.

The reference KD basis is fixed at $\boldsymbol n_B$, the eigenbasis of $\rho_B$. For a steered Bloch vector $\boldsymbol r$, write
\begin{align}
r_\parallel = \boldsymbol r\cdot\boldsymbol n_B, \qquad r_\perp = |\boldsymbol r\times\boldsymbol n_B|,
\end{align}
so that in the reference basis,
\begin{align}
\rho(\boldsymbol r) = \frac12\begin{pmatrix} 1+r_\parallel & r_\perp e^{-i\varphi} \\ r_\perp e^{i\varphi} & 1-r_\parallel \end{pmatrix},
\end{align}
with $\varphi$ the azimuthal phase of $\boldsymbol r$ about $\boldsymbol n_B$.

\subsection{KD distribution of the steered qubit state}

The second KD basis is optimized over all orthonormal qubit bases. A notational point worth fixing here: for a fixed pair of bases, $\mathcal N_{KD}$ as defined by Eq.~\eqref{eq:N_measure} is a function of $\rho$ and of both bases; from this point on, however, $\mathcal N_{KD}(\rho)$ denotes that quantity already maximized over the second basis at fixed reference basis $\boldsymbol n_B$, i.e.\ $\mathcal N_{KD}(\rho)\equiv\max_{\{b_k\}}\mathcal N_{KD}(\rho;\boldsymbol n_B,\{b_k\})$, consistently with its use in Proposition~\ref{prop:no_anomaly_before} and throughout. The next lemma gives $|Q_{ik}|^2$ explicitly in terms of that basis's polar and azimuthal angles and the steered Bloch vector's components.
\begin{lemma}
\label{lem:qik}
Let $\boldsymbol r$ be the steered Bloch vector, $r_\parallel=\boldsymbol r\cdot\boldsymbol n_B$, $r_\perp=|\boldsymbol r\times\boldsymbol n_B|$, and $\varphi$ the azimuthal phase of $\boldsymbol r$ about $\boldsymbol n_B$. Work in the frame with $\boldsymbol n_B=\hat z$ and the transverse part of $\boldsymbol r$ along $\hat x$, and let the second measurement basis be the eigenbasis of $\hat u(\theta,\phi)\cdot\boldsymbol\sigma$, with $\hat u(\theta,\phi)=(\sin\theta\cos\phi,\sin\theta\sin\phi,\cos\theta)$ an arbitrary Bloch direction. Write $\psi\equiv\phi-\varphi$ for the azimuth of this second basis relative to the coherence phase of $\boldsymbol r$, $\Pi_i^{n_B}=\frac12(\mathbb I+(-1)^i\boldsymbol n_B\cdot\boldsymbol\sigma)$, $\Pi_k^{\hat u}=\frac12(\mathbb I+(-1)^k\hat u\cdot\boldsymbol\sigma)$, and $Q_{ik}=\Tr[\Pi_k^{\hat u}\Pi_i^{n_B}\rho]$ for $\rho=\frac12(\mathbb I+\boldsymbol r\cdot\boldsymbol\sigma)$. Then
\begin{align}
16\,|Q_{ik}|^2 = P_{ik}^2 + a^2 + 2\epsilon_{ik}P_{ik}\,a\cos\psi,
\end{align}
where $a = r_\perp\sin\theta$, and
\begin{align}
\begin{aligned}
P_{00} &= (1+\cos\theta)(1+r_\parallel), & \epsilon_{00} &= +1,\\
P_{11} &= (1+\cos\theta)(1-r_\parallel), & \epsilon_{11} &= -1,\\
P_{01} &= (1-\cos\theta)(1+r_\parallel), & \epsilon_{01} &= -1,\\
P_{10} &= (1-\cos\theta)(1-r_\parallel), & \epsilon_{10} &= +1.
\end{aligned}
\end{align}
\end{lemma}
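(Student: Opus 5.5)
The plan is a direct computation of $Q_{ik}=\langle u_k|z_i\rangle\langle z_i|\rho|u_k\rangle$ in explicit coordinates. Each $Q_{ik}$ should come out as one quarter of a real number plus a single complex phase term; $16|Q_{ik}|^2$ then follows by expanding $|x+ye^{\pm i\psi}|^2=x^2+y^2+2xy\cos\psi$.

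\emph{Coordinates.} By Theorem~\ref{thm:lu}, or simply by choosing Bob's Bloch frame, I place $\boldsymbol n_B=\hat z$. The reference projectors are then $\Pi^{n_B}_0=|0\rangle\langle0|$ and $\Pi^{n_B}_1=|1\rangle\langle1|$. The state is the matrix $\rho(\boldsymbol r)$ displayed above, with $\rho_{00}=(1+r_\parallel)/2$, $\rho_{11}=(1-r_\parallel)/2$ and $\rho_{01}=\tfrac12 r_\perp e^{-i\varphi}$. I keep $\varphi$ general, which is equivalent to the stated frame rotated about $\hat z$; only the difference $\psi=\phi-\varphi$ can survive. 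For the second basis I take the standard eigenvectors of $\hat u\cdot\boldsymbol\sigma$:
\begin{align}
|u_0\rangle&=\cos\tfrac\theta2|0\rangle+e^{i\phi}\sin\tfrac\theta2|1\rangle,\\
|u_1\rangle&=\sin\tfrac\theta2|0\rangle-e^{i\phi}\cos\tfrac\theta2|1\rangle.
\end{align}
These satisfy $|u_k\rangle\langle u_k|=\Pi^{\hat u}_k$. Since $Q_{ik}$ contains $|u_k\rangle$ and $\langle u_k|$ once each, it does not depend on the global phases of these vectors.

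\emph{The four entries.} For $(i,k)=(0,0)$ I compute
\[
Q_{00}=\cos\tfrac\theta2\big(\rho_{00}\cos\tfrac\theta2+\rho_{01}e^{i\phi}\sin\tfrac\theta2\big)
=\tfrac14\big[(1+\cos\theta)(1+r_\parallel)+r_\perp\sin\theta\,e^{i\psi}\big],
\]
using $2\cos^2\tfrac\theta2=1+\cos\theta$ and $2\sin\tfrac\theta2\cos\tfrac\theta2=\sin\theta$. This gives $P_{00}$, $a=r_\perp\sin\theta$ and $\epsilon_{00}=+1$.

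The other three entries are handled the same way:
\begin{itemize}
\item $(0,1)$ replaces $\cos\tfrac\theta2$ by $\sin\tfrac\theta2$ and flips the sign of the $e^{i\phi}$ component. This yields $\tfrac14[(1-\cos\theta)(1+r_\parallel)-a e^{i\psi}]$.
\item $(1,0)$ uses $\rho_{10}=\tfrac12 r_\perp e^{i\varphi}$ and $\rho_{11}$. This yields $\tfrac14[(1-\cos\theta)(1-r_\parallel)+a e^{-i\psi}]$.
\item $(1,1)$ yields $\tfrac14[(1+\cos\theta)(1-r_\parallel)-a e^{-i\psi}]$.
\end{itemize}
Taking moduli squared then gives exactly the tabulated $P_{ik}$ and $\epsilon_{ik}$. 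The sign of the exponent does not matter, because only $\cos\psi$ appears.

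\emph{Main obstacle.} The only real difficulty is bookkeeping of conventions, so that the signs $\epsilon_{ik}$ match the table.
\begin{itemize}
\item The index $i=0$ must correspond to the $+\boldsymbol n_B$ eigenvector, through $(-1)^i$.
\item The azimuth convention must make $\langle 0|\rho|1\rangle\propto e^{-i\varphi}$, consistent with the displayed $\rho(\boldsymbol r)$.
\item The relative sign inside $|u_1\rangle$ must make it orthogonal to $|u_0\rangle$.
\end{itemize}
Before finalizing, I will check two sanity conditions. First, the row and column sums of $Q_{ik}$ reproduce the Born marginals $(1\pm r_\parallel)/2$ and $(1\pm\hat u\cdot\boldsymbol r)/2$. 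Second, at $\theta=0$ the distribution becomes diagonal, matching the coinciding-bases remark after Eq.~\eqref{Eq:KD}. Together these fix any sign slip.
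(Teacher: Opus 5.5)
Your computation is correct and follows the paper's approach: evaluate each $Q_{ik}$ directly and take the modulus squared. The paper expands via the Pauli product identity $(\hat u\cdot\boldsymbol\sigma)(\boldsymbol n_B\cdot\boldsymbol\sigma)=\hat u\cdot\boldsymbol n_B\,\mathbb I+i(\hat u\times\boldsymbol n_B)\cdot\boldsymbol\sigma$, whereas you use explicit eigenvectors, which displays each entry directly in the factored form $\tfrac14\big[P_{ik}+\epsilon_{ik}\,a\,e^{\pm i\psi}\big]$. Your imaginary parts, $+\tfrac14(-1)^{i+k}a\sin\psi$, have the sign consistent with the stated conventions, whereas the paper's displayed intermediate has the opposite sign; this makes no difference to the lemma, since only $|Q_{ik}|^2$ enters.
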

\begin{proof}
Using $Q_{ik} = \Tr[\Pi_k^{\hat u}\Pi_i^{n_B}\rho]$ together with the Pauli identity $(\hat u\cdot\boldsymbol\sigma)(\boldsymbol n_B\cdot\boldsymbol\sigma) = \hat u\cdot\boldsymbol n_B\,\mathbb I + i(\hat u\times\boldsymbol n_B)\cdot\boldsymbol\sigma$ gives, after expanding and taking the trace against $\rho = \tfrac12(\mathbb I+\boldsymbol r\cdot\boldsymbol\sigma)$,
\begin{align*}
Q_{ik} &= \frac14\Big[1+(-1)^{i+k}\cos\theta+(-1)^i r_\parallel \\
&\quad+(-1)^k\big(r_\perp\sin\theta\cos\psi+r_\parallel\cos\theta\big) \\
& \quad- i(-1)^{i+k}r_\perp\sin\theta\sin\psi\Big].
\end{align*}
Squaring the real and imaginary parts and simplifying, using $\cos^2\psi+\sin^2\psi=1$, gives $16|Q_{ik}|^2 = P_{ik}^2+a^2+2\epsilon_{ik}P_{ik}a\cos\psi$ with $P_{ik},\epsilon_{ik}$ as stated.
\end{proof}

\subsection{Optimal second basis}
 
We now identify, in closed form and for every physical steered state, the
second measurement basis that maximizes
$\Sigma(\theta,\psi) \equiv \sum_{i,k=0}^1 |Q_{ik}|$, and with it the KD
sum-negativity $\mathcal N_{KD}$, over all choices of that basis. As before
we write $c = \cos\theta$, $s = \sin\theta = \sqrt{1-c^2} \ge 0$ for
$\theta \in [0,\pi]$, and $x = \cos\psi$. By Lemma~\ref{lem:qik}, $\Sigma$
depends on $\psi$ only through $x$, so we treat $\Sigma$ as a function of
the two independent real variables $\theta$ and $x$. For brevity we also
write
\begin{equation}
  A = 1+r_\parallel, \qquad
  B = 1-r_\parallel, \qquad
  q = r_\perp ,
\end{equation} 
so that $A,B\ge0$ and $A+B=2$ for every physical state (since $r_\parallel^2+r_\perp^2\le1$), with $A=0$ or $B=0$ occurring only at the poles $r_\parallel=\mp1$, treated separately in the proof of Theorem~\ref{thm:sigma_max} below.
With this notation, and writing
$a = q s$, Lemma~\ref{lem:qik} states
\begin{equation}
  16 \, |Q_{ik}|^2 = P_{ik}^2 + a^2 + 2\epsilon_{ik} P_{ik} \, a \, x ,
  \label{eq:Qik_recall}
\end{equation}
with $P_{00} = A(1+c)$, $P_{01} = A(1-c)$, $P_{10} = B(1-c)$,
$P_{11} = B(1+c)$, and $\epsilon_{00} = \epsilon_{10} = +1$,
$\epsilon_{01} = \epsilon_{11} = -1$. It is convenient to name the four
square roots that appear in $\Sigma$. Define
\begin{align}
  X_\pm &= \sqrt{A^2(1\pm c)^2 + a^2 \pm 2A(1\pm c) \, a \, x} , \\
  Y_\pm &= \sqrt{B^2(1\mp c)^2 + a^2 \pm 2B(1\mp c) \, a \, x} ,
\end{align}
so that $X_\pm = 4|Q_{0,0/1}|$ and $Y_\pm = 4|Q_{1,1/0}|$, and
\begin{equation}
  \Sigma(\theta,\psi) = \tfrac14 \big( X_+ + X_- + Y_+ + Y_- \big) .
  \label{eq:Sigma_split}
\end{equation}
The sign pattern in these definitions simply records
Eq.~\eqref{eq:Qik_recall}: $X_+$ is built from $P_{00}$, which carries the
weight $1+c$ and the sign $\epsilon_{00}=+1$, while $X_-$ is built from
$P_{01}$, with weight $1-c$ and sign $\epsilon_{01}=-1$. For the second
pair the roles of the two weights are exchanged, $Y_+$ carries weight
$1-c$ (from $P_{10}$) and $Y_-$ carries weight $1+c$ (from $P_{11}$),
again exactly as in Eq.~\eqref{eq:Qik_recall}. Recall also that KD
sum-negativity is defined through
$\mathcal N_{KD}(\rho) = \tfrac12 \big( \sum_{i,k} |Q_{ik}| - 1 \big)$, so
maximizing $\mathcal N_{KD}$ over the second basis is equivalent to
maximizing $\Sigma$.

Dividing $X_+^2$ by $1+c$ and $X_-^2$ by $1-c$ cancels the cross terms in $x$, since they enter $X_+^2$ and $X_-^2$ with opposite sign and the same weight.
We isolate this fact as a lemma, since it is the only nontrivial
ingredient of the proof and it applies unchanged to both pairs of terms in
Eq.~\eqref{eq:Sigma_split}.
 
\begin{lemma}[Weighted Cauchy--Schwarz bound]
\label{lem:cs_bound}
Let $c \in (-1,1)$, $q \ge 0$, $A>0$, and let $x$ be an arbitrary real
number. With $s = \sqrt{1-c^2}$ and $a = qs$, define
\begin{align}
  X_+ = \sqrt{A^2(1+c)^2+a^2+2A(1+c)ax}, \\
  X_- = \sqrt{A^2(1-c)^2+a^2-2A(1-c)ax} .
\end{align}
Then, for every such $x$,
\begin{equation}
  X_+ + X_- \le 2\sqrt{A^2+q^2} ,
  \label{eq:cs_bound}
\end{equation}
with equality if and only if
\begin{equation}
  A s x = q c .
  \label{eq:cs_equality}
\end{equation}
\end{lemma}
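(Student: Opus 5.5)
The plan is to read $X_++X_-$ as a weighted inner product and apply Cauchy--Schwarz with weights $1\pm c$. The weights are chosen so that the $x$-dependent cross terms cancel, exactly as announced before the statement. Concretely, since $c\in(-1,1)$ both $1\pm c>0$, and I would write
\begin{equation*}
X_++X_-=\sqrt{1+c}\,\frac{X_+}{\sqrt{1+c}}+\sqrt{1-c}\,\frac{X_-}{\sqrt{1-c}}
\le\sqrt{(1+c)+(1-c)}\;\sqrt{\frac{X_+^2}{1+c}+\frac{X_-^2}{1-c}} .
\end{equation*}
The first factor is $\sqrt2$. For the second, dividing the radicands gives
\begin{align*}
\frac{X_+^2}{1+c}&=A^2(1+c)+\frac{a^2}{1+c}+2Aax,\\
\frac{X_-^2}{1-c}&=A^2(1-c)+\frac{a^2}{1-c}-2Aax .
\end{align*}
Their sum is $2A^2+2a^2/(1-c^2)=2A^2+2q^2$, using $a^2=q^2s^2$ and $s^2=1-c^2$. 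Hence $X_++X_-\le\sqrt2\cdot\sqrt{2(A^2+q^2)}=2\sqrt{A^2+q^2}$, which is Eq.~\eqref{eq:cs_bound}.

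For the equality case, Cauchy--Schwarz is tight iff the vectors $(\sqrt{1+c},\sqrt{1-c})$ and $(X_+/\sqrt{1+c},X_-/\sqrt{1-c})$ are parallel. Since $X_\pm\ge0$, this is equivalent to $X_+/(1+c)=X_-/(1-c)$, i.e.\ $(1-c)^2X_+^2=(1+c)^2X_-^2$. Expanding, the $A^2(1-c^2)^2$ terms cancel, and the $a^2$ terms leave $a^2[(1-c)^2-(1+c)^2]=-4ca^2$. The cross terms combine to $2Aax\,(1-c^2)\big[(1-c)+(1+c)\big]=4Aaxs^2$. The equality condition therefore reduces to $4a\,(Axs^2-ca)=0$, i.e.\ $4as\,(Asx-qc)=0$. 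For $q>0$ we have $a=qs>0$ and $s>0$, so this is exactly Eq.~\eqref{eq:cs_equality}.

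The main obstacle is hygiene about the hypotheses rather than the computation; I would add two clarifications. First, $X_\pm$ must be real. Writing $X_+^2=(A(1+c)+ax)^2+a^2(1-x^2)$ shows that the radicand is nonnegative for $|x|\le1$ but can be negative for $|x|>1$. I would therefore take $x\in[-1,1]$, which is the only relevant range since $x=\cos\psi$, rather than an arbitrary real number. Second, in the degenerate case $q=0$ we have $X_\pm=A(1\pm c)$ and equality holds for every $x$. The stated criterion $Asx=0$ then fails for $x\ne0$, so the ``only if'' direction should be stated for $q>0$. This is harmless physically, because at $q=0$ the variable $x$ does not enter $\Sigma$ at all. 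I would flag this explicitly so that the later use of the lemma in Theorem~\ref{thm:sigma_max}, applied to both the $X$ pair and the $Y$ pair (with $A\to B$ and $c\to-c$), handles the $r_\perp=0$ steered states separately.
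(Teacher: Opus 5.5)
Your proof is correct and follows the paper's route. Both use the Engel-form Cauchy--Schwarz inequality with weights $1\pm c$, the $x$-independent identity $X_+^2/(1+c)+X_-^2/(1-c)=2(A^2+q^2)$, and the proportionality condition $X_+/(1+c)=X_-/(1-c)$ for equality. Your cross-multiplied factorization $4as(Asx-qc)=0$ gives both directions of the equivalence at once, whereas the paper only spells out the forward one.

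Both of your caveats are also right. The radicands can be negative when $|x|>1$. At $q=0$ equality holds for every $x$, while $Asx=qc$ forces $x=0$. So the paper's closing remark that the condition is then ``satisfied trivially by any $x$'' is mistaken, and the ``only if'' direction genuinely needs $q>0$.
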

 
\begin{proof}
The proof is given in Appendix~\ref{app:cs_bound}.
\end{proof}
 
Lemma~\ref{lem:cs_bound} applies directly to the pair $(X_+,X_-)$, using
the weights $1+c$ and $1-c$ in that order. It also applies to the pair
$(Y_+,Y_-)$, after replacing $A$ by $B$ and $c$ by $-c$: since $s$ and
$a=qs$ are unchanged under $c \to -c$, this substitution turns
$X_\pm(A,c)$ into $Y_\pm(B,-c) = Y_\pm$ term by term. We therefore obtain
two bounds, holding simultaneously for every $\theta\in(0,\pi)$ and every
$\psi$,
\begin{equation}
  X_++X_- \le 2\sqrt{A^2+q^2} , \qquad
  Y_++Y_- \le 2\sqrt{B^2+q^2} .
  \label{eq:two_bounds}
\end{equation}
 
\begin{theorem}[Maximal KD sum-negativity of a steered qubit]
\label{thm:sigma_max}
For every physical steered state $\rho(\boldsymbol r)$, that is,
$r_\parallel^2+r_\perp^2 \le 1$, and every choice of the second measurement
basis $(\theta,\psi) \in [0,\pi] \times [0,2\pi)$,
\begin{align}
  \mathcal N_{KD}\big(\rho(\boldsymbol r)\big)
  \le \frac14 \Big[ &\sqrt{(1+r_\parallel)^2+r_\perp^2} 
  + \sqrt{(1-r_\parallel)^2+r_\perp^2} \Big] - \frac12 .
  \label{eq:qubit_KD}
\end{align}
Equality is attained at $\theta=\pi/2$ and $\psi \in \{\pi/2, 3\pi/2\}$,
that is, when the second basis is mutually unbiased with respect to
$\boldsymbol n_B$ and its Bloch direction is perpendicular to
$\boldsymbol n_B$. If $r_\perp>0$, this is the only maximizer, up to the
reflection $\psi \mapsto \psi+\pi$.
\end{theorem}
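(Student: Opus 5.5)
The plan is to bound the two pairs of square roots in Eq.~\eqref{eq:Sigma_split} separately with Lemma~\ref{lem:cs_bound}, add the bounds, and then check when both bounds are tight at the same time. Since $\mathcal N_{KD}=\tfrac12(\Sigma-1)$ and $\Sigma=\tfrac14(X_++X_-+Y_++Y_-)$, the two estimates in Eq.~\eqref{eq:two_bounds} give, for every $\theta\in(0,\pi)$ and every $\psi$ (here $x=\cos\psi$ is an arbitrary real number, which the lemma allows),
\begin{equation*}
\Sigma\le \tfrac12\Big[\sqrt{A^2+q^2}+\sqrt{B^2+q^2}\Big],
\end{equation*}
and hence $\mathcal N_{KD}\le \tfrac14\big[\sqrt{A^2+q^2}+\sqrt{B^2+q^2}\big]-\tfrac12$. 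With $A=1+r_\parallel$, $B=1-r_\parallel$ and $q=r_\perp$, this is exactly Eq.~\eqref{eq:qubit_KD}. This generic case needs $A,B>0$, which is what Lemma~\ref{lem:cs_bound} assumes.

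Next I handle the excluded cases directly from Lemma~\ref{lem:qik}.
\begin{itemize}
\item[(a)] At the endpoints $\theta\in\{0,\pi\}$ we have $s=0$ and so $a=0$. Then every $Q_{ik}$ is real and non-negative, the square roots reduce to $X_++X_-=2A$ and $Y_++Y_-=2B$, and $\Sigma=(2A+2B)/4=1$. Thus $\mathcal N_{KD}=0$. The right-hand side of Eq.~\eqref{eq:qubit_KD} is non-negative because $\sqrt{A^2+q^2}+\sqrt{B^2+q^2}\ge A+B=2$, so the bound holds there.
\item[(b)] At the poles $A=0$ or $B=0$, physicality forces $q=r_\perp=0$. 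The state is then a reference eigenprojector, and the same computation gives $\Sigma=1$. The bound reduces to $\tfrac14(0+2)-\tfrac12=0$, so it holds with equality.
\end{itemize}
Together with the generic case, this proves the inequality for every physical $\boldsymbol r$ and every second basis.

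For the equality claims I use the equality condition of Lemma~\ref{lem:cs_bound}, Eq.~\eqref{eq:cs_equality}. The first pair is tight iff $Asx=qc$. By the substitution $A\to B$, $c\to -c$ used to obtain Eq.~\eqref{eq:two_bounds}, the second pair is tight iff $Bsx=-qc$. The sum of the two square-root pairs attains the combined bound iff both conditions hold, since each pair is separately bounded.
\begin{itemize}
\item Adding the two conditions and using $A+B=2$ gives $sx=0$.
\item Substituting $sx=0$ back gives $qc=0$.
\item If $r_\perp=q>0$, then $c=0$, so $\theta=\pi/2$ and $s=1$. Hence $x=\cos\psi=0$, i.e.\ $\psi\in\{\pi/2,3\pi/2\}$.
\item Conversely, $\theta=\pi/2$ with $\psi\in\{\pi/2,3\pi/2\}$ satisfies both conditions for any $A,B,q$, so equality is attained there.
\end{itemize}
This identifies the maximizer as the mutually unbiased basis with Bloch direction perpendicular to $\boldsymbol n_B$, unique up to $\psi\mapsto\psi+\pi$ when $r_\perp>0$. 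The endpoints $\theta\in\{0,\pi\}$ cannot compete when $q>0$: there $\mathcal N_{KD}=0$, while the bound is strictly positive because $\sqrt{A^2+q^2}>A$ and $\sqrt{B^2+q^2}>B$. When $q=0$ the bound equals $0$ and is attained by every basis, which is consistent with the uniqueness being asserted only for $r_\perp>0$.

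The main obstacle is the simultaneous-equality step. Each pair individually has a one-parameter family of maximizers, and one must see that the two families intersect only at $c=0$, $x=0$. The observation that makes this work is adding the two conditions, so that $A+B=2\neq0$ forces $sx=0$. The remaining steps are bookkeeping for the boundary cases, where Lemma~\ref{lem:cs_bound} does not apply.
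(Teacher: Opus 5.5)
Your proposal is correct and follows the paper's proof in Appendix~\ref{app:sigma_max} essentially step for step. You sum the two weighted Cauchy--Schwarz bounds of Lemma~\ref{lem:cs_bound}, check $\theta\in\{0,\pi\}$ directly, and get uniqueness by imposing $Asx=qc$ and $Bsx=-qc$ simultaneously; the differences are only cosmetic (you add the conditions to get $sx=0$ where the paper cross-multiplies to get $qc=0$, you certify attainment through the converse of the equality condition rather than by direct evaluation, and you explicitly treat the $A=0$ or $B=0$ case that the appendix leaves implicit).
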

 
\begin{proof}
The proof is somewhat long and is given in Appendix~\ref{app:sigma_max}.
\end{proof}

Theorem~\ref{thm:sigma_max}'s bound is proved for every $\theta\in[0,\pi]$ and $\psi\in[0,2\pi)$ at once; locating the optimum requires no search over this domain.
When $r_\perp=0$, the steered state is
diagonal in the reference basis for every choice of $\boldsymbol n_B$,
and Eq.~\eqref{eq:qubit_KD} gives $\mathcal N_{KD}=\tfrac14(A+B)-\tfrac12=0$,
so such a state carries no KD nonclassicality for any second basis, once
optimized, as expected. When $r_\parallel=0$ and $r_\perp=1$, the steered
state is a pure state on the equator of the Bloch sphere, and the
reference and optimal second basis form the canonical pair of mutually
unbiased qubit bases; Eq.~\eqref{eq:qubit_KD} then gives
\begin{equation}
  \mathcal N_{KD} = \frac{\sqrt2-1}{2} \approx 0.207 ,
\end{equation}
which is the known maximal KD sum-negativity for a pair of mutually
unbiased qubit bases.

\subsection{Maximal and average steered KD nonclassicality}

Combining Theorem~\ref{thm:sigma_max} with the steering map \eqref{eq:steered_bloch} gives closed expressions for both operational quantities of Sec.~\ref{sec:properties}. Throughout, ``closed form" is used in the sense that the abstract optimization over Alice's POVM in Definitions~\ref{def:RN}-\ref{def:CN_general} has been replaced by an explicit formula involving only elementary functions of the state parameters and a residual optimization over the two-parameter steering ellipsoid $\mathcal E$ (equivalently, Alice's Bloch-sphere measurement direction $\boldsymbol m$); it is this residual, low-dimensional, explicitly parametrized optimization, evaluated numerically in Secs.~\ref{sec:examples}-\ref{sec:local_ops}, that is meant whenever a maximization over $\boldsymbol m$ or over $c=\cos\theta$ still appears below.

\begin{theorem}
\label{thm:RKD}
The maximal steered KD nonclassicality of a two-qubit state $\rho_{AB}$ is
\begin{align}
\mathcal R_{KD}(\rho_{AB}) &= \max_{|\boldsymbol m|=1}\ \max_{s=\pm1}\ \frac14\Big[\sqrt{(1+s\,r_\parallel(s\boldsymbol m))^2+r_\perp^2(s\boldsymbol m)} \nonumber\\
&\quad +\sqrt{(1-s\,r_\parallel(s\boldsymbol m))^2+r_\perp^2(s\boldsymbol m)}\Big] - \frac12,
\label{eq:RKD}
\end{align}
where $r_\parallel(\boldsymbol m) = \boldsymbol r(\boldsymbol m)\cdot\boldsymbol n_B$, $r_\perp(\boldsymbol m) = |\boldsymbol r(\boldsymbol m)\times\boldsymbol n_B|$, and $\boldsymbol r(\boldsymbol m)$ is given by Eq.~\eqref{eq:steered_bloch}. 
\end{theorem}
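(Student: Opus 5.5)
The argument combines two reductions. The first replaces Alice's POVM optimization by a maximization over her measurement direction. The second replaces Bob's second-basis optimization by the closed form of Theorem~\ref{thm:sigma_max}.

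For the first reduction, take Definition~\ref{def:RN}, where $\mathcal R_{KD}$ is a maximum over POVMs and outcomes of $\mathcal N_{KD}(\rho_{B|a})$. By Corollary~\ref{cor:rankone} we may assume every element is rank one, $E=\lambda(\mathbb I+\boldsymbol m\cdot\boldsymbol\sigma)/2$. By Eq.~\eqref{eq:steered_bloch}, the conditional Bloch vector $\boldsymbol r(\boldsymbol m)$ does not depend on $\lambda$. So the set of single-branch conditional states reachable by any POVM equals $\{\rho(\boldsymbol r(\boldsymbol m)) : |\boldsymbol m|=1,\ p(\boldsymbol m)>0\}$.

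Conversely, every such $\boldsymbol m$ is realized by the weight-one dichotomic measurement $\{M_{\boldsymbol m},M_{-\boldsymbol m}\}$. Its two outcomes steer Bob to $\boldsymbol r(\boldsymbol m)$ and $\boldsymbol r(-\boldsymbol m)$, so the outcome label $s=\pm1$ just selects $\boldsymbol r(s\boldsymbol m)$. Maximizing over $\boldsymbol m\in S^2$ and $s$ therefore covers exactly the ellipsoid $\mathcal E$, with each point counted twice. This gives $\mathcal R_{KD}=\max_{\boldsymbol m,s}\mathcal N_{KD}(\rho(\boldsymbol r(s\boldsymbol m)))$. The probability is irrelevant here, since Definition~\ref{def:RN} carries no weights. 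The single excluded point with $1+\boldsymbol a\cdot\boldsymbol m=0$ has zero probability and is dropped.

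For the second reduction, the KD functional uses the fixed reference basis $\boldsymbol n_B$, and by the stated convention $\mathcal N_{KD}$ is already maximized over the second basis. Theorem~\ref{thm:sigma_max} gives this maximum exactly as $\tfrac14[\sqrt{(1+r_\parallel)^2+r_\perp^2}+\sqrt{(1-r_\parallel)^2+r_\perp^2}]-\tfrac12$, evaluated at $\boldsymbol r=\boldsymbol r(s\boldsymbol m)$. The $s$ multiplying $r_\parallel$ inside the square roots in Eq.~\eqref{eq:RKD} is harmless: the expression is symmetric under $r_\parallel\to -r_\parallel$ (the two roots swap), so writing $s\,r_\parallel$ or $r_\parallel$ gives the same value. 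Substituting this yields Eq.~\eqref{eq:RKD}.

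The main point needing care is that the outer maximum is attained rather than only a supremum. I would argue this by continuity of $\boldsymbol r(\boldsymbol m)$ away from the zero-probability point. The image $\mathcal E$ is the closed steering ellipsoid, including its limit points, and the closed-form function of $\boldsymbol r$ is continuous on the Bloch ball. If the denominator vanishes at some $\boldsymbol m_0$, then $\boldsymbol r$ stays bounded within the Bloch ball near $\boldsymbol m_0$, being a physical conditional state. In that case I would take the maximum over the closure of $\mathcal E$, noting that the limit states are reached by nearby directions. Attainment then follows from compactness of $S^2\times\{\pm1\}$ applied to the continuous extension, so the maximum in Eq.~\eqref{eq:RKD} is well defined.
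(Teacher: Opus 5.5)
Your proposal is correct and follows essentially the same route as the paper: restrict to rank-one elements via Corollary~\ref{cor:rankone}, note that the weight $\lambda$ cancels, realize each direction by the dichotomic pair so that $s$ just selects the branch, and substitute the closed form of Theorem~\ref{thm:sigma_max}. Your explicit check of the symmetry under $r_\parallel\to-r_\parallel$ is a detail the paper leaves implicit. Your attainment paragraph can be made simpler and fully rigorous. The denominator $1+\boldsymbol a\cdot\boldsymbol m$ can vanish only if $|\boldsymbol a|=1$, and then $\rho_{AB}$ must be a product state with $\boldsymbol r(\boldsymbol m)\equiv\boldsymbol b$. Otherwise $1+\boldsymbol a\cdot\boldsymbol m\ge1-|\boldsymbol a|>0$, so $\boldsymbol r$ is continuous on all of $S^2$ and compactness gives attainment directly, with no need to pass to a closure.
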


\begin{proof} 
By Corollary~\ref{cor:rankone}, the optimization over Alice's POVM in Definition~\ref{def:RN} may be restricted to rank-one elements $E_a=\lambda_a(\mathbb I+\boldsymbol m_a\cdot\boldsymbol\sigma)/2$. For such an element the conditional state on Bob's side is
\begin{align*}
\rho_{B|a}
&=\frac{\mathrm{Tr}_A[(E_a\otimes\mathbb I)\rho_{AB}]}{\mathrm{Tr}[(E_a\otimes\mathbb I)\rho_{AB}]}\\
&=\frac{\mathrm{Tr}_A[(\Pi_{\boldsymbol m_a}\otimes\mathbb I)\rho_{AB}]}{\mathrm{Tr}[(\Pi_{\boldsymbol m_a}\otimes\mathbb I)\rho_{AB}]}\\
&=\rho(\boldsymbol r(\boldsymbol m_a)) ,
\end{align*}
independent of $\lambda_a$, since the weight cancels between numerator and denominator; here $\Pi_{\boldsymbol m}=\frac12(\mathbb I+\boldsymbol m\cdot\boldsymbol\sigma)$. So a rank-one branch along direction $\boldsymbol m$ always produces the same conditional state, regardless of what weight it carries or how many other outcomes complete the POVM, and every direction $\boldsymbol m\in S^2$ is realizable this way, for instance by completing $\{\lambda\Pi_{\boldsymbol m},\mathbb I-\lambda\Pi_{\boldsymbol m}\}$ at any $\lambda\in(0,1]$. Since Definition~\ref{def:RN} maximizes over a single outcome and does not involve the outcome probability, this shows $\mathcal R_{KD}=\max_{\boldsymbol m\in S^2}\mathcal N_{KD}(\rho(\boldsymbol r(\boldsymbol m)))$, with the maximum in particular attained by the dichotomic pair $\{M_{\boldsymbol m},M_{-\boldsymbol m}\}$ at the optimal $\boldsymbol m$, and the outer maximization below may be taken over $s=\pm1$ together with $\boldsymbol m$ ranging over a hemisphere, without loss of generality. Composing this with Theorem~\ref{thm:sigma_max}, already maximized over the second KD basis, gives Eq.~\eqref{eq:RKD}. Since $\boldsymbol m$ ranges over the full Bloch sphere, $\boldsymbol r(\boldsymbol m)$ for $\boldsymbol m$ and $\boldsymbol r(-\boldsymbol m)$ for $-\boldsymbol m$ together range over all of $\mathcal E$, so the maximization is equivalently over the ellipsoid.
\end{proof}

Theorem~\ref{thm:RKD} also resolves, for two-qubit systems, the necessity question left open in Remark~\ref{remark:1}: whether $\mathcal R_{KD}(\rho_{AB})=0$ forces a quantum-classical decomposition of the specific form used in Theorem~\ref{thm:incapable} and Corollary~\ref{cor:dv}, or whether some weaker condition suffices.

\begin{corollary}
\label{cor:vanishing_iff}
For a two-qubit state $\rho_{AB}$ with $\boldsymbol b\neq0$,
\begin{equation}
\mathcal R_{KD}(\rho_{AB}) = 0 \quad\Longleftrightarrow\quad T = \boldsymbol d\,\boldsymbol n_B^T \ \text{ for some } \boldsymbol d\in\mathbb R^3,
\label{eq:vanishing_iff}
\end{equation}
that is, every row of $T$, read as a vector on Bob's index, is proportional to $\boldsymbol n_B$. By Proposition~\ref{prop:ordering}, the same condition is also equivalent to $\mathcal C_{KD}(\rho_{AB})=0$.
\end{corollary}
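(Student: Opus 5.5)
The plan is to reduce $\mathcal R_{KD}(\rho_{AB})=0$ to a purely geometric statement about the steering map, and then read off the condition on $T$. By Theorem~\ref{thm:RKD}, $\mathcal R_{KD}=\max_{\boldsymbol m\in S^2}\mathcal N_{KD}(\rho(\boldsymbol r(\boldsymbol m)))$, the maximum running over directions with $p(\boldsymbol m)>0$. By Theorem~\ref{thm:sigma_max}, each $\mathcal N_{KD}(\rho(\boldsymbol r))$ equals $f(r_\parallel,r_\perp)=\tfrac14[\sqrt{(1+r_\parallel)^2+r_\perp^2}+\sqrt{(1-r_\parallel)^2+r_\perp^2}]-\tfrac12$.

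The first step is to show that $f=0$ if and only if $r_\perp=0$ on the physical region. The direction $r_\perp=0$ is noted right after Theorem~\ref{thm:sigma_max}. For the converse, I will use strict convexity of the Euclidean norm. The two square roots are $|(1+r_\parallel,r_\perp)|$ and $|(1-r_\parallel,r_\perp)|$, and their sum is at least $|(2,2r_\perp)|=2\sqrt{1+r_\perp^2}$. This exceeds $2$ whenever $r_\perp>0$, so $f>0$ whenever $r_\perp>0$. Hence $\mathcal R_{KD}=0$ if and only if $\boldsymbol r(\boldsymbol m)\parallel\boldsymbol n_B$ for every $\boldsymbol m$ with $p(\boldsymbol m)>0$.

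The second step translates this into a condition on $T$. By Eq.~\eqref{eq:steered_bloch}, for $p(\boldsymbol m)>0$ the transverse part of $\boldsymbol r(\boldsymbol m)$ is proportional to the transverse part of $\boldsymbol b+T^T\boldsymbol m$. Since $\boldsymbol b\parallel\boldsymbol n_B$, this is the transverse part of $T^T\boldsymbol m$. So the condition becomes $P_\perp T^T\boldsymbol m=0$ for all admissible $\boldsymbol m$, where $P_\perp=\mathbb I-\boldsymbol n_B\boldsymbol n_B^T$.

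Next I need to show that the admissible set is large enough for linearity to extend this to all $\boldsymbol m$. I will establish the claim deferred in the text: $1+\boldsymbol a\cdot\boldsymbol m=0$ with $|\boldsymbol a|\le1$ forces $|\boldsymbol a|=1$ and $\boldsymbol m=-\boldsymbol a$, so at most one point of $S^2$ is excluded. The remaining admissible set spans $\mathbb R^3$ and is dense in $S^2$, so continuity or linearity gives $P_\perp T^T=0$. This means $T^T=\boldsymbol n_B\boldsymbol d^T$ for some $\boldsymbol d$, that is, $T=\boldsymbol d\,\boldsymbol n_B^T$.

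Conversely, if $T=\boldsymbol d\,\boldsymbol n_B^T$, then $T^T\boldsymbol m=(\boldsymbol d\cdot\boldsymbol m)\boldsymbol n_B$. Every $\boldsymbol r(\boldsymbol m)$ is then parallel to $\boldsymbol n_B$, so $r_\perp=0$ on every branch and $\mathcal R_{KD}=0$ by the first step. The statement about $\mathcal C_{KD}$ follows from Proposition~\ref{prop:ordering}: $\mathcal R_{KD}=0$ forces $\mathcal C_{KD}=0$. For the reverse implication, I will use an averaging argument. If $\mathcal R_{KD}>0$ at some direction $\boldsymbol m$, then continuity of $f$ and of $\boldsymbol r$ gives $r_\perp>0$ on an open set of directions containing one of $\pm\boldsymbol m$ with positive probability. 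Taking the dichotomic measurement $\{M_{\boldsymbol m},M_{-\boldsymbol m}\}$ along such a direction then yields a positive weighted average, so $\mathcal C_{KD}>0$.

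I expect the main obstacle to be the reverse direction for $\mathcal C_{KD}$ and the handling of the zero-probability point. The direct implication via Proposition~\ref{prop:ordering} only gives $\mathcal R_{KD}=0\Rightarrow\mathcal C_{KD}=0$. The converse needs the observation that a nonclassical branch of positive probability exists, which is why the admissible-set and continuity argument has to be carried out carefully.
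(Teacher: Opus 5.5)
Your proposal is correct and follows essentially the same route as the paper. You reduce $\mathcal R_{KD}=0$ to $r_\perp(\boldsymbol m)=0$ for all admissible $\boldsymbol m$. You then use that $1+\boldsymbol a\cdot\boldsymbol m$ vanishes at most at one point, together with $\boldsymbol b\parallel\boldsymbol n_B$, to get $T^T\boldsymbol m\in\mathrm{span}\{\boldsymbol n_B\}$ for every $\boldsymbol m$, and you get $\mathcal C_{KD}>0$ from a positive-probability branch with $\mathcal N_{KD}>0$.

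Your triangle-inequality argument that $f>0$ whenever $r_\perp>0$ fills in a step the paper only asserts. The continuity argument in your last step is unnecessary, because the maximum defining $\mathcal R_{KD}$ already runs over directions with $p(\boldsymbol m)>0$. The dichotomic measurement along that same direction therefore suffices.
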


\begin{proof}
Since $\mathcal N_{KD}(r_\parallel,r_\perp)\geq0$ with equality if and only if $r_\perp=0$ (Eq.~\eqref{eq:qubit_KD}), and $\mathcal R_{KD}=\max_{\boldsymbol m}\mathcal N_{KD}(r_\parallel(\boldsymbol m),r_\perp(\boldsymbol m))$, we have $\mathcal R_{KD}(\rho_{AB})=0$ if and only if $r_\perp(\boldsymbol m)=0$ for every unit vector $\boldsymbol m$.

Since $|\boldsymbol a|\leq1$, the denominator $1+\boldsymbol a\cdot\boldsymbol m$ in Eq.~\eqref{eq:steered_bloch} is non-negative for every $\boldsymbol m$ and strictly positive except possibly at the single point $\boldsymbol m=-\boldsymbol a/|\boldsymbol a|$. Away from that point, $r_\perp(\boldsymbol m)=0$ if and only if the numerator satisfies
\begin{equation}
(\boldsymbol b+T^T\boldsymbol m)\times\boldsymbol n_B = 0,
\end{equation}
since dividing by a positive scalar does not affect whether a cross product vanishes. The left-hand side is an affine, hence continuous, function of $\boldsymbol m$; if it vanishes on the sphere minus at most one point, it vanishes at that point too by continuity. So $r_\perp(\boldsymbol m)=0$ for every $\boldsymbol m$ if and only if $(\boldsymbol b+T^T\boldsymbol m)\times\boldsymbol n_B=0$ for every $\boldsymbol m$.

Since $\boldsymbol n_B=\boldsymbol b/|\boldsymbol b|$, the term $\boldsymbol b\times\boldsymbol n_B$ vanishes identically, leaving the condition $(T^T\boldsymbol m)\times\boldsymbol n_B=0$ for every $\boldsymbol m$, that is, $T^T\boldsymbol m\in\mathrm{span}\{\boldsymbol n_B\}$ for every $\boldsymbol m$. This holds if and only if every column of $T^T$ lies along $\boldsymbol n_B$, that is, $T^T=\boldsymbol n_B\boldsymbol d^T$ for some $\boldsymbol d\in\mathbb R^3$, equivalently $T=\boldsymbol d\,\boldsymbol n_B^T$.
\end{proof}
 
The equivalence with $\mathcal C_{KD}=0$ requires one additional step beyond Proposition~\ref{prop:ordering}. If $\mathcal R_{KD}>0$, then some rank-one measurement outcome with
nonzero probability prepares a state with $\mathcal N_{KD}>0$. The same measurement
therefore gives a strictly positive contribution to the average in the definition
of $\mathcal C_{KD}$, implying $\mathcal C_{KD}>0$. Hence, in the present two-qubit KD
setting,
\[
\mathcal R_{KD}=0 \Longleftrightarrow \mathcal C_{KD}=0.
\]

Corollary~\ref{cor:vanishing_iff} sharpens Corollary~\ref{cor:dv} into an exact characterization for
two-qubit states: the quantum-classical states of Corollary~\ref{cor:dv}, for
which $\rho_{AB}=\sum_i p_i\rho_i^A\otimes\ket{a_i}\!\bra{a_i}$ with
$\ket{a_i}$ the eigenbasis of $\boldsymbol n_B\cdot\boldsymbol\sigma$,
are exactly the special case
\begin{equation}
\boldsymbol d = \sum_i p_i (-1)^i \boldsymbol r_i
\label{eq:d_reconstruction}
\end{equation}
of Eq.~\eqref{eq:vanishing_iff}, where $\boldsymbol r_i$ is the Bloch vector of
$\rho_i^A$. Equation~\eqref{eq:d_reconstruction} shows that no such decomposition is needed to \emph{state} the vanishing condition: Eq.~\eqref{eq:vanishing_iff} is a direct statement about the correlation matrix $T$ alone, checkable without exhibiting any particular ensemble on Alice's side.

In fact the converse question has an affirmative answer for every two-qubit state, and follows directly from Eq.~\eqref{eq:vanishing_iff} itself, without appeal to any decomposition beyond the one just used to derive it. Substituting $T=\boldsymbol d\,\boldsymbol n_B^T$ and $\boldsymbol b=|\boldsymbol b|\boldsymbol n_B$ into Eq.~\eqref{eq:bloch_twoqubit} gives $\sum_{ij}T_{ij}\sigma_i\otimes\sigma_j=(\boldsymbol d\cdot\boldsymbol\sigma)\otimes(\boldsymbol n_B\cdot\boldsymbol\sigma)$, so every operator appearing on Bob's side of $\rho_{AB}$ is a linear combination of $\mathbb I$ and $\boldsymbol n_B\cdot\boldsymbol\sigma$; these commute and are therefore simultaneously diagonal in the eigenbasis $\{\ket+,\ket-\}$ of $\boldsymbol n_B\cdot\boldsymbol\sigma$ (eigenvalues $\pm1$), which is exactly the basis $\{\ket{a_i}\}$ of Corollary~\ref{cor:dv}. Collecting terms in this basis,
\begin{equation}
\rho_{AB}=\frac{1+|\boldsymbol b|}2\,\rho_+^A\otimes\ket+\!\bra++\frac{1-|\boldsymbol b|}2\,\rho_-^A\otimes\ket-\!\bra-,
\label{eq:vanishing_decomp}
\end{equation}
with $\rho_\pm^A=\tfrac12\big[\mathbb I+(\boldsymbol a\pm\boldsymbol d)/(1\pm|\boldsymbol b|)\cdot\boldsymbol\sigma\big]$; each branch is automatically a valid, correctly normalized single-qubit state, since $\rho_{AB}\succeq0$ forces every diagonal block of a block-diagonal Hermitian matrix to be positive semidefinite, with trace equal to the stated weight. Equation~\eqref{eq:vanishing_decomp} is exactly the quantum-classical form of Corollary~\ref{cor:dv}, evaluated in the $\boldsymbol n_B$ basis. So every two-qubit state satisfying Eq.~\eqref{eq:vanishing_iff} does admit a quantum-classical decomposition of the form used in Theorem~\ref{thm:incapable}, settling the question left open in Remark~\ref{remark:1} affirmatively within this two-qubit KD setting: for $\boldsymbol b\neq0$,
\begin{align}
\mathcal R_{KD}(\rho_{AB})=0 &\iff \mathcal C_{KD}(\rho_{AB})=0 \nonumber\\
&\iff \rho_{AB}\text{ is quantum-classical in the }\boldsymbol n_B\text{ basis.}
\end{align}
Equation~\eqref{eq:d_reconstruction} above already supplied the converse direction, recovering $\boldsymbol d$ from any such decomposition; Eq.~\eqref{eq:vanishing_decomp} completes the equivalence by exhibiting the decomposition explicitly in terms of $\boldsymbol d$ and $\boldsymbol b$ alone. The general question of Remark~\ref{remark:1} therefore remains open only beyond this two-qubit KD setting, for instance for higher-dimensional systems or for the Wigner-negativity criterion of Sec.~\ref{sec:steered_wigner}.
 
With the vanishing condition settled, we turn from whether nonclassicality can be generated at all to how much of it survives on average once outcome probabilities are taken into account, beginning with the dichotomic-projective restriction singled out by Corollary~\ref{cor:rankone}.
\begin{definition}\label{def:CKD2}
The dichotomic average steered KD nonclassicality of a two-qubit state $\rho_{AB}$ is
\begin{align}
\mathcal C_{KD}^{(2)}(\rho_{AB}) =\max_{|\boldsymbol m|=1}\Big[ &p(\boldsymbol m)\,\mathcal N_{KD}(\rho(\boldsymbol r(\boldsymbol m)))\nonumber\\
&+p(-\boldsymbol m)\,\mathcal N_{KD}(\rho(\boldsymbol r(-\boldsymbol m)))\Big] ,
\end{align}
the best average obtainable using a two-outcome projective measurement on Alice's side, with $\mathcal N_{KD}$ given by Eq.~\eqref{eq:N_measure}, $p(\pm\boldsymbol m)=(1\pm\boldsymbol a\cdot\boldsymbol m)/2$, and $\boldsymbol r(\pm\boldsymbol m)$ given by Eq.~\eqref{eq:steered_bloch}.
\end{definition}
 
This restricted average is not left as an abstract optimization: because each of the two branches in Definition~\ref{def:CKD2} can still be optimized over its own second KD basis independently, the same reduction used for $\mathcal R_{KD}$ applies here too, giving $\mathcal C_{KD}^{(2)}$ a closed form as well.
\begin{theorem}
\label{thm:CKD}
The dichotomic average steered KD nonclassicality of Definition~\ref{def:CKD2} is given in closed form by
\begin{align}
\mathcal C_{KD}^{(2)}(\rho_{AB}) = \max_{|\boldsymbol m|=1}\Big[&p(\boldsymbol m)\,\Sigma\big(r_\parallel(\boldsymbol m),r_\perp(\boldsymbol m)\big) \nonumber\\
+\,&p(-\boldsymbol m)\,\Sigma\big(r_\parallel(-\boldsymbol m),r_\perp(-\boldsymbol m)\big)\Big] - \frac12,
\end{align}
where
\begin{align}
\Sigma(r_\parallel,r_\perp)=\frac14\Big[\sqrt{(1+r_\parallel)^2+r_\perp^2}+\sqrt{(1-r_\parallel)^2+r_\perp^2}\Big]
\end{align}
is the closed form of Theorem~\ref{thm:sigma_max}, and $\boldsymbol r(\pm\boldsymbol m)$, $p(\pm\boldsymbol m)$ are as in Eq.~\eqref{eq:steered_bloch}. Moreover, $\mathcal C_{KD}^{(2)}(\rho_{AB})\le\mathcal C_{KD}(\rho_{AB})$ for every $\rho_{AB}$, by Proposition~\ref{prop:ordering} restricted to dichotomic measurements.
\end{theorem}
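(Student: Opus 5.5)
The plan is to substitute the closed form of Theorem~\ref{thm:sigma_max} branch by branch into Definition~\ref{def:CKD2}, and then collect the constant terms using the fact that the two outcome probabilities sum to one. The inequality $\mathcal C_{KD}^{(2)}\le\mathcal C_{KD}$ will follow by observing that dichotomic projective measurements form a subset of the POVMs optimized over in Definition~\ref{def:CN_general}.

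\emph{Step 1: the two branches decouple.} Fix a unit vector $\boldsymbol m$. The two outcomes of $\{M_{\boldsymbol m},M_{-\boldsymbol m}\}$ steer Bob to $\rho(\boldsymbol r(\pm\boldsymbol m))$ with probabilities $p(\pm\boldsymbol m)=(1\pm\boldsymbol a\cdot\boldsymbol m)/2$, by Eq.~\eqref{eq:steered_bloch}. Each quantity $\mathcal N_{KD}(\rho(\boldsymbol r(\pm\boldsymbol m)))$ in Definition~\ref{def:CKD2} is, by the notational convention fixed before Lemma~\ref{lem:qik}, already maximized over its own second basis. The reference basis $\boldsymbol n_B$ is common to both branches, but the second basis may depend on the outcome. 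This matches the operational reading given after Definition~\ref{def:CN_general}, under which Bob may adapt his evaluation to the communicated outcome. Since the objective is a sum of two terms with independent optimization variables and nonnegative weights, the joint maximum over the pair of second bases equals the sum of the separate maxima.

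\emph{Step 2: insert the closed form.} Theorem~\ref{thm:sigma_max} gives, for every physical steered Bloch vector (including the poles and the case $r_\perp=0$, which its proof covers), $\mathcal N_{KD}(\rho(\boldsymbol r))=\Sigma(r_\parallel,r_\perp)-\tfrac12$, with the bound attained. Substituting this for each branch, the constant terms combine as $-\tfrac12\,[p(\boldsymbol m)+p(-\boldsymbol m)]=-\tfrac12$, which does not depend on $\boldsymbol m$ and can be pulled outside the outer maximum.

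The only delicate point is an outcome of zero probability, which occurs when $\boldsymbol a\cdot\boldsymbol m=\mp1$. There $\boldsymbol r(\pm\boldsymbol m)$ is undefined, but the term carries weight zero. I would adopt the convention that such a term contributes $0$; this is consistent because $\Sigma$ is bounded on the Bloch ball. Formally, $p\,\Sigma$ is continuous on the sphere wherever $p>0$ and tends to $0$ as $p\to0$, since $\Sigma\le\tfrac{\sqrt2}{2}$. The $-\tfrac12$ bookkeeping is unaffected.

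\emph{Step 3: existence of the maximum and the inequality.} The objective is a continuous function on the compact sphere $S^2$ under the convention of Step 2, so the maximum over $\boldsymbol m$ is attained. For the final claim, every dichotomic projective measurement is an admissible POVM in Definition~\ref{def:CN_general}. Therefore the value $\mathcal C_{KD}$ is at least the average achieved by any such measurement, and in particular at least the maximum over them, which gives $\mathcal C_{KD}^{(2)}\le\mathcal C_{KD}$. I expect Step 1 to be the main point needing care, specifically the justification that the second basis may be chosen separately in each branch; Steps 2 and 3 are bookkeeping.
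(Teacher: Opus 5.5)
Your proposal is correct and follows the paper's proof essentially step for step. Like the paper, you decouple the two branches because each is optimized over its own second basis, insert the closed form of Theorem~\ref{thm:sigma_max} in each branch, and merge the two $-\tfrac12$ terms using $p(\boldsymbol m)+p(-\boldsymbol m)=1$. Your treatment of a zero-probability branch and of attainment of the maximum goes beyond what the paper states, and both arguments are sound. For $\mathcal C_{KD}^{(2)}\le\mathcal C_{KD}$ you use the inclusion of dichotomic projective measurements among all POVMs; that inclusion is what actually does the work, even though the paper attributes the inequality to Proposition~\ref{prop:ordering}.
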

\begin{proof}
By Corollary~\ref{cor:rankone}, the optimization defining $\mathcal C_{KD}^{(2)}$ in Definition~\ref{def:CKD2} ranges over dichotomic rank-one measurements $\{M_{\boldsymbol m},M_{-\boldsymbol m}\}$, $M_{\pm\boldsymbol m}=\tfrac12(\mathbb I\pm\boldsymbol m\cdot\boldsymbol\sigma)$, where $\boldsymbol m$ is any unit vector, that is, any point of the unit sphere $S^2=\{\boldsymbol m\in\mathbb R^3:|\boldsymbol m|=1\}$ parametrizing Alice's choice of measurement direction. Conditioned on the outcome $\boldsymbol m$, Bob's two branches are $\rho(\boldsymbol r(\boldsymbol m))$ and $\rho(\boldsymbol r(-\boldsymbol m))$, occurring with probabilities $p(\boldsymbol m)$ and $p(-\boldsymbol m)$, with $\boldsymbol r(\pm\boldsymbol m)$ and $p(\pm\boldsymbol m)$ given by Eq.~\eqref{eq:steered_bloch}.
 
Nothing in Definition~\ref{def:CKD2} ties the second KD basis used to evaluate one branch to the one used for the other: each branch's $\mathcal N_{KD}$ is maximized over its own second basis independently of the other branch's choice. Theorem~\ref{thm:sigma_max} therefore applies separately to each branch, giving
\begin{align}
\max_{\text{2nd basis}}\mathcal N_{KD}\big(\rho(\boldsymbol r(\pm\boldsymbol m))\big) = \Sigma\big(r_\parallel(\pm\boldsymbol m),r_\perp(\pm\boldsymbol m)\big)-\frac12.
\end{align}
Substituting both branch values into the weighted sum of Definition~\ref{def:CKD2} and using $p(\boldsymbol m)+p(-\boldsymbol m)=1$ to combine the two resulting $-\tfrac12$ terms into one leaves an explicit function of $\boldsymbol m$ alone, with no further basis optimization remaining. Maximizing this function over every unit vector $\boldsymbol m\in S^2$, that is, over every direction Alice's measurement could point, gives the closed-form expression stated above.
\end{proof}
\begin{remark}
\label{rem:CKD_open}
Whether the optimal POVM for $\mathcal C_{KD}$ is always dichotomic and projective, so that $\mathcal C_{KD}^{(2)}=\mathcal C_{KD}$, is left open: unlike $\mathcal R_{KD}$, whose reduction to a single rank-one branch is direct (Theorem~\ref{thm:sigma_max}'s proof), $\mathcal C_{KD}$ is a probability-weighted sum over Alice's full POVM, and a general POVM can place weight on more than two rank-one directions with unequal weights. Every numerical value of $\mathcal C_{KD}$ reported in this paper is in fact a value of $\mathcal C_{KD}^{(2)}$, and should be read as such.
\end{remark}

By Proposition~\ref{prop:ordering}, $\mathcal C_{KD}(\rho_{AB}) \leq \mathcal R_{KD}(\rho_{AB})$ for every two-qubit state, with the gap between them quantifying how much of the single-branch nonclassicality survives once the outcome probability of that branch is taken into account.

Equation~\eqref{eq:RKD} recovers Theorem~\ref{thm:incapable} in this setting: if $\boldsymbol b+T^T\boldsymbol m$ never acquires a component perpendicular to $\boldsymbol n_B$ for any $\boldsymbol m$, that is if $T^T$ maps every direction into the line spanned by $\boldsymbol n_B$, then $r_\perp(\boldsymbol m)=0$ identically on $\mathcal E$ and both $\mathcal R_{KD}$ and $\mathcal C_{KD}$ vanish; 
this is the Bloch-vector characterization of the incapability condition
$\mathcal R_{KD}=\mathcal C_{KD}=0$ for the KD resource considered here. Conversely, $\mathcal R_{KD}>0$ and $\mathcal C_{KD}>0$ precisely when some measurement
direction steers Bob to a state with a Bloch component transverse to $\boldsymbol n_B$.

\subsection{Degenerate case $\mathbf{b} = 0$}

When $\boldsymbol b=0$, Bob's reduced state $\rho_B = \mathbb I/2$ is exactly maximally mixed and has no unique eigenbasis, so the construction above, which fixes the reference KD basis at the eigenbasis of $\rho_B$, does not directly apply. This includes the important case of Bell-diagonal states, for which $\boldsymbol a=\boldsymbol b=0$ and $T=\mathrm{diag}(t_1,t_2,t_3)$. In this situation the reference basis must instead be specified as part of the definition of $\mathcal N_{KD}$, for instance by fixing an auxiliary direction $\boldsymbol n_B$ independently of $\rho_{AB}$, or by taking $\mathcal R_{KD}$ and $\mathcal C_{KD}$ to be additionally optimized over the choice of $\boldsymbol n_B$ itself. Either convention is consistent with the general framework of Sec.~\ref{sec:properties}, since Theorem~\ref{thm:incapable} and Theorems~\ref{thm:optimal} and \ref{thm:lu} did not assume $\boldsymbol b\neq0$; we adopt the fixed-auxiliary-direction convention when illustrating the two-qubit formulas on Bell-diagonal states in Sec.~\ref{sec:examples}, and note the choice explicitly wherever it is used.

\subsection{Worked examples: X-states}
\label{sec:examples}

We now illustrate Theorems~\ref{thm:RKD} and \ref{thm:CKD} on the family of two-qubit X-states, for which the reduction to a single-parameter optimization is explicit and several closed-form results are available.

\subsubsection{Reduction to canonical form}

A two-qubit state is called an X-state if its density matrix has nonzero entries only on the diagonal and the anti-diagonal in the computational basis. In the Bloch parametrization of Eq.~\eqref{eq:bloch_twoqubit}, this is equivalent to $\boldsymbol a = (0,0,a_z)$, $\boldsymbol b=(0,0,b_z)$, and $T_{xz}=T_{zx}=T_{yz}=T_{zy}=0$, with the remaining transverse block $(T_{xx},T_{xy},T_{yx},T_{yy})$ otherwise unconstrained. A pair of local phase rotations, $U_A(\alpha)\otimes U_B(\beta)$ with $U(\gamma)=\mathrm{diag}(1,e^{i\gamma})$, acts on this transverse block as an independent left and right rotation and therefore diagonalizes it by its singular value decomposition, without moving $\boldsymbol a$, $\boldsymbol b$, or $T_{zz}$. Since $U(\alpha)\otimes U(\beta)$ is a local unitary, Theorem~\ref{thm:lu} guarantees that $\mathcal R_{KD}$ and $\mathcal C_{KD}$ are unchanged by this diagonalization. We may therefore restrict attention, without loss of generality, to the canonical form
\begin{align}
\rho_{AB} = \frac14\Big[ &\mathbb I\otimes\mathbb I + a_z\,\sigma_z\otimes\mathbb I + \mathbb I\otimes b_z\,\sigma_z \nonumber \\
&+ t_1\,\sigma_x\otimes\sigma_x + t_2\,\sigma_y\otimes\sigma_y + t_3\,\sigma_z\otimes\sigma_z\Big],
\label{eq:xstate_canonical}
\end{align}
which includes the Bell-diagonal states ($a_z=b_z=0$) and the Werner states ($a_z=b_z=0$, $t_1=t_2=t_3$) as special cases. Throughout this subsection we take $b_z\geq 0$ and $|t_1|\geq|t_2|$ without further loss of generality, the first by a possible relabelling $\sigma_z\to-\sigma_z$ on Bob and the second by relabelling $x\leftrightarrow y$; both are local unitary relabellings and leave $\mathcal R_{KD}$, $\mathcal C_{KD}$ unchanged by Theorem~\ref{thm:lu}. The degenerate case $\boldsymbol b=0$, for which $\rho_B$ has no unique eigenbasis, requires an auxiliary choice of reference direction $\boldsymbol n_B$ and is treated separately below.

\subsubsection{Reduction to a single measurement parameter}

For $b_z>0$ the reference basis is $\boldsymbol n_B=\hat z$. Writing Alice's measurement direction as $\boldsymbol m=(\sin\theta\cos\phi,\sin\theta\sin\phi,\cos\theta)$ and using $T^T=T=\mathrm{diag}(t_1,t_2,t_3)$ in Eq.~\eqref{eq:steered_bloch},
\begin{align}
r_\parallel(\theta) &= \frac{b_z+t_3\cos\theta}{1+a_z\cos\theta}, \nonumber\\
r_\perp(\theta,\phi) &= \frac{\sin\theta\sqrt{t_1^2\cos^2\phi+t_2^2\sin^2\phi}}{1+a_z\cos\theta}.
\label{eq:xstate_rpar_rperp}
\end{align}
The azimuthal dependence enters only through $r_\perp$, and only through the combination $t_1^2\cos^2\phi+t_2^2\sin^2\phi$.

\begin{lemma}
\label{lem:phi_reduction}
$\mathcal N_{KD}(\rho(\boldsymbol r))$, given by Eq.~\eqref{eq:qubit_KD}, is a strictly increasing function of $r_\perp$ at fixed $r_\parallel$. Consequently, for every fixed $\theta$, both $\mathcal N_{KD}(r_\parallel(\theta),r_\perp(\theta,\phi))$ and the corresponding weighted sum defining $\mathcal C_{KD}^{(2)}$ are maximized over $\phi$ at $\phi=0$, independently of $\theta$ and of the branch, since $|t_1|\geq|t_2|$.
\end{lemma}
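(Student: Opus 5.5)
The plan is to differentiate the closed form of Theorem~\ref{thm:sigma_max} directly in $r_\perp$, and then to read off the $\phi$-dependence from Eq.~\eqref{eq:xstate_rpar_rperp}. Write
\[
f(r_\parallel,r_\perp)=\tfrac14\Big[\sqrt{(1+r_\parallel)^2+r_\perp^2}+\sqrt{(1-r_\parallel)^2+r_\perp^2}\Big]-\tfrac12 .
\]
For fixed $r_\parallel$, each square root is a function of $r_\perp^2$ alone. Each is also strictly increasing in $r_\perp$ on $r_\perp\ge0$, since
\[
\partial_{r_\perp}\sqrt{(1\pm r_\parallel)^2+r_\perp^2}=\frac{r_\perp}{\sqrt{(1\pm r_\parallel)^2+r_\perp^2}} ,
\]
which is strictly positive for $r_\perp>0$.

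The only delicate points are the boundaries where a square root could vanish. This happens only at $r_\perp=0$ with $r_\parallel=\pm1$, which is a single point; strict monotonicity along the ray $r_\perp\ge0$ still holds there, because $\sqrt{c^2+r_\perp^2}$ is strictly increasing in $r_\perp\ge0$ for every constant $c$, including $c=0$. So the first claim follows without case analysis beyond this remark. It should be understood as holding on the physical range $r_\parallel^2+r_\perp^2\le1$, where the formula of Theorem~\ref{thm:sigma_max} is valid.

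For the second claim, fix $\theta$ and a branch $\pm\boldsymbol m$. Under $\boldsymbol m\to-\boldsymbol m$, i.e.\ $\theta\to\pi-\theta$, $\phi\to\phi+\pi$, the function $\cos^2\phi$ is unchanged. Then $r_\parallel$ is independent of $\phi$, and the prefactors $p(\pm\boldsymbol m)=(1\pm a_z\cos\theta)/2$ are independent of $\phi$ as well. The denominator $1+a_z\cos\theta$ is positive on nonzero-probability branches and also independent of $\phi$. Hence $r_\perp$ depends on $\phi$ only through $g(\phi)=t_1^2\cos^2\phi+t_2^2\sin^2\phi=t_2^2+(t_1^2-t_2^2)\cos^2\phi$. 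This is maximized at $\phi=0$ whenever $|t_1|\ge|t_2|$, and it is the same $g$ for both branches.

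By the monotonicity just shown, each branch's $\mathcal N_{KD}$ is therefore maximized at $\phi=0$. The weighted sum defining $\mathcal C_{KD}^{(2)}$ has nonnegative, $\phi$-independent weights, so it is maximized simultaneously at $\phi=0$, independently of $\theta$.

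The main obstacle is merely bookkeeping. One must check that the antipodal branch uses the same $\phi$-dependence, so that a single $\phi$ works for both terms, and note that the maximizer need not be unique when $|t_1|=|t_2|$ (then $g$ is constant) or when $\sin\theta=0$.
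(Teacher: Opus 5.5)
Your proposal is correct and follows essentially the same route as the paper. Both arguments establish strict monotonicity of the closed form of Theorem~\ref{thm:sigma_max} in $r_\perp$ at fixed $r_\parallel$ (the paper differentiates with respect to $r_\perp^2$, which gives a strictly positive derivative everywhere and so avoids your separate remark at $r_\perp=0$), and then observe that $r_\parallel$ and $p(\pm\boldsymbol m)$ do not depend on $\phi$, while $\phi$ enters both branches only through $t_1^2\cos^2\phi+t_2^2\sin^2\phi$, which is maximized at $\phi=0$.
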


\begin{proof}
From Eq.~\eqref{eq:qubit_KD}, $\partial\mathcal N_{KD}/\partial(r_\perp^2) = \tfrac18\big[(1+r_\parallel)^2+r_\perp^2\big]^{-1/2} + \tfrac18\big[(1-r_\parallel)^2+r_\perp^2\big]^{-1/2} > 0$ for every $r_\parallel,r_\perp$, so $\mathcal N_{KD}$ is strictly increasing in $r_\perp^2$, hence in $r_\perp\geq0$, at fixed $r_\parallel$. Since $r_\parallel(\theta)$ in Eq.~\eqref{eq:xstate_rpar_rperp} does not depend on $\phi$, maximizing $\mathcal N_{KD}$ over $\phi$ at fixed $\theta$ reduces to maximizing $r_\perp(\theta,\phi)$ over $\phi$, which occurs at $\phi=0$ since $|t_1|\geq|t_2|$. The same argument applies unchanged to the outcome $-\boldsymbol m$, whose transverse components enter $r_\perp$ only through their squares, so the optimal $\phi$ is the same for both branches of any dichotomic measurement, and the conclusion extends to the probability-weighted sum defining $\mathcal C_{KD}$.
\end{proof}

Lemma~\ref{lem:phi_reduction} reduces both optimizations to the single parameter $c\equiv\cos\theta\in[-1,1]$. Writing
\begin{align}
r_\parallel(c) = \frac{b_z+t_3c}{1+a_zc}, \quad r_\perp(c) = \frac{|t_1|\sqrt{1-c^2}}{1+a_zc}, \quad p(c)=\frac{1+a_zc}{2},
\label{eq:xstate_reduced}
\end{align}
Theorems~\ref{thm:RKD} and \ref{thm:CKD} become
\begin{align}
\mathcal R_{KD} &= \max_{c\in[-1,1]}\ \mathcal N_{KD}\big(r_\parallel(c),r_\perp(c)\big), \label{eq:xstate_R}\\
\mathcal C_{KD}^{(2)} &= \max_{c\in[0,1]}\ \Big[p(c)\,\mathcal N_{KD}\big(r_\parallel(c),r_\perp(c)\big) \nonumber \\
& \qquad \qquad + p(-c)\,\mathcal N_{KD}\big(r_\parallel(-c),r_\perp(-c)\big)\Big], \label{eq:xstate_C}
\end{align}
with $\mathcal N_{KD}$ given by Eq.~\eqref{eq:qubit_KD} and given, by Theorem~\ref{thm:sigma_max}, as the exact maximum over the second KD basis at every point of $\mathcal E$.

\subsubsection{Bell-diagonal states: exact equality of the two quantities}

For Bell-diagonal states, $a_z=b_z=0$, Eq.~\eqref{eq:xstate_reduced} simplifies to $r_\parallel(c)=t_3c$, $r_\perp(c)=|t_1|\sqrt{1-c^2}$, and $p(c)=\tfrac12$ for every $c$.

\begin{proposition}
\label{prop:belldiag_equal}
For any two-qubit state with $\boldsymbol a=\boldsymbol b=0$, $\mathcal R_{KD} = \mathcal C_{KD}$, for any choice of the auxiliary reference direction $\boldsymbol n_B$ and regardless of whether $T$ is diagonal.
\end{proposition}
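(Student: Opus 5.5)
The inequality $\mathcal C_{KD}\le\mathcal R_{KD}$ is already Proposition~\ref{prop:ordering}, so only $\mathcal C_{KD}\ge\mathcal R_{KD}$ needs proof. The plan is to exhibit a single two-outcome projective measurement whose probability-weighted average already equals the best single-branch value.

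\emph{Step 1: symmetry of the steering map.} With $\boldsymbol a=\boldsymbol b=0$, Eq.~\eqref{eq:steered_bloch} reduces to
\[
\boldsymbol r(\boldsymbol m)=T^T\boldsymbol m, \qquad p(\boldsymbol m)=\tfrac12 .
\]
This holds for every unit $\boldsymbol m$ and every (not necessarily diagonal) $T$. It follows that
\[
\boldsymbol r(-\boldsymbol m)=-\boldsymbol r(\boldsymbol m),
\]
so the two branches of the dichotomic measurement $\{M_{\boldsymbol m},M_{-\boldsymbol m}\}$ are antipodal points of the centred ellipsoid $\mathcal E$, each occurring with probability $\tfrac12$.

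\emph{Step 2: equal branch values.} Fix any auxiliary reference direction $\boldsymbol n_B$, held fixed for both branches as the $\boldsymbol b=0$ convention prescribes. Under $\boldsymbol r\mapsto-\boldsymbol r$ we have $r_\parallel\mapsto-r_\parallel$ and $r_\perp\mapsto r_\perp$. The closed form Eq.~\eqref{eq:qubit_KD} of Theorem~\ref{thm:sigma_max} is invariant under $r_\parallel\to-r_\parallel$, because its two square roots simply exchange. Hence
\[
\mathcal N_{KD}(\rho(\boldsymbol r(\boldsymbol m)))=\mathcal N_{KD}(\rho(\boldsymbol r(-\boldsymbol m))).
\]
Each branch may use its own optimal second basis, as in the proof of Theorem~\ref{thm:CKD}; the optimal bases are the perpendicular MUB directions of Theorem~\ref{thm:sigma_max}, and in any case the two optimized values coincide.

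\emph{Step 3: conclude.} Let $\boldsymbol m^\ast$ attain $\mathcal R_{KD}$ in Theorem~\ref{thm:RKD}. This maximum exists because the map is continuous on the compact sphere, and by that theorem the supremum over all POVMs is realized by a rank-one branch along some direction $\boldsymbol m^\ast$. Using $\{M_{\boldsymbol m^\ast},M_{-\boldsymbol m^\ast}\}$ then gives
\[
\mathcal C_{KD}\ge\mathcal C_{KD}^{(2)}\ge \tfrac12\mathcal N_{KD}(\boldsymbol r(\boldsymbol m^\ast))+\tfrac12\mathcal N_{KD}(\boldsymbol r(-\boldsymbol m^\ast))=\mathcal R_{KD}.
\]
Combined with Proposition~\ref{prop:ordering}, this yields equality, and in addition $\mathcal C_{KD}^{(2)}=\mathcal C_{KD}$ in this case, which is the optimality claim made in the abstract.

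\emph{Main obstacle.} The only delicate point is Step 2's invariance. It must hold for an arbitrary auxiliary $\boldsymbol n_B$ and for either convention for the degenerate case (fixed $\boldsymbol n_B$, or $\boldsymbol n_B$ additionally optimized). For the optimized convention, I would take the outer maximum over $\boldsymbol n_B$ and, at the optimizing $\boldsymbol n_B$, apply the argument above with that $\boldsymbol n_B$ shared by both branches, which again gives equality. I also need to check that Theorem~\ref{thm:sigma_max} applies verbatim when $\boldsymbol n_B$ is not an eigendirection of $\rho_B$. It does: that theorem only uses the reference direction, not its origin.
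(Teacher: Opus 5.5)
Your proof is correct and follows the paper's argument. Both use $\boldsymbol r(-\boldsymbol m)=-\boldsymbol r(\boldsymbol m)$ with $p(\pm\boldsymbol m)=\tfrac12$, the evenness of Eq.~\eqref{eq:qubit_KD} in $r_\parallel$, and then the sandwich $\mathcal C_{KD}^{(2)}\le\mathcal C_{KD}\le\mathcal R_{KD}$. The only cosmetic difference is that the paper notes the two objectives coincide for every $\boldsymbol m$, giving $\mathcal R_{KD}=\mathcal C_{KD}^{(2)}$ directly, whereas you evaluate only at the maximizer $\boldsymbol m^\ast$; this amounts to the same thing.
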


\begin{proof}
When $\boldsymbol a=\boldsymbol b=0$, Eq.~\eqref{eq:steered_bloch} gives $\boldsymbol r(\boldsymbol m)=T^T\boldsymbol m$, linear and homogeneous in $\boldsymbol m$, so $\boldsymbol r(-\boldsymbol m)=-\boldsymbol r(\boldsymbol m)$ identically. Consequently, for any fixed unit vector $\boldsymbol n_B$, $r_\parallel(-\boldsymbol m)=\boldsymbol r(-\boldsymbol m)\cdot\boldsymbol n_B=-r_\parallel(\boldsymbol m)$, while $r_\perp(-\boldsymbol m)=|\boldsymbol r(-\boldsymbol m)\times\boldsymbol n_B|=|\boldsymbol r(\boldsymbol m)\times\boldsymbol n_B|=r_\perp(\boldsymbol m)$, since $r_\perp$ is a magnitude. Because $\mathcal N_{KD}(r_\parallel,r_\perp)$ in Eq.~\eqref{eq:qubit_KD} is manifestly even in $r_\parallel$ (the two square-root terms are exchanged under $r_\parallel\to-r_\parallel$), it follows that $\mathcal N_{KD}(\rho(\boldsymbol r(-\boldsymbol m))) = \mathcal N_{KD}(\rho(\boldsymbol r(\boldsymbol m)))$ for every $\boldsymbol m$. 
Since also $\boldsymbol a=0$ gives $p(\boldsymbol m)=p(-\boldsymbol m)=\tfrac12$ for every $\boldsymbol m$, the summand defining $\mathcal C_{KD}^{(2)}$ in Definition~\ref{def:CKD2} reduces to $\mathcal N_{KD}(\rho(\boldsymbol r(\boldsymbol m)))$ itself, identical to the objective maximized in Definition~\ref{def:RN}, so $\mathcal R_{KD}=\mathcal C_{KD}^{(2)}$ for every choice of $\boldsymbol n_B$. 
Combined with the general bound $\mathcal C_{KD}^{(2)}\le\mathcal C_{KD}\le\mathcal R_{KD}$ (Proposition~\ref{prop:ordering} and Theorem~\ref{thm:CKD}), this forces the two outer terms together, giving the stronger equality with the fully optimized, unrestricted $\mathcal C_{KD}$ stated above: whenever both local Bloch vectors vanish, the dichotomic bound is not merely tight among dichotomic measurements, it is tight among all POVMs.
\end{proof}

This is a direct consequence of the general bound $\mathcal C_{KD}\leq\mathcal R_{KD}$ (Proposition~\ref{prop:ordering}) becoming saturated whenever every dichotomic measurement produces two equally weighted, equally nonclassical branches, which is guaranteed here by the combined absence of any local bias on either side; the argument makes no reference to the shape of $T$ or to which axis is chosen as $\boldsymbol n_B$, so it applies to every Bell-diagonal state and, more generally, to every two-qubit state with vanishing local Bloch vectors. We verified the X-shaped special case numerically for a range of asymmetric Bell-diagonal states, for instance $(t_1,t_2,t_3)=(0.5,0.2,0.1)$, for which $\mathcal R_{KD}=\mathcal C_{KD}=0.05902$, and $(t_1,t_2,t_3)=(0.7,0.1,-0.3)$, for which $\mathcal R_{KD}=\mathcal C_{KD}=0.11033$, both attained at the equatorial measurement $\theta=\pi/2$ and both quoted for the auxiliary choice $\boldsymbol n_B=\hat z$ implicit in the canonical form~\eqref{eq:xstate_canonical}.

While Proposition~\ref{prop:belldiag_equal} guarantees $\mathcal R_{KD}=\mathcal C_{KD}$ for \emph{every} choice of $\boldsymbol n_B$, it does not make the common value itself independent of that choice whenever $T$ is anisotropic, and the dependence can be substantial: for the same state $(t_1,t_2,t_3)=(0.7,0.1,-0.3)$ used above, $\mathcal R_{KD}=\mathcal C_{KD}$ evaluates to $0.11033$ for $\boldsymbol n_B=\hat z$, $0.08749$ for $\boldsymbol n_B=(1,1,1)/\sqrt3$, and only $0.02202$ for $\boldsymbol n_B=\hat x$, a fivefold range for one and the same quantum state depending purely on the externally supplied reference direction. This is expected, since $\boldsymbol n_B$ is not determined by $\rho_{AB}$ when $\boldsymbol b=0$, but it means that any reported value of the KD resource for a maximally-mixed-marginal state, Bell-diagonal or otherwise, should always be read together with the auxiliary direction used to obtain it; the Werner family is exceptional in this regard only because its isotropic $T\propto\mathbb I$ makes the value itself, not merely the $\mathcal R_{KD}=\mathcal C_{KD}$ equality, independent of $\boldsymbol n_B$.

\paragraph{Werner states.} As a further special case, take $t_1=t_2=t_3=p$ with $p\in[0,1]$, corresponding to the Werner state $\rho_W = p\,|\Psi^-\rangle\langle\Psi^-| + (1-p)\,\mathbb I/4$ up to a local unitary. Here $r_\parallel(c)=pc$, $r_\perp(c)=p\sqrt{1-c^2}$, and both squared-modulus terms in Eq.~\eqref{eq:qubit_KD} simplify identically,
\begin{align}
(1\pm pc)^2 + p^2(1-c^2) = 1+p^2 \pm 2pc,
\end{align}
so that,
\[ \mathcal N_{KD}(c) = \tfrac14\big[\sqrt{1+p^2+2pc}+\sqrt{1+p^2-2pc}\big]-\tfrac12. \] 
Squaring the bracket gives
\begin{align}
\left[\;\sqrt{1+p^2+2pc}+\sqrt{1+p^2-2pc} \;\; \right]^2 \nonumber\\
= 2(1+p^2) + 2\sqrt{(1+p^2)^2-4p^2c^2},
\end{align}
which is maximized when $c=0$, and strictly decreasing in $|c|$ for $p>0$. Hence the optimum occurs uniquely at $\theta=\pi/2$, an equatorial measurement on Alice, and
\begin{align}
\mathcal R_{KD}(\rho_W) = \mathcal C_{KD}(\rho_W) = \frac{\sqrt{1+p^2}-1}{2}.
\label{eq:werner_closed}
\end{align}
This closed form is confirmed by direct numerical optimization of Eqs.~\eqref{eq:xstate_R}-\eqref{eq:xstate_C} to five decimal places at every value of $p$ tested. It also reproduces the two boundary theorems of Sec.~\ref{sec:properties} exactly: at $p=0$, $\rho_W=\mathbb I/4$ is a product state and Eq.~\eqref{eq:werner_closed} gives $0$, consistent with Theorem~\ref{thm:incapable}; at $p=1$, $\rho_W=|\Psi^-\rangle\langle\Psi^-|$ is a pure maximally entangled state of full Schmidt rank, and Eq.~\eqref{eq:werner_closed} gives $(\sqrt2-1)/2$, which is exactly $\max_{|\psi\rangle}\mathcal N_{KD}(|\psi\rangle\langle\psi|)$ over all qubit pure states, attained by any state on the equator of the Bloch sphere paired with its mutually unbiased basis; this matches Theorem~\ref{thm:optimal}(i) applied to the qubit case.

\subsubsection{Biased X-states: strict separation of the two quantities}

Proposition~\ref{prop:belldiag_equal} shows that $\mathcal R_{KD}$ and $\mathcal C_{KD}$ coincide whenever both local Bloch vectors vanish. This raises the question of what a local bias contributes. We illustrate this with the one-parameter family
\begin{align}
\rho_{AB}(q) = q\,|\Phi^+\rangle\langle\Phi^+| + (1-q)\,|00\rangle\langle00|, \qquad q\in[0,1],
\label{eq:biased_family}
\end{align}
interpolating between the product state $|00\rangle\langle00|$ at $q=0$ and the maximally entangled state $|\Phi^+\rangle=(|00\rangle+|11\rangle)/\sqrt2$ at $q=1$. Since $\rho_{AB}(q)$ is an explicit convex combination of two valid density matrices, it is a valid state for every $q\in[0,1]$ with no further positivity constraint to check. In the canonical form~\eqref{eq:xstate_canonical}, $\rho_{AB}(q)$ has
\begin{align}
a_z = b_z = 1-q, \qquad t_1=t_2=q, \qquad t_3=1.
\end{align}
Because $t_1=t_2$, the transverse block is isotropic and the optimal azimuth $\phi=0$ of Lemma~\ref{lem:phi_reduction} is not unique; every $\phi$ is equally optimal, and Eq.~\eqref{eq:xstate_reduced} applies unchanged. Table~\ref{tab:biased_family} reports $\mathcal R_{KD}(q)$ and $\mathcal C_{KD}(q)$, obtained by numerically maximizing Eqs.~\eqref{eq:xstate_R} and \eqref{eq:xstate_C} over $c$.

\begin{table}[ht]
\begin{ruledtabular}
\centering
\begin{tabular}{cccccc}
$q$ & $\mathcal R_{KD}$ & $\mathcal C_{KD}^{(2)}$ & gap & $\theta^\ast_R$ & $\theta^\ast_C$ \\
\hline
0.10 & 0.01299 & 0.01101 & 0.00198 & $154.2^\circ$ & $90.0^\circ$ \\
0.20 & 0.02705 & 0.02348 & 0.00357 & $143.2^\circ$ & $90.0^\circ$ \\
0.30 & 0.04233 & 0.03763 & 0.00469 & $134.4^\circ$ & $90.0^\circ$ \\
0.40 & 0.05902 & 0.05373 & 0.00529 & $127.0^\circ$ & $90.0^\circ$ \\
0.45073 & 0.06810 & 0.06273 & 0.00536 & 123.3$^\circ$ & 90.0$^\circ$ \\
0.50 & 0.07735 & 0.07206 & 0.00529 & $120.0^\circ$ & $90.0^\circ$ \\
0.60 & 0.09761 & 0.09292 & 0.00469 & $113.6^\circ$ & $90.0^\circ$ \\
0.70 & 0.12017 & 0.11661 & 0.00357 & $107.4^\circ$ & $90.0^\circ$ \\
0.80 & 0.14550 & 0.14340 & 0.00210 & $101.5^\circ$ & $90.0^\circ$ \\
0.90 & 0.17420 & 0.17351 & 0.00069 & $95.8^\circ$ & $90.0^\circ$ \\
0.99 & 0.20360 & 0.20359 & 0.00001 & $90.6^\circ$ & $90.0^\circ$ \\
1.00 & 0.20711 & 0.20711 & 0.00000 & $90.0^\circ$ & $90.0^\circ$ \\
\end{tabular}
\caption{ \justifying \small
Maximal ($\mathcal R_{KD}$) and average ($\mathcal C_{KD}^{(2)}$) steered KD nonclassicality for the family $\rho_{AB}(q) = q|\Phi^+\rangle\langle\Phi^+|+(1-q)|00\rangle\langle00|$, obtained from Eqs.~\eqref{eq:xstate_R}-\eqref{eq:xstate_C}. At $q=1$ only, $a_z=b_z=0$ and Proposition~\ref{prop:belldiag_equal} upgrades this to the fully optimized $\mathcal C_{KD}$; at every other row it is the dichotomic-projective value $\mathcal C_{KD}^{(2)}$, an established lower bound on the unrestricted $\mathcal C_{KD}$. $\theta^\ast_R$ and $\theta^\ast_C$ denote the polar angle of Alice's optimal measurement direction. The row $q=0.45073$ is the numerically located gap maximum.}
\label{tab:biased_family}
\end{ruledtabular}
\end{table}

Three features of Table~\ref{tab:biased_family} are worth noting. First, $\mathcal R_{KD}$ and $\mathcal C_{KD}^{(2)}$ coincide exactly at both endpoints, $q=0$ and $q=1$: at $q=0$ the state is a product state, so both vanish by Theorem~\ref{thm:incapable}; at $q=1$, $a_z=b_z=0$ and Proposition~\ref{prop:belldiag_equal} applies, giving equality with the fully optimized $\mathcal C_{KD}$ as well, and the common value $0.20711$ again matches the maximal qubit value $(\sqrt2-1)/2$ of Theorem~\ref{thm:optimal}. Second, throughout the interval sampled numerically, $0<q<1$, the two quantities are strictly separated, $\mathcal C_{KD}^{(2)}(q) < \mathcal R_{KD}(q)$, with the gap vanishing continuously as $q\to0$ and $q\to1$ and attaining its maximum at $q^\ast = 0.45073$, where $\mathcal R_{KD}(q^\ast) = 0.068096$, $\mathcal C_{KD}^{(2)}(q^\ast) = 0.062733$, and the gap is $0.0053625$, about $7.9\%$ of $\mathcal R_{KD}(q^\ast)$.
The gap therefore measures a genuine operational cost of not being able to select which outcome is kept, a cost that exists only away from the two extremes where the marginals become perfectly unbiased, $a_z=b_z=0$ (Proposition~\ref{prop:belldiag_equal}), or perfectly polarizing, $q=1$. Third, the optimal measurement direction $\theta^\ast_R$ drifts smoothly from $154^\circ$ at small $q$ down to $90^\circ$ at $q=1$, while $\theta^\ast_C$ remains pinned at the equator, $\theta^\ast_C=90^\circ$, for every $q$; the state most useful for extracting a single highly nonclassical branch therefore favours an increasingly polar measurement as $q$ decreases, exploiting the growing bias $a_z=b_z=1-q$ to make one branch rare but strongly nonclassical, whereas the measurement optimal on average always balances the two branches symmetrically about the equator.

\section{Effect of local operations on Bob}
\label{sec:local_ops}

Nothing in the general framework of Sec.~\ref{sec:properties} forces $\mathcal R_{\mathcal N}$ or $\mathcal C_{\mathcal N}$ to behave monotonically under a local operation applied to Bob's share of $\rho_{AB}$. This is unlike, say, entanglement, which decreases under local operations and classical communication by definition of the resource theory. Here Bob's local operation is applied to the shared state itself, before Alice's measurement, and the question is empirical: can noise on Bob's side ever help him, rather than only hurt him? We answer this for two-qubit KD nonclassicality, restricting for tractability to channels $\Lambda_B$ that act on Bob alone, so that the transformed state is $\rho_{AB}'=(\mathbb I_A\otimes\Lambda_B)\rho_{AB}$.

\subsection{The steering ellipsoid transforms as a single qubit}
Write a general single-qubit channel in Bloch form as $\boldsymbol v\mapsto D\boldsymbol v+\boldsymbol t$, with $D$ a real $3\times3$ matrix and $\boldsymbol t\in\mathbb R^3$ a translation, the two jointly constrained by the usual complete-positivity conditions for a single-qubit channel; in particular these force $\|D\|_{\mathrm{op}}\le1$, which is all that is used below, but $\|D\|_{\mathrm{op}}\le1$ together with any $\boldsymbol t$ is not by itself sufficient for complete positivity.
Applying $\Lambda_B$ to Eq.~\eqref{eq:bloch_twoqubit} transforms the Bloch-Fano parameters as
\begin{align}
\boldsymbol b &\;\longmapsto\; \boldsymbol b' = D\boldsymbol b+\boldsymbol t, \\
T &\;\longmapsto\; T' = TD^T+\boldsymbol a\,\boldsymbol t^T,
\end{align}
while $\boldsymbol a$ is untouched, since Bob's channel cannot affect Alice's marginal.

\begin{lemma}
\label{lem:ellipsoid_transform}
Under $\rho_{AB}\mapsto(\mathbb I_A\otimes\Lambda_B)\rho_{AB}$, the steered Bloch vector of Eq.~\eqref{eq:steered_bloch} transforms as
\begin{equation}
\boldsymbol r'(\boldsymbol m) = D\,\boldsymbol r(\boldsymbol m) + \boldsymbol t,
\label{eq:ellipsoid_affine}
\end{equation}
for every measurement direction $\boldsymbol m$; that is, Bob's entire steering ellipsoid $\mathcal E$ transforms under the same affine Bloch map that $\Lambda_B$ would apply to a single qubit state.
\end{lemma}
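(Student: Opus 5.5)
The plan is a direct substitution into the steering map of Eq.~\eqref{eq:steered_bloch}, using the transformed Bloch--Fano parameters stated just before the lemma. Alice's marginal $\boldsymbol a$ is unchanged. Hence the outcome probability $p(\boldsymbol m)=(1+\boldsymbol a\cdot\boldsymbol m)/2$, and with it the denominator $1+\boldsymbol a\cdot\boldsymbol m$, are identical before and after the channel. Only the numerator $\boldsymbol b+T^T\boldsymbol m$ needs to be tracked.

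First I will verify the parameter update rules themselves, since the lemma rests on them. Write $\Lambda_B(\mathbb I)=\mathbb I+\boldsymbol t\cdot\boldsymbol\sigma$ and $\Lambda_B(\sigma_j)=\sum_k D_{kj}\sigma_k$; this is the Bloch form $\boldsymbol v\mapsto D\boldsymbol v+\boldsymbol t$ applied by linearity to $\tfrac12(\mathbb I+\boldsymbol v\cdot\boldsymbol\sigma)$. Apply $\mathbb I_A\otimes\Lambda_B$ term by term to Eq.~\eqref{eq:bloch_twoqubit}:
\begin{itemize}
\item $\mathbb I\otimes\mathbb I$ produces an extra $\mathbb I\otimes\boldsymbol t\cdot\boldsymbol\sigma$;
\item $\boldsymbol a\cdot\boldsymbol\sigma\otimes\mathbb I$ produces an extra $\sum_{ik}a_it_k\,\sigma_i\otimes\sigma_k$;
\item $\mathbb I\otimes\boldsymbol b\cdot\boldsymbol\sigma$ becomes $\mathbb I\otimes(D\boldsymbol b)\cdot\boldsymbol\sigma$;
\item $\sum T_{ij}\sigma_i\otimes\sigma_j$ becomes $\sum_{ik}(TD^T)_{ik}\sigma_i\otimes\sigma_k$.
\end{itemize}
Collecting terms gives $\boldsymbol b'=D\boldsymbol b+\boldsymbol t$ and $T'=TD^T+\boldsymbol a\boldsymbol t^T$, as claimed.

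Next I substitute these into the numerator. Since $T'^T=DT^T+\boldsymbol t\boldsymbol a^T$,
\[
\boldsymbol b'+T'^T\boldsymbol m=D(\boldsymbol b+T^T\boldsymbol m)+\boldsymbol t\,(1+\boldsymbol a\cdot\boldsymbol m).
\]
Dividing by the unchanged denominator $1+\boldsymbol a\cdot\boldsymbol m$ gives $\boldsymbol r'(\boldsymbol m)=D\boldsymbol r(\boldsymbol m)+\boldsymbol t$. This holds for every $\boldsymbol m$ with $p(\boldsymbol m)>0$, which is every point except possibly the single point $\boldsymbol m=-\boldsymbol a/|\boldsymbol a|$ identified in the proof of Corollary~\ref{cor:vanishing_iff}. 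Taking the image over $\boldsymbol m\in S^2$ then gives $\mathcal E'=D\mathcal E+\boldsymbol t$.

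As a sanity check, and as an alternative derivation, the result can be seen operationally. Alice's POVM acts only on $A$ and the channel acts only on $B$, so they commute. Therefore $\rho'_{B|\boldsymbol m}=\Lambda_B(\rho_{B|\boldsymbol m})$, with the same outcome probability, and the affine Bloch form of $\Lambda_B$ gives Eq.~\eqref{eq:ellipsoid_affine} immediately.

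The only step I expect to need care is the bookkeeping of the translation term $\boldsymbol t$. It is the term $\boldsymbol a\boldsymbol t^T$ in $T'$ that supplies exactly the factor $1+\boldsymbol a\cdot\boldsymbol m$ needed to cancel the denominator. For a non-unital channel with $\boldsymbol a\neq0$, mishandling the transpose convention on $T$ would break this cancellation, so I will fix the index convention explicitly before substituting.
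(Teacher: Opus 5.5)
Your proposal is correct and follows the paper's own proof: substitute $\boldsymbol b'=D\boldsymbol b+\boldsymbol t$ and $T'^T=DT^T+\boldsymbol t\,\boldsymbol a^T$ into the numerator, then divide by the unchanged denominator $1+\boldsymbol a\cdot\boldsymbol m$. Your explicit derivation of the parameter update rules (which the paper only states) and your operational check $\rho'_{B|\boldsymbol m}=\Lambda_B(\rho_{B|\boldsymbol m})$ (the paper uses the same argument in its proof of Theorem~\ref{thm:unital_no_help}) are both correct and tighten the argument.
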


\begin{proof}
Substituting $\boldsymbol b'$ and $T'^T=DT^T+\boldsymbol t\boldsymbol a^T$ into Eq.~\eqref{eq:steered_bloch} gives
\begin{equation}
\boldsymbol b'+T'^T\boldsymbol m = D(\boldsymbol b+T^T\boldsymbol m) + \boldsymbol t\,(1+\boldsymbol a\cdot\boldsymbol m),
\end{equation}
and dividing by $1+\boldsymbol a\cdot\boldsymbol m$ gives Eq.~\eqref{eq:ellipsoid_affine} directly, since $\boldsymbol a$ is unchanged by $\Lambda_B$.
\end{proof}

This recovers, as the special case $D\in SO(3)$, $\boldsymbol t=0$, the unitary invariance already established in Theorem~\ref{thm:lu}, and is consistent with the known transformation law of the quantum steering ellipsoid under local operations on Bob.

\subsection{A geometric form of \texorpdfstring{$\mathcal N_{KD}$}{N\_KD}}

Equation~\eqref{eq:qubit_KD} has a clean geometric reading: writing $\boldsymbol n$ for the unit reference direction,
\begin{equation}
\mathcal N_{KD}(\boldsymbol r;\boldsymbol n) = \frac14\Big(|\boldsymbol r-\boldsymbol n| + |\boldsymbol r+\boldsymbol n|\Big) - \frac12,
\label{eq:NKD_geometric}
\end{equation}
since $|\boldsymbol r\mp\boldsymbol n|^2=(1\mp r_\parallel)^2+r_\perp^2$. Thus $\mathcal N_{KD}$ is, up to an affine rescaling, the sum of the Euclidean distances from the steered point $\boldsymbol r$ to the two poles $\pm\boldsymbol n$ of the reference axis on the Bloch sphere. This form makes the effect of a channel transparent: it acts on both the point and the poles at once.

Because the KD reference basis is defined by the eigenbasis of Bob's reduced state,
a local channel changes not only the steering ellipsoid but also the reference axis
against which KD nonclassicality is evaluated: this is exactly the marginal-adapted
character of the construction flagged in Sec.~\ref{sec:twoqubitKD}~A. The enhancement reported here is therefore
an enhancement of this marginal-adapted operational quantity, in which both the
conditional-state geometry and the reference observable are updated after the channel, rather than of a KD resource evaluated against one externally fixed reference basis.

\subsection{Unitality as a boundary case}
 
\begin{corollary}
\label{cor:unitary_equality}
If $\Lambda_B$ is unitary ($D\in SO(3)$, $\boldsymbol t=0$), then $\mathcal R_{KD}(\rho_{AB}')=\mathcal R_{KD}(\rho_{AB})$ and $\mathcal C_{KD}(\rho_{AB}')=\mathcal C_{KD}(\rho_{AB})$ exactly.
\end{corollary}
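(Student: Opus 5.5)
The plan is to combine Lemma~\ref{lem:ellipsoid_transform} with the geometric form~\eqref{eq:NKD_geometric} of $\mathcal N_{KD}$. We then show that, for $D=O\in SO(3)$ and $\boldsymbol t=0$, every quantity entering Theorems~\ref{thm:RKD} and~\ref{thm:CKD} is carried over unchanged.

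\textbf{Step 1: the probabilities are unchanged.} The channel does not touch $\boldsymbol a$. Hence $p(\pm\boldsymbol m)=(1\pm\boldsymbol a\cdot\boldsymbol m)/2$ is identical before and after, for every direction $\boldsymbol m$.

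\textbf{Step 2: everything rotates together.} Lemma~\ref{lem:ellipsoid_transform} gives $\boldsymbol r'(\boldsymbol m)=O\,\boldsymbol r(\boldsymbol m)$ for every $\boldsymbol m$. The marginal transforms as $\boldsymbol b'=O\boldsymbol b$. Since $|\boldsymbol b'|=|\boldsymbol b|\neq0$, the reference direction is still well defined and equals $\boldsymbol n_B'=O\boldsymbol n_B$. This is exactly the co-rotation convention of Theorem~\ref{thm:lu}, now checked at the Bloch level.

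\textbf{Step 3: each branch value is preserved.} Orthogonal maps preserve Euclidean distances, so
\[
|\boldsymbol r'(\boldsymbol m)\mp\boldsymbol n_B'| = |O(\boldsymbol r(\boldsymbol m)\mp\boldsymbol n_B)| = |\boldsymbol r(\boldsymbol m)\mp\boldsymbol n_B|.
\]
By Eq.~\eqref{eq:NKD_geometric}, $\mathcal N_{KD}(\boldsymbol r'(\boldsymbol m);\boldsymbol n_B')=\mathcal N_{KD}(\boldsymbol r(\boldsymbol m);\boldsymbol n_B)$ for every $\boldsymbol m$. Equivalently, $r_\parallel$ and $r_\perp$ are invariant, since dot and cross products are rotation-covariant for $O\in SO(3)$.

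\textbf{Step 4: the optimized quantities agree.} The objective functions in Eq.~\eqref{eq:RKD} and in Theorem~\ref{thm:CKD} coincide pointwise in $\boldsymbol m$, so their maxima over $S^2$ agree. This gives equality of $\mathcal R_{KD}$ and of $\mathcal C^{(2)}_{KD}$.

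For the fully optimized $\mathcal C_{KD}$ over arbitrary POVMs, I would argue directly at the operator level, as in the proof of Theorem~\ref{thm:lu}. Any POVM $\{M_a\}$ on Alice produces branches $U_B\rho_{B|a}U_B^\dagger$ with unchanged probabilities $p(a)$. By Corollary~\ref{cor:rankone} these are again rank-one branches, with Bloch vectors $O\boldsymbol r(\boldsymbol m_a)$, so each term $p(a)\,\mathcal N_{KD}$ is unchanged by Step 3. The two sets over which the supremum is taken are therefore identical.

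\textbf{Main obstacle.} The only delicate point is that $\mathcal N_{KD}$ is marginal-adapted: the reference basis must be shown to move with the state. This is settled by $\boldsymbol b'=O\boldsymbol b$ in Step 2. It relies on $\boldsymbol t=0$, which is why the statement is restricted to unitaries rather than general channels with orthogonal $D$.
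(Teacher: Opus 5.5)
Your proof is correct and follows the paper's route: Lemma~\ref{lem:ellipsoid_transform} gives $\boldsymbol r'(\boldsymbol m)=O\boldsymbol r(\boldsymbol m)$ and $\boldsymbol n_B'=O\boldsymbol n_B$, and distance preservation in Eq.~\eqref{eq:NKD_geometric} then makes every branch value coincide. Your explicit checks that $p(\pm\boldsymbol m)$ is untouched and that the argument extends to arbitrary POVMs spell out what the paper's short proof leaves implicit.

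Two small points, neither a gap. First, the appeal to Corollary~\ref{cor:rankone} in the full-$\mathcal C_{KD}$ step is unnecessary. Step 3 holds for any Bloch vector with $|\boldsymbol r|\le1$, so any POVM can be handled directly. Second, an orthogonal $D$ already forces $\boldsymbol t=0$ for a valid channel, so the ``general channels with orthogonal $D$'' in your closing remark form an empty class.
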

 
\begin{proof}
By Lemma~\ref{lem:ellipsoid_transform} and Eq.~\eqref{eq:NKD_geometric}, an orthogonal matrix preserves every Euclidean distance, so the reference direction transforms as $\boldsymbol n'=D\boldsymbol n$, and
\begin{align}
|D\boldsymbol r(\boldsymbol m)-D\boldsymbol n| &= |\boldsymbol r(\boldsymbol m)-\boldsymbol n| ,
\\
|D\boldsymbol r(\boldsymbol m)+D\boldsymbol n| &= |\boldsymbol r(\boldsymbol m)+\boldsymbol n|
\end{align}
for every unit vector $\boldsymbol m$. The two optimizations over the second KD basis therefore range over exactly the same values before and after the channel acts, so their maxima coincide. This recovers Theorem~\ref{thm:lu} directly from the ellipsoid picture.
\end{proof}
 
A local unitary on Bob's side leaves both nonclassicality quantifiers completely unchanged. The natural next question is whether this invariance survives for the larger class of unital channels, those with $\boldsymbol t=0$ but $D$ only a contraction, $\|D\|_{\mathrm{op}}\le1$, rather than an isometry. It does not survive as an equality, but it survives as a bound: every unital channel on Bob can only leave $\mathcal R_{KD}$ and $\mathcal C_{KD}$ unchanged or reduce them, with the unitary case of Corollary~\ref{cor:unitary_equality} sitting exactly at the boundary where equality is possible.
 
By Lemma~\ref{lem:ellipsoid_transform}, a unital channel sends each steered Bloch vector to $\boldsymbol r(\boldsymbol m)\mapsto D\boldsymbol r(\boldsymbol m)$ and the reference direction to $\boldsymbol n\mapsto\boldsymbol n'=D\boldsymbol b/|D\boldsymbol b|$, while leaving Alice's own Bloch vector $\boldsymbol a$, and therefore her outcome probabilities $p(\boldsymbol m)$, untouched. Since $\mathcal N_{KD}$ is, by Eq.~\eqref{eq:NKD_geometric}, an increasing affine function of $|\boldsymbol r-\boldsymbol n|+|\boldsymbol r+\boldsymbol n|$ alone, it is enough to show that for every unit vector $\boldsymbol m$,
\begin{align}
&|D\boldsymbol r(\boldsymbol m)-\boldsymbol n'|+|D\boldsymbol r(\boldsymbol m)+\boldsymbol n'| \nonumber\\
&\qquad\le\; |\boldsymbol r(\boldsymbol m)-\boldsymbol n|+|\boldsymbol r(\boldsymbol m)+\boldsymbol n| .
\label{eq:unital_ineq_target}
\end{align}
Here $D$ does not depend on $\boldsymbol m$, and Eq.~\eqref{eq:unital_ineq_target} involves only $D$, an arbitrary unit vector $\boldsymbol n$, and an arbitrary Bloch vector $\boldsymbol r$ with $|\boldsymbol r|\le1$, since every steered point $\boldsymbol r(\boldsymbol m)$ satisfies this bound. It therefore suffices to establish the inequality for arbitrary such $\boldsymbol r$ and $\boldsymbol n$, independently of the particular shape of the steering ellipsoid. We do this in two stages, first reducing to a diagonal channel, then bounding separately the radial shrinkage and the transverse rotation that together control the left side.
 
\begin{lemma}[Reduction to a diagonal channel]
\label{lem:unital_reduction}
It suffices to prove Eq.~\eqref{eq:unital_ineq_target} for $D$ diagonal, $D=\mathrm{diag}(\kappa_1,\kappa_2,\kappa_3)$ with $\kappa_i\in[0,1]$.
\end{lemma}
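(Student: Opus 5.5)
The plan is to use the real singular value decomposition of $D$ and to show that the two orthogonal factors drop out of Eq.~\eqref{eq:unital_ineq_target}. Write $D = O_1 K O_2^T$ with $O_1,O_2\in O(3)$ and $K=\mathrm{diag}(\kappa_1,\kappa_2,\kappa_3)$. The singular values satisfy $\kappa_i\in[0,1]$ because $\|D\|_{\mathrm{op}}\le1$ is forced by complete positivity, as recalled above.

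First I will note that Eq.~\eqref{eq:unital_ineq_target} only asks for the inequality for every Bloch vector $|\boldsymbol r|\le1$ and every unit $\boldsymbol n$. It does not ask for membership in a physical steering ellipsoid. The reference direction after the channel is $\boldsymbol n'=D\boldsymbol b/|D\boldsymbol b|=D\boldsymbol n/|D\boldsymbol n|$ with $\boldsymbol n=\boldsymbol b/|\boldsymbol b|$. So the claim is a statement about the triple $(D,\boldsymbol r,\boldsymbol n)$ alone, with $\boldsymbol n'$ defined as the normalized image $D\boldsymbol n/|D\boldsymbol n|$. I will assume $D\boldsymbol b\neq0$; the excluded case is the "erasing his own polarization direction" exception flagged in the introduction.

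Next I substitute $\tilde{\boldsymbol r}=O_2^T\boldsymbol r$ and $\tilde{\boldsymbol n}=O_2^T\boldsymbol n$, so that $|\tilde{\boldsymbol r}|\le1$ and $|\tilde{\boldsymbol n}|=1$. Then
\[
D\boldsymbol r\mp\boldsymbol n' = O_1\Big(K\tilde{\boldsymbol r}\mp\frac{K\tilde{\boldsymbol n}}{|K\tilde{\boldsymbol n}|}\Big),
\]
using $|D\boldsymbol n|=|K\tilde{\boldsymbol n}|$. Since $O_1$ is an isometry, the left side of Eq.~\eqref{eq:unital_ineq_target} equals the same expression with $D\to K$, $\boldsymbol r\to\tilde{\boldsymbol r}$, $\boldsymbol n\to\tilde{\boldsymbol n}$. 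Since $O_2^T$ is also an isometry, $|\boldsymbol r\mp\boldsymbol n|=|\tilde{\boldsymbol r}\mp\tilde{\boldsymbol n}|$, so the right side is invariant too. Hence the inequality for $(D,\boldsymbol r,\boldsymbol n)$ is exactly the inequality for $(K,\tilde{\boldsymbol r},\tilde{\boldsymbol n})$. As $(\boldsymbol r,\boldsymbol n)$ ranges over all admissible pairs, so does $(\tilde{\boldsymbol r},\tilde{\boldsymbol n})$, which gives the reduction.

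The only step needing care is that $\boldsymbol n'$ is defined through normalization, so it must transform covariantly. This follows because the normalization factor $|D\boldsymbol n|=|K O_2^T\boldsymbol n|$ is unaffected by the left factor $O_1$. A second point is that $\det O_i$ may be $-1$; this is harmless, since only Euclidean distances are used, so improper orthogonal matrices are allowed. I will also check that $K\tilde{\boldsymbol n}\neq0$, which is equivalent to $D\boldsymbol n\neq0$.
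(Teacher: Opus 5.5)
Your proposal is correct and is essentially the paper's proof: both take a real singular value decomposition $D=U\Sigma V^T$ (your $O_1KO_2^T$) and change variables by the right orthogonal factor, so the right-hand side of Eq.~\eqref{eq:unital_ineq_target} is unchanged, while the left factor drops out because the normalization $|D\boldsymbol n|=|\Sigma V^T\boldsymbol n|$ does not see it; like the paper, you allow improper orthogonal factors. Your remark that $K\tilde{\boldsymbol n}\neq0$ is equivalent to $D\boldsymbol b\neq0$ states explicitly a nondegeneracy condition that the paper's proof uses without comment.
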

 
\begin{proof}
Let $D=U\Sigma V^T$ be a singular value decomposition, with $U,V\in O(3)$ and $\Sigma=\mathrm{diag}(\kappa_1,\kappa_2,\kappa_3)$; $U,V$ are used here purely as the orthogonal matrices of this real singular value decomposition, with no physical role assigned to their determinant, so improper (reflection) cases are permitted and require no separate treatment. Complete positivity of a unital channel forces $\|D\|_{\mathrm{op}}\le1$, so each $\kappa_i\in[0,1]$. Substitute $\boldsymbol r=V\boldsymbol\rho$ and $\boldsymbol n=V\boldsymbol\nu$, both valid changes of variable since $V$ is orthogonal, so $|\boldsymbol\rho|=|\boldsymbol r|\le1$ and $|\boldsymbol\nu|=|\boldsymbol n|=1$. Then $D\boldsymbol r=U\Sigma\boldsymbol\rho$ and $D\boldsymbol n=U\Sigma\boldsymbol\nu$, and since $U$ preserves norms, $\boldsymbol n'=U\Sigma\boldsymbol\nu/|\Sigma\boldsymbol\nu|$. Because $U$ is orthogonal, $|U\boldsymbol x-U\boldsymbol y|=|\boldsymbol x-\boldsymbol y|$ for any two vectors, so the left side of Eq.~\eqref{eq:unital_ineq_target} becomes
\begin{equation}
\Big|\Sigma\boldsymbol\rho-\frac{\Sigma\boldsymbol\nu}{|\Sigma\boldsymbol\nu|}\Big|+\Big|\Sigma\boldsymbol\rho+\frac{\Sigma\boldsymbol\nu}{|\Sigma\boldsymbol\nu|}\Big| ,
\end{equation}
with $U$ eliminated entirely, while the right side, $|\boldsymbol\rho-\boldsymbol\nu|+|\boldsymbol\rho+\boldsymbol\nu|$, is unchanged since $V$ is orthogonal too. So Eq.~\eqref{eq:unital_ineq_target} for $(D,\boldsymbol r,\boldsymbol n)$ holds exactly when it holds for $(\Sigma,\boldsymbol\rho,\boldsymbol\nu)$.
\end{proof}
 
From here on write $D=\Sigma=\mathrm{diag}(\kappa_1,\kappa_2,\kappa_3)$ with $\kappa_i\in[0,1]$, drop the primes, and set $\boldsymbol\mu=\Sigma\boldsymbol\nu/|\Sigma\boldsymbol\nu|$, proving Eq.~\eqref{eq:unital_ineq_target} for $\Sigma,\boldsymbol\rho,\boldsymbol\nu$ in place of $D,\boldsymbol r,\boldsymbol n$. Two pieces control the left side once it is written this way, how much $\Sigma$ shrinks the length of $\boldsymbol\rho$, and how much it can tilt $\boldsymbol\rho$ relative to $\boldsymbol\mu$. The first is immediate, $\|\Sigma\|_{\mathrm{op}}\le1$ gives $|\Sigma\boldsymbol\rho|\le|\boldsymbol\rho|$ for every $\boldsymbol\rho$. The second, a bound on the component of $\Sigma\boldsymbol\rho$ transverse to $\boldsymbol\mu$, is the content of the next lemma.
 
\begin{lemma}[The adjugate controls the transverse component]
\label{lem:adjugate}
Let $\Sigma=\mathrm{diag}(\kappa_1,\kappa_2,\kappa_3)$ with $\kappa_i\in(0,1]$, and let $\boldsymbol\nu$ be a unit vector, with $\boldsymbol\mu=\Sigma\boldsymbol\nu/|\Sigma\boldsymbol\nu|$. Then for every $\boldsymbol\rho\in\mathbb R^3$,
\begin{equation}
|\Sigma\boldsymbol\rho\times\boldsymbol\mu| \;\le\; |\boldsymbol\rho\times\boldsymbol\nu| .
\label{eq:adjugate_claim}
\end{equation}
\end{lemma}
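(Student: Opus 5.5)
The plan is to rewrite both sides so that the unit vector $\boldsymbol\mu$ disappears, and then to remove the component of $\boldsymbol\rho$ along $\boldsymbol\nu$. Once that component is gone, the claim reduces to the elementary bound $|\boldsymbol x\times\boldsymbol y|\le|\boldsymbol x||\boldsymbol y|$ together with $\|\Sigma\|_{\mathrm{op}}\le1$.

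\emph{Step 1: clear the normalization.} The hypothesis $\kappa_i>0$ makes $\Sigma$ invertible, so $\Sigma\boldsymbol\nu\neq0$ and $\boldsymbol\mu$ is well defined. Multiplying Eq.~\eqref{eq:adjugate_claim} by $|\Sigma\boldsymbol\nu|>0$ shows it is equivalent to
\begin{equation}
|\Sigma\boldsymbol\rho\times\Sigma\boldsymbol\nu|\;\le\;|\Sigma\boldsymbol\nu|\,|\boldsymbol\rho\times\boldsymbol\nu| .
\end{equation}
This is also where the adjugate of the lemma's title enters. The identity $(\Sigma\boldsymbol x)\times(\Sigma\boldsymbol y)=\mathrm{adj}(\Sigma)^T(\boldsymbol x\times\boldsymbol y)$ expresses the left side as $|\mathrm{diag}(\kappa_2\kappa_3,\kappa_1\kappa_3,\kappa_1\kappa_2)(\boldsymbol\rho\times\boldsymbol\nu)|$. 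I do not expect to need this form, but it makes clear that both sides depend on $\boldsymbol\rho$ only through $\boldsymbol\rho\times\boldsymbol\nu$.

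\emph{Step 2: remove the component along $\boldsymbol\nu$.} Decompose $\boldsymbol\rho=t\boldsymbol\nu+\boldsymbol\rho_\perp$ with $\boldsymbol\rho_\perp\perp\boldsymbol\nu$.
\begin{itemize}
\item The right side is unchanged, since $\boldsymbol\rho\times\boldsymbol\nu=\boldsymbol\rho_\perp\times\boldsymbol\nu$.
\item The left side is unchanged, since $\Sigma\boldsymbol\rho\times\Sigma\boldsymbol\nu=\Sigma\boldsymbol\rho_\perp\times\Sigma\boldsymbol\nu+t\,\Sigma\boldsymbol\nu\times\Sigma\boldsymbol\nu$ and the last term vanishes.
\end{itemize}
We may therefore assume $\boldsymbol\rho\perp\boldsymbol\nu$. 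Because $\boldsymbol\nu$ is a unit vector, this gives $|\boldsymbol\rho\times\boldsymbol\nu|=|\boldsymbol\rho|$.

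\emph{Step 3: conclude.} With $\boldsymbol\rho\perp\boldsymbol\nu$,
\begin{equation}
|\Sigma\boldsymbol\rho\times\Sigma\boldsymbol\nu|\le|\Sigma\boldsymbol\rho|\,|\Sigma\boldsymbol\nu|\le|\boldsymbol\rho|\,|\Sigma\boldsymbol\nu|=|\Sigma\boldsymbol\nu|\,|\boldsymbol\rho\times\boldsymbol\nu| .
\end{equation}
The second inequality uses $\|\Sigma\|_{\mathrm{op}}=\max_i\kappa_i\le1$. This is exactly the inequality from Step 1, and dividing by $|\Sigma\boldsymbol\nu|$ recovers Eq.~\eqref{eq:adjugate_claim}.

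The only step I expect to require care is the first. Bounding $|\Sigma\boldsymbol\rho\times\boldsymbol\mu|$ directly by $|\Sigma\boldsymbol\rho|$ is too weak, because $|\boldsymbol\rho\times\boldsymbol\nu|$ can be much smaller than $|\boldsymbol\rho|$. The gain comes from multiplying through by $|\Sigma\boldsymbol\nu|$, which puts the right side in the form $|\Sigma\boldsymbol\nu||\boldsymbol\rho\times\boldsymbol\nu|$. The shift invariance of Step 2 then turns $|\boldsymbol\rho\times\boldsymbol\nu|$ into $|\boldsymbol\rho|$ exactly.

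If a sharper statement is needed later, for instance to characterize equality, the adjugate form from Step 1 together with $\boldsymbol\rho\times\boldsymbol\nu\perp\boldsymbol\nu$ would be the natural starting point.
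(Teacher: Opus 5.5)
Your proof is correct, and it takes a genuinely different and much more elementary route than the paper.

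\textbf{The paper's route.} The paper builds everything on the adjugate identity $\Sigma\boldsymbol\rho\times\Sigma\boldsymbol\nu=\Sigma^{\mathrm{adj}}(\boldsymbol\rho\times\boldsymbol\nu)$. It reduces the claim to $\max_{\boldsymbol w\perp\boldsymbol\nu,\,|\boldsymbol w|=1}|\Sigma^{\mathrm{adj}}\boldsymbol w|\le|\Sigma\boldsymbol\nu|$. With $A=\Sigma^2$ and $\mu_{\max},\mu_{\min}$ the eigenvalues of $A^{-1}$ compressed to $\boldsymbol\nu^\perp$, this is equivalent to $\det(A)\,\mu_{\max}\le\boldsymbol\nu^TA\boldsymbol\nu$. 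The paper proves it with Cauchy interlacing ($\mu_{\min}\ge1$) plus a Schur-complement secular equation and Vieta ($\mu_{\max}\mu_{\min}=\boldsymbol\nu^TA\boldsymbol\nu/\det A$).

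\textbf{Your route.} You observe that the left side ignores the $\boldsymbol\nu$-component of $\boldsymbol\rho$, because $\Sigma\boldsymbol\nu\parallel\boldsymbol\mu$. After that you only need $|\boldsymbol x\times\boldsymbol y|\le|\boldsymbol x||\boldsymbol y|$ and $\|\Sigma\|_{\mathrm{op}}\le1$. Step~2 carries the whole argument, and Step~1 can be dropped. This contradicts your closing remark that the gain comes from multiplying through by $|\Sigma\boldsymbol\nu|$. Directly, $|\Sigma\boldsymbol\rho\times\boldsymbol\mu|=|\Sigma\boldsymbol\rho_\perp\times\boldsymbol\mu|\le|\Sigma\boldsymbol\rho_\perp|\le|\boldsymbol\rho_\perp|=|\boldsymbol\rho\times\boldsymbol\nu|$.

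\textbf{What your route buys.} Your argument never uses diagonality or $\kappa_i>0$, only $\|\Sigma\|_{\mathrm{op}}\le1$ and $\Sigma\boldsymbol\nu\neq0$. So it holds verbatim for any real $D$ with $\|D\|_{\mathrm{op}}\le1$ and $D\boldsymbol\nu\neq0$. Inside the proof of Theorem~\ref{thm:unital_no_help}, this would make two steps unnecessary, since $D\boldsymbol b\neq0$ is exactly that theorem's hypothesis:
\begin{itemize}
\item the singular-value reduction of Lemma~\ref{lem:unital_reduction};
\item the continuity argument used for vanishing $\kappa_i$.
\end{itemize}

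\textbf{What the paper's route buys.} The heavier machinery gives the sharp constant. Combining the paper's identities yields $\sup_{\boldsymbol\rho}|\Sigma\boldsymbol\rho\times\boldsymbol\mu|/|\boldsymbol\rho\times\boldsymbol\nu|=\mu_{\min}^{-1/2}$ exactly. This can be much smaller than what your chain certifies, which is at best the norm of $\Sigma$ restricted to $\boldsymbol\nu^\perp$. The loss comes from your last step, which discards the component of $\Sigma\boldsymbol\rho_\perp$ along $\boldsymbol\mu$. That would matter for a quantitative strictness statement, but not for the inequality as stated.
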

 
\begin{proof}
The proof is somewhat long and is given in Appendix~\ref{app:adjugate}.
\end{proof}
 
With the radial bound $|\Sigma\boldsymbol\rho|\le|\boldsymbol\rho|$ and the transverse bound of Lemma~\ref{lem:adjugate} both in hand, the main result of this subsection follows.
 
\begin{theorem}[Unital channels on Bob cannot increase steered KD nonclassicality]
\label{thm:unital_no_help}
For every two-qubit state $\rho_{AB}$ with $\boldsymbol b\ne0$ and every unital channel $\Lambda_B$ acting on Bob alone ($D$ with $\|D\|_{\mathrm{op}}\le1$, $\boldsymbol t=0$) satisfying $D\boldsymbol b\ne0$, so that the reference direction $\boldsymbol n'=D\boldsymbol b/|D\boldsymbol b|$ used to evaluate $\mathcal N_{KD}$ after the channel remains well defined,
\begin{align}
\mathcal R_{KD}\big((\mathbb I_A\otimes\Lambda_B)\rho_{AB}\big) &\le \mathcal R_{KD}(\rho_{AB}) , \\
\mathcal C_{KD}\big((\mathbb I_A\otimes\Lambda_B)\rho_{AB}\big) &\le \mathcal C_{KD}(\rho_{AB}) .
\end{align}
Equality holds throughout, for the fully optimized quantities $\mathcal R_{KD}$, $\mathcal C_{KD}$ themselves, when $D$ is orthogonal, recovering Corollary~\ref{cor:unitary_equality}. More generally, for any unital $D$, the pointwise inequality underlying this theorem is saturated on any individual conditional branch whose steered Bloch vector $\boldsymbol r$ vanishes or is parallel or antiparallel to $\boldsymbol n$; equality of the optimized quantities themselves follows from this branchwise saturation whenever an optimal measurement can be chosen all of whose relevant branches satisfy that alignment condition.
\end{theorem}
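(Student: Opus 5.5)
The proof reduces to the pointwise inequality Eq.~\eqref{eq:unital_ineq_target}, and I will then lift it to $\mathcal R_{KD}$ and $\mathcal C_{KD}$.

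\emph{Pointwise bound.} By Lemma~\ref{lem:unital_reduction} it suffices to treat $D=\Sigma=\mathrm{diag}(\kappa_1,\kappa_2,\kappa_3)$ acting on $\boldsymbol\rho,\boldsymbol\nu$, with $\boldsymbol\mu=\Sigma\boldsymbol\nu/|\Sigma\boldsymbol\nu|$.

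\begin{itemize}
\item[(a)] The function $f(\boldsymbol r;\boldsymbol n)=|\boldsymbol r-\boldsymbol n|+|\boldsymbol r+\boldsymbol n|$ depends on $\boldsymbol r$ only through $r_\parallel=\boldsymbol r\cdot\boldsymbol n$ and $r_\perp=|\boldsymbol r\times\boldsymbol n|$. Explicitly,
\[
f=\sqrt{(1-r_\parallel)^2+r_\perp^2}+\sqrt{(1+r_\parallel)^2+r_\perp^2}.
\]
\item[(b)] By the computation in Lemma~\ref{lem:phi_reduction}, $f$ is strictly increasing in $r_\perp$ at fixed $r_\parallel$.
\item[(c)] Rewrite $f$ in terms of $R^2=r_\parallel^2+r_\perp^2$ and $r_\perp$:
\[
f=\sqrt{1+R^2-2r_\parallel}+\sqrt{1+R^2+2r_\parallel},\qquad r_\parallel=\sqrt{R^2-r_\perp^2}.
\]
By concavity of the square root, this is decreasing in $|r_\parallel|$ at fixed $R$. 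At fixed $r_\perp$, increasing $R$ increases $|r_\parallel|$ but also increases $1+R^2$. Differentiating directly, $\partial f/\partial(R^2)$ at fixed $r_\perp$ equals $\tfrac12\big[(1-r_\parallel)^{-1}\ldots\big]$-type terms. The sum $(1+r_\parallel)/X_+ + (1-r_\parallel)/X_-$ shows this is $\ge0$ whenever $R\le1$.
\end{itemize}

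So $f$ is nondecreasing in both $R$ (at fixed $r_\perp$, within the Bloch ball) and $r_\perp$ (at fixed $R$, equivalently at fixed $r_\parallel$ by (b)). Then I chain the inequalities. After the channel the new point $\Sigma\boldsymbol\rho$ has radius $|\Sigma\boldsymbol\rho|\le|\boldsymbol\rho|$ by $\|\Sigma\|_{\mathrm{op}}\le1$. Its transverse component is $|\Sigma\boldsymbol\rho\times\boldsymbol\mu|\le|\boldsymbol\rho\times\boldsymbol\nu|$ by Lemma~\ref{lem:adjugate}. Monotonicity in $R$ and $r_\perp$ (admissible since $r_\perp\le R\le1$ throughout) gives $f(\Sigma\boldsymbol\rho;\boldsymbol\mu)\le f(\boldsymbol\rho;\boldsymbol\nu)$.

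\emph{Degenerate singular values.} Some $\kappa_i$ may vanish, which Lemma~\ref{lem:adjugate} excludes. I would handle this by continuity, replacing $\kappa_i$ by $\kappa_i+\varepsilon$. This keeps $\Sigma\boldsymbol\nu\neq0$ (guaranteed by $D\boldsymbol b\ne0$), and then I let $\varepsilon\to0$, using continuity of $f$ and of $\boldsymbol\mu$.

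\emph{Lifting to $\mathcal R_{KD}$ and $\mathcal C_{KD}$.} By Lemma~\ref{lem:ellipsoid_transform}, each outcome of every Alice POVM yields the branch $D\boldsymbol r(\boldsymbol m)$, and $p(\boldsymbol m)$ is unchanged since $\boldsymbol a$ is untouched. The reference axis moves to $\boldsymbol n'=D\boldsymbol b/|D\boldsymbol b|$, which is exactly $\boldsymbol\mu$ with $\boldsymbol\nu=\boldsymbol n_B$. Via Eq.~\eqref{eq:NKD_geometric}, every branch value $\mathcal N_{KD}$ therefore weakly decreases. For a fixed POVM, both the single-branch maximum and the probability-weighted average decrease. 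Taking the supremum over POVMs, which is the same set before and after the channel, gives both inequalities.

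\emph{Equality cases.} The orthogonal case is Corollary~\ref{cor:unitary_equality}. For branchwise saturation: if $\boldsymbol r=0$, both sides equal $2$. If $\boldsymbol r\parallel\boldsymbol n$, then $r_\perp=0$ and $f=2$ as long as $|r_\parallel|\le1$. After the channel, $\Sigma\boldsymbol\rho\parallel\boldsymbol\mu$, so $f=2$ again. The optimized-equality statement then follows immediately when all optimal branches are aligned.

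\emph{Main obstacle.} I expect the hardest step to be the monotonicity of $f$ in $R$ at fixed $r_\perp$ within the ball. If the direct derivative argument proves messy, I would instead use the ellipse picture: the level sets of $f$ are prolate spheroids with foci $\pm\boldsymbol n$. I would then show that the region $\{R'\le R,\ r_\perp'\le r_\perp\}$ lies inside the spheroid through the original point, using that the spheroid contains the ball of radius $R$ whenever $R\le1$.
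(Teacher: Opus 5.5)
Your overall architecture is the paper's: diagonal reduction via Lemma~\ref{lem:unital_reduction}, the radial bound $|\Sigma\boldsymbol\rho|\le|\boldsymbol\rho|$, the transverse bound of Lemma~\ref{lem:adjugate}, two-variable monotonicity of $f$, and a branch-by-branch lift with unchanged outcome probabilities. Your pointwise $\varepsilon$-perturbation for vanishing $\kappa_i$ is, if anything, cleaner than the paper's continuity remark.

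The gap is in the one step that carries content beyond the cited lemmas: monotonicity of $f$ in $R$ at fixed $r_\perp$. You rightly flag it as the crux, but neither of your routes establishes it. With $X_\pm=\sqrt{(1\pm r_\parallel)^2+r_\perp^2}$ and $r_\parallel=\sqrt{R^2-r_\perp^2}$, the derivative is
\[
\frac{\partial f}{\partial(R^2)}\Big|_{r_\perp}=\frac{1}{2r_\parallel}\Big[\frac{1+r_\parallel}{X_+}-\frac{1-r_\parallel}{X_-}\Big].
\]
This is a \emph{difference}, so positivity of the sum $(1+r_\parallel)/X_++(1-r_\parallel)/X_-$ proves nothing. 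Your spheroid fallback rests on a false containment. The region $\{f\le f(\boldsymbol r)\}$ contains the ball of radius $R=|\boldsymbol r|$ only when $\boldsymbol r\perp\boldsymbol n$, because the maximum of $f$ on that sphere is $2\sqrt{1+R^2}$, attained only on the equator. For $\boldsymbol r\parallel\boldsymbol n$ the region collapses to the segment $[-\boldsymbol n,\boldsymbol n]$.

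The claim itself is true, and the repair is one line: $(1+r_\parallel)^2X_-^2-(1-r_\parallel)^2X_+^2=4r_\parallel r_\perp^2\ge0$, and the bracket is trivially positive when $r_\parallel\ge1$. So the derivative is nonnegative for every $R$, not only $R\le1$. The paper avoids the calculus by squaring: with $p=R^2$, $q=r_\perp^2$,
\[
f^2=2(1+p)+2\sqrt{(1-p)^2+4q},
\]
which is visibly increasing in $q$, and $\partial_p f^2=2-2(1-p)/\sqrt{(1-p)^2+4q}\ge0$ because $\sqrt{(1-p)^2+4q}\ge|1-p|$.

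There are also two smaller slips in the chaining.
\begin{itemize}
\item Monotonicity in $r_\perp$ at fixed $R$ is not equivalent to the fixed-$r_\parallel$ statement (b). The fixed-$R$ version is what the chain needs, and it follows from your concavity remark in (c). Statement (b) alone could not do the job, because $|r_\parallel|$ can grow under a unital contraction. For $\Sigma=\mathrm{diag}(1,\varepsilon,\varepsilon)$, $\boldsymbol\nu=(1,1,0)/\sqrt2$ and $\boldsymbol\rho=R\,\hat x$, one has $r_\parallel=R/\sqrt2$ but $r'_\parallel=R/\sqrt{1+\varepsilon^2}$.
\item Your $f(R,r_\perp)$ is defined only for $r_\perp\le R$, so the chain must run $(R_1,r_{\perp1})\to(R_2,r_{\perp1})\to(R_2,r_{\perp2})$. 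The intermediate point is admissible because $r_{\perp1}\le R_1\le R_2$; the reverse order can leave the domain. In the paper's $(p,q)$ form the function is monotone on all of $[0,\infty)^2$, so this bookkeeping disappears.
\end{itemize}
With these repairs, your lift to $\mathcal R_{KD}$ and $\mathcal C_{KD}$ and your treatment of the equality cases go through as written.
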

 
\begin{proof}
The proof is somewhat long and is given in Appendix~\ref{app:unital_no_help}.
\end{proof}

Theorem~\ref{thm:unital_no_help} settles, for the entire unital class rather than only its orthogonal extreme, the question raised after Corollary~\ref{cor:unitary_equality}: whenever $D\boldsymbol b\ne0$, a unital channel on Bob never increases either steered KD quantifier. Equality of $\mathcal R_{KD}$ and $\mathcal C_{KD}$ themselves is guaranteed for unitary channels; for a general unital channel, what is guaranteed is only the weaker, branchwise statement that the pointwise bound of Eq.~\eqref{eq:unital_ineq_target} is saturated whenever the corresponding steered Bloch vector is aligned with the reference axis, and this yields equality of the optimized quantities only if an optimal measurement can be found whose relevant branch or branches all satisfy that alignment; outside these cases the reduction is generically, but not universally, strict. An increase, when it occurs, must therefore come from a non-unital channel or from the degenerate case $D\boldsymbol b=0$ excluded above.

When $D\boldsymbol b=0$, Bob's reduced state after the channel is exactly maximally mixed and the construction of Sec.~\ref{sec:twoqubitKD} has no distinguished reference basis to optimize the second basis against; this is the same degeneracy already noted for $\boldsymbol b=0$ at the end of Sec.~\ref{sec:twoqubitKD}A, and it must be resolved the same way, by fixing an auxiliary direction or by optimizing over it, before $\mathcal R_{KD}$ or $\mathcal C_{KD}^{(2)}$ after such a channel are even defined. Theorem~\ref{thm:unital_no_help} does not cover this case, which we leave open.

As an independent check on this proof, given its reliance on the adjugate identity of Lemma~\ref{lem:adjugate} and the two-variable monotonicity of $F$, we verified the resulting inequality directly and numerically: over $1500$ random valid two-qubit states $\rho_{AB}$ (random $\boldsymbol a,\boldsymbol b,T$ subject to $\rho_{AB}\succeq0$) and random unital single-qubit channels $D$ on Bob (each a random convex mixture of Haar-random $SO(3)$ rotations, which is automatically unital and completely positive), with $D\boldsymbol b\ne0$ enforced throughout, both $\mathcal R_{KD}$ and $\mathcal C_{KD}^{(2)}$, evaluated at the channel-updated marginal-adapted reference direction $D\boldsymbol b/|D\boldsymbol b|$, were found to never exceed their pre-channel values; no violation was found in any instance.

\subsection{Non-unital channels can help: amplitude damping}

Amplitude damping toward $\ket0$ with strength $\gamma\in[0,1]$ has canonical Bloch parameters $D=\mathrm{diag}(\sqrt{1-\gamma},\sqrt{1-\gamma},1-\gamma)$, $\boldsymbol t=(0,0,\gamma)$. We apply it, in its own canonical frame, to a two-qubit state with $\boldsymbol a=0$, $T=\mathrm{diag}(t_1,0,t_3)$, and $\boldsymbol b=b(\sin\mu,0,\cos\mu)$ tilted by angle $\mu$ away from the damping axis. Since $\boldsymbol a=0$ and $t_2=0$, Lemma~\ref{lem:ellipsoid_transform} keeps the transformed state within the same planar family, with
\begin{align}
&b'\sin\mu' = \sqrt{1-\gamma}\;b\sin\mu, \\
&b'\cos\mu' = (1-\gamma)\,b\cos\mu+\gamma, \\
&t_1' = \sqrt{1-\gamma}\;t_1, \quad t_3' = (1-\gamma)\,t_3,
\end{align}
and, since the ellipsoid stays planar, $\mathcal R_{KD}$ reduces to a single-parameter maximization exactly as in Sec.~\ref{sec:examples}. For $b=0.3$, $t_1=0.6$, $t_3=0.2$, the resulting density matrix is positive semidefinite for every $\mu\in[0,\pi/2]$, with smallest eigenvalue $\approx0.0099$ at the tightest point of the range, so this is a valid state throughout; Table~\ref{tab:amp_damp} reports $\mathcal R_{KD}$ before and after damping at $\gamma=0.3$.

\begin{table}[ht]
\begin{ruledtabular}
\centering
\begin{tabular}{cccc}
$\mu$ & $\mathcal R_{KD}$ (before) & $\mathcal R_{KD}$ (after) & \\
\hline
$0^\circ$ & 0.0886 & 0.0735 & decrease \\
$30^\circ$ & 0.0850 & 0.0774 & decrease \\
$45^\circ$ & 0.0721 & 0.0761 & increase \\
$60^\circ$ & 0.0525 & 0.0726 & increase \\
$75^\circ$ & 0.0285 & 0.0669 & increase \\
$90^\circ$ & 0.0115 & 0.0597 & increase \\
\end{tabular}
\caption{\justifying \small Maximal steered KD nonclassicality before and after amplitude damping ($\gamma=0.3$) misaligned by angle $\mu$ from Bob's own polarization axis. Both columns are the marginal-adapted quantity of Sec.~\ref{sec:local_ops}B: the ``before'' column uses the reference direction $\boldsymbol n_B=\boldsymbol b/|\boldsymbol b|$ fixed by the initial marginal, and the ``after'' column uses the channel-updated reference direction $\boldsymbol n_B'=\boldsymbol b'/|\boldsymbol b'|$ fixed by the post-channel marginal, not a single reference direction held fixed across both columns. The dichotomic-projective average $\mathcal C_{KD}^{(2)}$ (not tabulated) shows the same qualitative crossover; unlike the $a_z=b_z=0$ states of Sec.~\ref{sec:examples}, here $\boldsymbol a=0$ but $\boldsymbol b\ne0$, so Proposition~\ref{prop:belldiag_equal} does not apply and equality with the fully optimized $\mathcal C_{KD}$ is not established.}
\label{tab:amp_damp}
\end{ruledtabular}
\end{table}

Numerically, the crossover appears sharp, occurring at a single threshold angle $\mu^\ast(\gamma)$ for fixed $\gamma$: below it, damping only hurts; above it, damping helps, increasingly so as $\mu\to90^\circ$, where the original state is nearly quantum-classical ($\mathcal R_{KD}=0.0115$, close to the exact two-qubit incapability boundary $\mathcal R_{KD}=0$ of Corollary~\ref{cor:vanishing_iff}) yet damping alone lifts it more than fivefold. Locating $\mu^\ast(\gamma)$ numerically for this state gives a smooth, monotonically increasing curve,
\begin{align}
\mu^\ast(0.05)=28.9^\circ,\quad \mu^\ast(0.15)=33.7^\circ, \nonumber\\
\mu^\ast(0.30)=40.5^\circ,\quad \mu^\ast(0.45)=47.1^\circ,
\end{align}
so that stronger damping requires more misalignment before it helps, but the required misalignment stays well within reach ($\mu^\ast<50^\circ$ throughout this range). Theorem~\ref{thm:unital_no_help} already establishes analytically, whenever $D\boldsymbol b\ne0$, that no unital channel on Bob can increase $\mathcal R_{KD}$ or the fully optimized $\mathcal C_{KD}$; any increase must come from a channel outside the unital class.
The amplitude-damping example above is one such channel ($\boldsymbol t\ne0$), and shows concretely that the increase is not merely possible but occurs over an explicit, analytically specified family once the misalignment angle $\mu$ exceeds a threshold $\mu^*(\gamma)$.

\section{Application: remote activation of anomalous weak values}
\label{sec:application}

The introduction motivated KD nonclassicality in part
through its role in weak-value amplification: a weak measurement of
an observable $A$ on a system prepared in state $\rho$, followed by
projective postselection onto an outcome $b$ of a second observable
$B$, yields a pointer shift governed by the weak value
\begin{equation}
A_w(b) = \frac{\Tr[\Pi_b A \rho]}{\Tr[\Pi_b \rho]}.
\label{eq:weak_value_def}
\end{equation}
When $A_w(b)$ lies outside the spectrum of $A$, or acquires a
nonzero imaginary part, the weak value is called \emph{anomalous};
such anomalies underlie in metrological advantages 
~\cite{arvidsson2020quantum}, and are
themselves witnesses of contextuality~\cite{pusey2014anomalous}. It is known that
KD-nonclassicality of the specific triple $(\rho, A\text{-basis},
B\text{-basis})$ is necessary for an anomaly to arise at
all~\cite{pusey2014anomalous, arvidsson2021conditions}: if every $Q_{ik}(\rho)$ for that basis pair is real
and non-negative, the weak value reduces to an ordinary conditional
expectation, real-valued and confined to the convex hull of the
eigenvalues of $A$.

This section shows that our framework gives a complete, quantitative
account of when this resource is and is not available to Bob, for
the natural choice $A = \boldsymbol n_B\cdot\boldsymbol\sigma$, the
observable whose eigenbasis defines his own state's KD reference
basis.

\subsection{No anomaly for the reference weak-measurement observable}

\begin{proposition}
\label{prop:no_anomaly_before}
If $\mathcal N_{KD}(\rho_B)=0$, then $Q_{ik}(\rho_B)$ is real and
non-negative for the reference basis $\{\ket{a_i}\}$ (the eigenbasis
of $\rho_B$) paired with \emph{every} second basis $\{\ket{b_k}\}$.
Consequently, a weak measurement of
$A=\boldsymbol n_B\cdot\boldsymbol\sigma$ on $\rho_B$, followed by
projective postselection in any basis whatsoever, never yields an
anomalous weak value.
\end{proposition}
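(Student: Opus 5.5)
The plan is to work directly with the explicit form of the KD distribution when the first basis is the eigenbasis of $\rho_B$, rather than through the optimized closed form. Write $\rho_B=\sum_i \lambda_i\ket{a_i}\bra{a_i}$ with $\lambda_i\ge0$, which is always possible since the reference basis is, by the convention of Sec.~\ref{sec:properties}, the eigenbasis of $\rho_B$ itself. For an arbitrary second orthonormal basis $\{\ket{b_k}\}$, Eq.~\eqref{Eq:KD} then gives
\[
Q_{ik}(\rho_B)=\langle b_k|a_i\rangle\bra{a_i}\rho_B\ket{b_k}=\lambda_i\,|\langle a_i|b_k\rangle|^2 ,
\]
which is manifestly real and non-negative for every $i,k$ and every choice of $\{\ket{b_k}\}$. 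This gives the first claim with no appeal to the hypothesis beyond the choice of reference basis. It is also consistent with the two-qubit formula of Theorem~\ref{thm:sigma_max}: $\rho_B$ has $r_\perp=0$ relative to $\boldsymbol n_B$, so $\mathcal N_{KD}(\rho_B)=0$. Equivalently, from $\mathcal N_{KD}(\rho_B)=0$ as a maximum over second bases, $\sum_{ik}|Q_{ik}|=1=\sum_{ik}Q_{ik}$ for every second basis, which forces $|Q_{ik}|=Q_{ik}$ termwise. I will mention this alternative route as a cross-check.

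For the weak-value consequence, I take $A=\boldsymbol n_B\cdot\boldsymbol\sigma=\sum_i \alpha_i\Pi^a_i$ with eigenvalues $\alpha_i=\pm1$ and $\Pi_i^a$ the reference projectors. For any postselection projector $\Pi_b=\Pi^b_k$ from any basis, Eq.~\eqref{eq:weak_value_def} becomes
\[
A_w(b_k)=\frac{\sum_i\alpha_i\Tr[\Pi^b_k\Pi^a_i\rho_B]}{\Tr[\Pi^b_k\rho_B]}=\frac{\sum_i\alpha_i Q_{ik}(\rho_B)}{\sum_i Q_{ik}(\rho_B)} .
\]
Here I use linearity of the trace and the marginal relation $\sum_iQ_{ik}=\Tr(\Pi^b_k\rho)$. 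Since the $Q_{ik}$ are non-negative reals, this is a convex combination of the eigenvalues $\alpha_i$. It is therefore real and lies in $[-1,1]$, the convex hull of the spectrum, so it is not anomalous. Only outcomes with $\Tr[\Pi^b_k\rho_B]>0$ define a weak value, and I restrict to those.

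I would also remark that "any basis whatsoever" covers rank-one projective postselection in arbitrary dimension. The same computation extends verbatim to a general postselection effect $E\succeq0$, because $\Tr[E\Pi^a_i\rho_B]=\lambda_i\bra{a_i}E\ket{a_i}\ge0$. No step here is a genuine obstacle. The only point requiring care is stating explicitly that the reference basis is the eigenbasis of $\rho_B$ itself, since that is what makes $\Pi^a_i\rho_B=\lambda_i\Pi^a_i$. For a different fixed reference basis, the claim would fail.
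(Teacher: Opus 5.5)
Your proof is correct, but your main route differs from the paper's. The paper never uses that the reference basis diagonalizes $\rho_B$. It argues from the hypothesis alone: $\Sigma=\sum_{i,k}|Q_{ik}|\ge|\sum_{i,k}Q_{ik}|=1$ for every second basis, so vanishing of the optimized $\mathcal N_{KD}(\rho_B)$ forces $\Sigma=1$ for every second basis. Equality in the triangle inequality, with a positive real sum, then makes each $Q_{ik}$ real and non-negative. The weak-value claim is obtained by citing the known necessity of KD nonclassicality for anomalous weak values. That is exactly your ``cross-check'' route, and your termwise ending, $\sum_{i,k}(|Q_{ik}|-\mathrm{Re}\,Q_{ik})=0$ with non-negative summands, is a slightly cleaner way to finish it.

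Your primary argument instead computes $Q_{ik}(\rho_B)=\lambda_i|\langle a_i|b_k\rangle|^2$ directly and writes $A_w(b_k)$ explicitly as a convex combination of the eigenvalues $\pm1$. This gains several things:
\begin{itemize}
\item It is self-contained, with no appeal to the external necessity theorem.
\item It excludes real and imaginary anomalies in one line.
\item It extends immediately to arbitrary postselection effects $E\succeq0$.
\item It shows the hypothesis $\mathcal N_{KD}(\rho_B)=0$ is automatically satisfied under the paper's marginal-adapted convention. This holds even in the $\boldsymbol b=0$ case, where $\rho_B=\mathbb I/2$ is diagonal in any auxiliary basis, so the proposition's ``if'' is never a real restriction there.
\end{itemize}
The paper's route gains generality in the other direction. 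It uses only the vanishing of the optimized functional, so it applies verbatim to any fixed reference basis and any state for which that functional vanishes, without knowing in advance that the state is diagonal in that basis. For qubits the two conditions coincide anyway, since by Theorem~\ref{thm:sigma_max} the optimized $\mathcal N_{KD}$ vanishes iff $r_\perp=0$.
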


\begin{proof}
By definition, $\mathcal N_{KD}(\rho_B)=\max_{\{b_k\}}\tfrac12(\Sigma-1)$,
where the maximum runs over all second bases. For every choice of the second basis,
\[
\Sigma=\sum_{i,k}|Q_{ik}|
\geq \left|\sum_{i,k}Q_{ik}\right|=1,
\]
by the triangle inequality and normalization of the KD distribution.
If the optimized value $\mathcal N_{KD}(\rho_B)=\max_{\{b_k\}}(\Sigma-1)/2$ vanishes, then $\Sigma\leq1$ for every second basis. Hence $\Sigma=1$ for every second basis. Equality in the triangle inequality occurs only when all KD elements have a common phase; since their sum is the positive real number $1$, every $Q_{ik}$ must therefore
be real and non-negative.
The claim about weak values then follows from the cited necessity of KD-nonclassicality for anomaly~\cite{pusey2014anomalous}, applied to each second basis in turn.
\end{proof}

Since $\mathcal N_{KD}(\rho_B)=0$ is the standing hypothesis of this entire framework, Proposition~\ref{prop:no_anomaly_before} applies throughout: for the reference observable $A=\boldsymbol n_B\cdot\boldsymbol\sigma$ used in this construction, Bob's unsteered reduced state cannot produce an anomalous weak value for any projective postselection basis. This conclusion is specific to the reference observable $A=\boldsymbol n_B\cdot\boldsymbol\sigma$ singled out by the marginal-adapted KD construction of Sec.~\ref{sec:twoqubitKD}; we make no claim about weak measurements of an arbitrary observable unrelated to that reference direction.

\subsection{Steering unlocks the anomaly}

Alice's measurement changes this. 
Evaluating Eq.~\eqref{eq:weak_value_def} at the optimal second basis of identified in Theorem~\ref{thm:sigma_max} ($\theta=\pi/2,\psi=\pi/2$, in the frame where the transverse part of the steered Bloch vector $\boldsymbol r$ lies along $\hat x$), a direct computation from Lemma~\ref{lem:qik} gives, for the two postselection outcomes,
\begin{equation}
A_w = r_\parallel \pm i\, r_\perp,
\label{eq:weak_value_optimal}
\end{equation}
a complex-conjugate pair, which we confirmed numerically against
Eq.~\eqref{eq:weak_value_def} directly for a range of $r_\parallel,r_\perp$.
The real part, $r_\parallel$, is always confined to $[-1,1]$ and so
never anomalous in the usual real-valued sense; but the imaginary
part, $\pm r_\perp$, is exactly the transverse reach of the steered
Bloch vector, and is nonzero whenever, and only whenever, $r_\perp>0$
— that is, whenever the steered state carries KD nonclassicality at
all. 

Equation~\eqref{eq:weak_value_optimal} also gives Eq.~\eqref{eq:qubit_KD} a direct operational reading: writing the weak value as a point $A_w$ in the
complex plane,
\begin{equation}
\mathcal N_{KD}(\rho(\boldsymbol r)) = \frac14\Big(|1-A_w| + |1+A_w|\Big) - \frac12,
\label{eq:NKD_weakvalue}
\end{equation}
identical in form to the geometric formula of Eq.~\eqref{eq:NKD_geometric}, with the
real Bloch-sphere poles $\pm\boldsymbol n_B$ now replaced by the real
eigenvalues $\pm1$ of $A$ and the Bloch vector $\boldsymbol r$
replaced by the complex weak value $A_w$. As in the geometric form of
Eq.~\eqref{eq:NKD_geometric} (where $\mathcal N_{KD}$ is, up to the same
affine rescaling, the sum of Euclidean distances from the steered point to
the two poles $\pm\boldsymbol n$), the maximal steered KD
nonclassicality $\mathcal R_{KD}$ is therefore, at the optimal
postselection basis, exactly the maximal sum of distances Alice can
remotely place the weak value $A_w$ from the two eigenvalue points $\pm1$,
optimized over her measurement.

\subsection{Worked examples, revisited}

The states already solved in Sec.~\ref{sec:twoqubitKD} illustrate this concretely. For
the Werner state $\rho_W=p\ket{\Psi^-}\!\bra{\Psi^-}+(1-p)\mathbb I/4$,
the optimal steering point has $r_\parallel=0$, $r_\perp=p$ (Sec.~\ref{sec:twoqubitKD}~F,
Werner case), so Eq.~\eqref{eq:weak_value_optimal} gives $A_w=\pm ip$:
a purely imaginary anomalous weak value, growing linearly with the
entanglement parameter $p$ and reaching $A_w=\pm i$ at $p=1$, where
Bob's steered state is pure and equatorial. The biased family
$\rho_{AB}(q)$ of Table~I exhibits the same phenomenon at its own
gap-maximizing point, $q^\ast=0.45073$: the optimal steering angle
satisfies $\cos\theta^\ast_R = q^\ast-1$ exactly, which sets
$r_\parallel=0$ identically, leaving $r_\perp=0.53938$ and hence
\begin{equation}
A_w = \pm\, 0.53938\, i,
\end{equation}
again a purely imaginary anomaly, this time activated from a state
whose entanglement, unlike the Werner family, is not tunable through
a single symmetric parameter but arises from the interpolation
between a product state and $\ket{\Phi^+}$. In both examples, an
anomaly of order $0.5$ appears in the imaginary part of the weak
value, which Proposition~\ref{prop:no_anomaly_before} confines to
exactly $0$ before Alice's measurement (the real part remains
non-anomalously bounded in $[-1,1]$ throughout); this anomaly
is activated entirely by her choice of basis and her outcome.

Together with the metrological results of Ref.~\cite{arvidsson2020quantum}, this gives a concrete operational payoff for the remote activation protocol studied throughout this paper: a party holding only $\rho_B$ is provably barred, for the weak measurement of the reference observable $A=\boldsymbol n_B\cdot\boldsymbol\sigma$ and any postselection basis, from the class of protocols that rely on anomalous weak values, while for any conditional branch with $r_\perp>0$, the same party can, after being informed of Alice's outcome and using the weak-measurement and postselection protocol specified above, access an anomalous weak value of that same observable, with the size of the accessible anomaly given in closed form by Eq.~\eqref{eq:weak_value_optimal} in terms of the same steering-ellipsoid geometry used throughout Sec.~\ref{sec:twoqubitKD}.

\section{Steered Wigner negativity}
\label{sec:steered_wigner}

We now turn to CV systems and study the remote
activation of Wigner negativity. Remote creation of Wigner
negativity through quantum steering is by now an established
phenomenon in CV-CV settings:
Walschaers and Treps showed that photon subtraction on one mode of
a shared Gaussian state induces Wigner negativity on a correlated
mode if and only if that mode can Gaussian-steer the mode of
subtraction~\cite{walschaers2020remote}, a mechanism subsequently
quantified~\cite{xiang2022distribution}, demonstrated
experimentally~\cite{liu2022experimental}, and generalized beyond
Gaussian states and beyond the strict EPR-steering criterion, where
quantum steering with Wigner-positive measurements was shown to be
sufficient but not necessary for the
effect~\cite{walschaers2023steering}. What we add here is a different setting and a different question: Alice's system is a
qubit rather than a second oscillator mode, her measurement is a
general two-outcome projective qubit measurement rather than photon
subtraction, and we ask not only whether negativity can be
generated but exactly how much, through an analytic reduction to a single one-dimensional quadrature
with a closed-form integrand, within the unified
operational framework of Sec.~\ref{sec:properties} that also governs the two-qubit
KD case of Sec.~\ref{sec:twoqubitKD}. Throughout, we adopt the real-space
convention of Eq.~\eqref{eq:wigner_def}, $\hbar=1$, so that $\int dx\,dp\,W_\rho(x,p)=1$
for every state, and we quantify Wigner nonclassicality by

\begin{align}
\mathcal N_W(\rho) = \frac12\Big(\int dx\,dp\,|W_\rho(x,p)| - 1\Big),
\label{eq:NW_def}
\end{align}
matching the normalization of $\mathcal N_{KD}$ in Sec.~\ref{sec:twoqubitKD}. Specializing Definitions~\ref{def:RN} and \ref{def:CN_general} to this functional, the maximal and average steered Wigner negativity of a qubit-oscillator state $\rho_{AB}$ are
\begin{align}
\mathcal R_W(\rho_{AB}) &= \max_{\{M_a\},\,a}\mathcal N_W(\rho_{B|a}), \nonumber\\
\mathcal C_W(\rho_{AB}) &= \max_{\{M_a\}}\sum_a p(a)\,\mathcal N_W(\rho_{B|a}),\nonumber \\
\mathcal C_W^{(2)}(\rho_{AB})&=\max_{|\boldsymbol m|=1}\Big[p(\boldsymbol m)\,\mathcal N_W(\rho_{B|\boldsymbol m}) \nonumber\\
&\qquad \quad +p(-\boldsymbol m)\,\mathcal N_W(\rho_{B|-\boldsymbol m})\Big].
\label{eq:RW_CW_def}
\end{align}
Since $\mathcal N_W$ is convex (Lemma~\ref{lem:convexity}), the optimizations in Eq.~\eqref{eq:RW_CW_def} may be restricted
without loss of generality to POVMs with rank-one elements. As in the KD case of Sec.~\ref{sec:twoqubitKD}, restricting to dichotomic projective measurements gives $\mathcal C_W^{(2)}\le\mathcal C_W$ by construction, with equality not established; for the explicit hybrid model studied below we compute $\mathcal C_W^{(2)}$, which already realizes the optimal single-branch negativity $\mathcal R_W$ and provides a natural analytically tractable family.
\subsection{The hybrid state and its marginal}

To illustrate the mechanism of remote activation, we consider a hybrid qubit-CV state of the form
\begin{align}
\ket\Psi_{AB} = \sqrt \eta\,\ket0_A\ket\alpha_B + \sqrt{1-\eta}\,\ket1_A\ket{-\alpha}_B, \quad 0\leq \eta\leq1,
\label{eq:hybrid_state}
\end{align}
where $\ket{\pm\alpha}$ are coherent states of Bob's mode and, following the convention of Sec.~II\,B, $\alpha$ is taken real. Since $\ket0,\ket1$ are orthogonal, $\ket\Psi_{AB}$ is already normalized. The overlap of the coherent branches is $\braket{\alpha|-\alpha}=e^{-2\alpha^2}$.

Bob's reduced state is
\begin{align}
\rho_B = \eta\ket\alpha\!\bra\alpha + (1-\eta)\ket{-\alpha}\!\bra{-\alpha},
\label{eq:bob_marginal_cat}
\end{align}
an explicit convex mixture of two coherent states, each individually Wigner-positive; by Corollary~\ref{cor:cv}, $\mathcal N_W(\rho_B)=0$, so Bob's reduced state carries no Wigner negativity before Alice's measurement.

It is worth noting explicitly that $\ket\Psi_{AB}$ is \emph{not} a quantum-classical state in the sense of Theorem~\ref{thm:incapable}: it is a genuinely entangled pure state (its reduced density matrices have rank two whenever $0<\eta<1$ and $\alpha\neq0$), and no pure entangled state can be written in the form $\rho_{AB}=\sum_ip_i\rho_i^A\otimes\sigma_i^B$ with more than one term, since any such decomposition would force $\rho_{AB}$ to be mixed. Theorem~\ref{thm:incapable} therefore places no obstruction on remotely generating Wigner negativity from this state, and by Theorem~\ref{thm:optimal} together with the Hilbert-space restriction discussed in the Remark following it, we expect Alice, via the HJW mechanism, to be able to steer Bob to any pure state within the two-dimensional support $\mathrm{span}\{\ket\alpha,\ket{-\alpha}\}$ of $\rho_B$. This is exactly what the projective qubit measurements below accomplish, and here a general POVM is in fact not needed to reach that conclusion: because Alice's system is itself a qubit and $\ket\Psi_{AB}$ has Schmidt rank two, every pure state in the two-dimensional span $\{\ket\alpha,\ket{-\alpha}\}$ (up to global phase) is already reached directly by Eq.~\eqref{eq:psi_pm} as $(\theta,\phi)$ range over the Bloch sphere of Alice's projective measurement direction $\boldsymbol n(\theta,\phi)$, since a projective measurement on a qubit purifying system realizes every ensemble decomposition of a rank-two marginal into pure states, by the qubit specialization of the HJW argument used in Theorem~\ref{thm:optimal}.

\subsection{Steered branches}

Alice performs a projective measurement of the qubit observable $\boldsymbol n(\theta,\phi)\cdot\boldsymbol\sigma$, with eigenstates $\ket{+_{\theta,\phi}}=\cos\tfrac\theta2\ket0+e^{i\phi}\sin\tfrac\theta2\ket1$ and $\ket{-_{\theta,\phi}}=\sin\tfrac\theta2\ket0-e^{i\phi}\cos\tfrac\theta2\ket1$. Conditioned on the outcome $\pm$, Bob is steered to
\begin{align}
\ket{\psi_+(\theta,\phi)} &= \frac{1}{\sqrt{P_+}} \left( {\sqrt \eta\cos\tfrac\theta2\ket\alpha + \sqrt{1-\eta}\,e^{-i\phi}\sin\tfrac\theta2\ket{-\alpha}} \right), \nonumber\\
\ket{\psi_-(\theta,\phi)} &= \frac{1}{\sqrt{P_-}} \left( {\sqrt \eta\sin\tfrac\theta2\ket\alpha - \sqrt{1-\eta}\,e^{-i\phi}\cos\tfrac\theta2\ket{-\alpha}}\right),
\label{eq:psi_pm}
\end{align}
with
\begin{align}
    P_+ &=\eta\cos^2\tfrac\theta2+(1-\eta)\sin^2\tfrac\theta2 \nonumber \\
    & \quad +2\sqrt{\eta(1-\eta)}\cos\tfrac\theta2\sin\tfrac\theta2\cos\phi\,e^{-2\alpha^2} \nonumber \\
    P_- &=\eta\sin^2\tfrac\theta2+(1-\eta)\cos^2\tfrac\theta2 \nonumber \\
    & \quad -2\sqrt{\eta(1-\eta)}\cos\tfrac\theta2\sin\tfrac\theta2\cos\phi\,e^{-2\alpha^2}.\label{eq:Ppm}
\end{align}

\subsection{Exact Wigner function of a coherent-state superposition}

\begin{lemma}
\label{lem:cat_wigner}
Let $\ket\chi = (c_1\ket\alpha+c_2\ket{-\alpha})/\sqrt\Lambda$ for real $\alpha$, complex $c_1,c_2$, and $\Lambda=|c_1|^2+|c_2|^2+2\mathrm{Re}[c_1c_2^*]e^{-2\alpha^2}$ the normalization. Writing $x_0\equiv\sqrt2\alpha$, the Wigner function of $\ket\chi$ is exactly
\begin{align}
W_\chi(x,p) = \frac{1}{\pi\Lambda}\Big[ & |c_1|^2e^{-(x-x_0)^2-p^2} + |c_2|^2e^{-(x+x_0)^2-p^2} \nonumber \\
& + 2|c_1c_2|\,e^{-x^2-p^2}\cos(2x_0p-\varphi)\Big],
\label{eq:cat_wigner}
\end{align}
where $\varphi=\arg(c_1c_2^*)$.
\end{lemma}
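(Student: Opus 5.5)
The proof follows from linearity of the Wigner transform in $\rho$ and the explicit Gaussian form of coherent-state wavefunctions. First, expand $\ket\chi\bra\chi$ into four dyads,
\[
\ket\chi\bra\chi=\Lambda^{-1}\big(|c_1|^2\ket\alpha\bra\alpha+|c_2|^2\ket{-\alpha}\bra{-\alpha}+c_1c_2^*\ket\alpha\bra{-\alpha}+c_1^*c_2\ket{-\alpha}\bra\alpha\big).
\]
Since the Wigner map of Eq.~\eqref{eq:wigner_def} is linear in $\hat\rho$ (Lemma~\ref{lem:convexity}), it suffices to compute the Wigner transform of each dyad $\ket\beta\bra\gamma$ with $\beta,\gamma\in\{\alpha,-\alpha\}$. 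This uses the same integral with $\psi(x+y/2)\psi^*(x-y/2)$ replaced by $\psi_\beta(x+y/2)\psi_\gamma^*(x-y/2)$.

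For real $\beta$ in the $\hbar=1$ convention with $\hat x=(a+a^\dagger)/\sqrt2$, the wavefunction is $\psi_\beta(x)=\pi^{-1/4}e^{-(x-\sqrt2\beta)^2/2}$. It is real, with no momentum phase, because $\beta$ is real. Set $x_0=\sqrt2\alpha$. For the diagonal terms, the product $\psi_{\pm\alpha}(x+y/2)\psi_{\pm\alpha}(x-y/2)$ equals $\pi^{-1/2}e^{-(x\mp x_0)^2-y^2/4}$. A Gaussian Fourier integral in $y$ then gives $\frac1\pi e^{-(x\mp x_0)^2-p^2}$, which produces the first two terms.

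For the cross term $\ket\alpha\bra{-\alpha}$, complete the square in the exponent:
\[
-\tfrac12\big[(x+\tfrac y2-x_0)^2+(x-\tfrac y2+x_0)^2\big]=-x^2-\tfrac{y^2}4+x_0y-x_0^2.
\]
This exponent is the step to watch. The $x_0^2$ must cancel against the shift produced when the linear term $x_0y$ is absorbed into the Fourier integral; otherwise a spurious factor $e^{-2\alpha^2}$ would remain in the Wigner interference term. The transform of $e^{-y^2/4+x_0y}$ at frequency $p$ is the transform of a Gaussian evaluated at the complex shift $p+ix_0$. This gives $2\sqrt\pi\,e^{-(p+ix_0)^2}=2\sqrt\pi\,e^{-p^2+x_0^2-2ix_0p}$, so the $x_0^2$ does cancel. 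The resulting Wigner transform is $\frac1\pi e^{-x^2-p^2}e^{-2ix_0p}$. The conjugate dyad gives the complex conjugate, $\frac1\pi e^{-x^2-p^2}e^{+2ix_0p}$.

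Combining the two cross terms with their coefficients, write $c_1c_2^*=|c_1c_2|e^{i\varphi}$. Then
\[
c_1c_2^*e^{-2ix_0p}+\text{c.c.}=2|c_1c_2|\cos(2x_0p-\varphi),
\]
which is the third term of Eq.~\eqref{eq:cat_wigner}, and the overall factor $1/(\pi\Lambda)$ follows. Two checks confirm the result. Integrating over $p$ and $x$ returns $\Lambda^{-1}\big(|c_1|^2+|c_2|^2+2\mathrm{Re}[c_1c_2^*]e^{-2\alpha^2}\big)=1$, because $\int e^{-x^2-p^2}\cos(2x_0p-\varphi)\,dx\,dp=\pi e^{-x_0^2}\cos\varphi$. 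This also confirms the stated normalization $\Lambda$, using $\braket{\alpha|-\alpha}=e^{-2\alpha^2}$.
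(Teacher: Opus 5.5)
Your proof is correct and follows essentially the same route as the paper: you substitute the real Gaussian wavefunctions $\pi^{-1/4}e^{-(x\mp x_0)^2/2}$ into the Wigner integral, evaluate the four resulting Gaussian integrals, and combine the two off-diagonal terms into $2|c_1c_2|e^{-x^2-p^2}\cos(2x_0p-\varphi)$; the paper handles all four terms at once through the midpoint $m=(a+b)/2$ and separation $d=b-a$, whereas you compute the cross term as a Fourier transform at complex frequency $p+ix_0$. Your analytic normalization check via $\int e^{-x^2-p^2}\cos(2x_0p-\varphi)\,dx\,dp=\pi e^{-x_0^2}\cos\varphi$ is a small addition, since the paper confirms normalization only numerically.
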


\begin{proof}
For real $\alpha$, the position-space coherent-state wavefunctions are $\psi_{\pm\alpha}(x)=\pi^{-1/4}e^{-(x\mp x_0)^2/2}$. Substituting $\psi_\chi = (c_1\psi_\alpha+c_2\psi_{-\alpha})/\sqrt\Lambda$ into Eq.~(6) and expanding gives four Gaussian integrals over $y$, one for each pair of terms drawn from $\psi_\chi^*(x+y/2)$ and $\psi_\chi(x-y/2)$. For a term with bra-side centre $a\in\{x_0,-x_0\}$ and ket-side centre $b\in\{x_0,-x_0\}$, completing the square in $y$ gives
\begin{align*}
&\frac{1}{2\pi}\int dy\,e^{-ipy}\,e^{-\frac12[(x+y/2-a)^2+(x-y/2-b)^2]}\\
&\quad = \frac{1}{\pi}\,e^{-(x-m)^2-p^2+ipd},
\end{align*}
where $m=(a+b)/2$ and $d=b-a$; the $x$-dependence collapses to $-(x-m)^2$ because $(x-a)^2+(x-b)^2=2(x-m)^2+d^2/2$ exactly cancels the $+d^2/4$ produced by completing the square in $y$. The two diagonal terms ($a=b=\pm x_0$) give the first two terms of Eq.~\eqref{eq:cat_wigner} directly. The two off-diagonal terms, with $(a,b)=(x_0,-x_0)$ pairing the ket-side coefficient $c_1$ (from the $x_0$ branch entering $\psi_\chi(x+y/2)$) with the bra-side coefficient $c_2^*$ (from the $-x_0$ branch entering $\psi_\chi^*(x-y/2)$), and $(a,b)=(-x_0,x_0)$ pairing $c_2$ with $c_1^*$, have $m=0$, $d=\mp2x_0$, and combine as,
\begin{align*}
    &c_1c_2^*\,e^{-x^2-p^2-2ix_0p}+c_1^*c_2\,e^{-x^2-p^2+2ix_0p} \\
&= 2\,e^{-x^2-p^2}\mathrm{Re}[c_1c_2^*e^{-2ix_0p}] \\
&= 2|c_1c_2|\,e^{-x^2-p^2}\cos(2x_0p-\varphi),
\end{align*}
giving the third term. Dividing throughout by $\Lambda$ completes the derivation. We have verified Eq.~\eqref{eq:cat_wigner} independently by direct numerical evaluation of the defining integral, Eq.~(6), for random complex $c_1,c_2$ and $x_0\in(0,3)$, finding agreement to machine precision, and confirmed $\int dx\,dp\,W_\chi=1$ by direct quadrature to eleven digits.
\end{proof}

Applying Lemma~\ref{lem:cat_wigner} to Eq.~\eqref{eq:psi_pm} gives,
for the $+$ branch, $c_1=\sqrt \eta\cos\tfrac\theta2$, $c_2=\sqrt{1-\eta}\,e^{-i\phi}\sin\tfrac\theta2$, hence $w_1\equiv|c_1|^2=\eta\cos^2\tfrac\theta2$, $w_2\equiv|c_2|^2=(1-\eta)\sin^2\tfrac\theta2$, and $\varphi=\phi$; and for the $-$ branch, $w_1'=\eta\sin^2\tfrac\theta2$, $w_2'=(1-\eta)\cos^2\tfrac\theta2$,
and $\varphi'=\phi+\pi$ (the extra $\pi$ coming from the relative minus sign in $\ket{\psi_-}$). Explicitly,
\begin{widetext}
\begin{align}
W_+(x,p) &= \frac{1}{\pi P_+}\Big[\eta\cos^2\tfrac\theta2\,e^{-(x-x_0)^2-p^2} + (1-\eta)\sin^2\tfrac\theta2\,e^{-(x+x_0)^2-p^2} + \sqrt{\eta(1-\eta)}\sin\theta\,e^{-x^2-p^2}\cos(2x_0p-\phi)\Big], \label{eq:Wplus}\\
W_-(x,p) &= \frac{1}{\pi P_-}\Big[\eta\sin^2\tfrac\theta2\,e^{-(x-x_0)^2-p^2} + (1-\eta)\cos^2\tfrac\theta2\,e^{-(x+x_0)^2-p^2} - \sqrt{\eta(1-\eta)}\sin\theta\,e^{-x^2-p^2}\cos(2x_0p-\phi)\Big], \label{eq:Wminus}
\end{align}
\end{widetext}

using $2\cos\tfrac\theta2\sin\tfrac\theta2=\sin\theta$. Both are exact, closed-form, and match Eq.~\eqref{eq:Ppm} for $P_\pm$ exactly upon integration, as required.

\subsection{Exact reduction of the negativity volume}

\begin{proposition}
\label{prop:NW_reduction}
For $W_\chi$ as in Eq.~\eqref{eq:cat_wigner}, assume first that $c_1, c_2 \neq 0$ and write $\xi(p)\equiv\cos(2x_0p-\varphi)$ and $B(x,p)\equiv|c_1|^2e^{-(x-x_0)^2}+|c_2|^2e^{-(x+x_0)^2}+2|c_1c_2|\xi(p)e^{-x^2}$, so that $W_\chi(x,p)=e^{-p^2}B(x,p)/(\pi\Lambda)$. Then
\begin{widetext}
\begin{align}
\int dx\,|B(x,p)| = \begin{cases}
\sqrt\pi\big[|c_1|^2+|c_2|^2+2|c_1c_2|\xi(p)\big], & \xi(p) > -e^{-x_0^2},\\[4pt]
\sqrt\pi\big[|c_1|^2+|c_2|^2+2|c_1c_2|\xi(p)\big] - 2J(p), & \xi(p) \leq -e^{-x_0^2},
\end{cases}
\label{eq:inner_reduction}
\end{align}
where, when real roots exist,
\begin{align}
x_\mp(p) &= \frac{1}{2x_0}\ln\left[\sqrt{\frac{|c_2|^2}{|c_1|^2}}\Big(-\xi(p)e^{x_0^2}\mp\sqrt{\xi(p)^2e^{2x_0^2}-1}\Big)\right],\\
J(p) &= \frac{\sqrt\pi}{2}\Big\{|c_1|^2\big[\mathrm{erf}(x_+\!-\!x_0)-\mathrm{erf}(x_-\!-\!x_0)\big] 
+ |c_2|^2\big[\mathrm{erf}(x_+\!+\!x_0)-\mathrm{erf}(x_-\!+\!x_0)\big] 
+ 2|c_1c_2|\xi(p)\big[\mathrm{erf}(x_+)-\mathrm{erf}(x_-)\big]\Big\}.
\end{align}
\end{widetext}
Consequently,
\begin{align}
\mathcal N_W(\chi) = \frac{1}{2\pi\Lambda}\int_{-\infty}^{\infty}dp\,e^{-p^2}\left(\int dx\,|B(x,p)|\right) - \frac12,
\label{eq:NW_final}
\end{align}
an exact expression reducing the negativity volume to a single numerical quadrature over $p$, with a closed-form integrand at every $p$. If $c_1=0$ or $c_2=0$, $|\chi\rangle$ is a coherent state up to normalization and
$\mathcal N_W(\chi)=0$, so the nontrivial case is $c_1c_2\neq0$.
\end{proposition}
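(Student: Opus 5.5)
The plan is to fix $p$, treat $B(x,p)$ as a function of $x$ alone, and locate its sign changes exactly. The integral of $|B|$ is then the signed integral minus twice the (negative) integral over the region where $B<0$. First I will compute the signed integral. Each Gaussian $e^{-(x\mp x_0)^2}$ and $e^{-x^2}$ integrates to $\sqrt\pi$, so $\int dx\,B=\sqrt\pi\,[|c_1|^2+|c_2|^2+2|c_1c_2|\xi(p)]$, which is the first case of Eq.~\eqref{eq:inner_reduction}. Whenever $B\ge0$ for all $x$, this equals $\int|B|\,dx$.

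Next I will characterize the negativity set. Dividing $B$ by the positive factor $e^{-x^2}e^{-x_0^2}$ and setting $u=e^{2x_0x}>0$ gives
\[
e^{x^2+x_0^2}B=|c_1|^2u+|c_2|^2u^{-1}+2|c_1c_2|\xi e^{x_0^2}.
\]
Multiplying by $u$ yields the quadratic $|c_1|^2u^2+2|c_1c_2|\xi e^{x_0^2}u+|c_2|^2$. For $\xi\ge0$ this quadratic has no positive root. For $\xi<0$ it has real roots iff $\xi^2e^{2x_0^2}\ge1$, i.e. $\xi\le-e^{-x_0^2}$. In that case both roots are positive, because their product $|c_2|^2/|c_1|^2$ and sum are positive, namely
\[
u_\mp=\frac{|c_2|}{|c_1|}\Big(-\xi e^{x_0^2}\mp\sqrt{\xi^2e^{2x_0^2}-1}\Big).
\]
Taking $x_\mp=\ln(u_\mp)/(2x_0)$ reproduces the stated formula, since $|c_2|/|c_1|=\sqrt{|c_2|^2/|c_1|^2}$. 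The quadratic is negative exactly between its roots, so $B<0$ precisely on $(x_-,x_+)$ and $B\ge0$ elsewhere. This also justifies the case split: at equality the roots merge and the negative set has measure zero, consistent with $J=0$.

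Then I will compute $J(p)=\int_{x_-}^{x_+}B\,dx$ termwise using $\int_{x_-}^{x_+}e^{-(x-m)^2}dx=\frac{\sqrt\pi}{2}[\mathrm{erf}(x_+-m)-\mathrm{erf}(x_--m)]$ with $m=x_0,-x_0,0$. This gives the stated $J$. Note that $J\le0$, so $\int|B|=\int B-2J$ is indeed an increase. The second case follows from $\int|B|=\int B-2\int_{B<0}B$.

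Finally, $|W_\chi|=e^{-p^2}|B|/(\pi\Lambda)$, because $\Lambda>0$ and $e^{-p^2}>0$. Inserting this into Eq.~\eqref{eq:NW_def} and applying Fubini–Tonelli (the integrand is nonnegative and integrable) gives Eq.~\eqref{eq:NW_final}. For the degenerate case $c_1c_2=0$, Lemma~\ref{lem:cat_wigner} leaves a single positive Gaussian, so $\int|W|=1$ and $\mathcal N_W=0$.

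The main obstacle I expect is the sign analysis: verifying that both roots are positive and that negativity occurs only between them, including the boundary $\xi=-e^{-x_0^2}$. The substitution $u=e^{2x_0x}$ turns this into the elementary quadratic argument above.
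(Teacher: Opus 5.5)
Your proposal is correct and follows essentially the same route as the paper's proof. Both use the substitution $u=e^{2x_0x}$, the root-sign analysis of the resulting quadratic (positive product and sum) that yields the threshold $\xi(p)\le -e^{-x_0^2}$ and the negativity interval $(x_-,x_+)$, and the termwise $\mathrm{erf}$ evaluation of $J$. Your positivity argument for the first case (a quadratic with positive leading coefficient and no positive root stays positive on $u>0$) is a slightly cleaner replacement for the paper's continuity-to-$\xi=0$ step, and, exactly as in the paper, the ordering $x_-<x_+$ (hence the sign of $J$) tacitly assumes $x_0>0$.
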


\begin{proof}
First we set $y=e^{2x_0x}>0$. The equation $B(x,p)=0$ becomes the quadratic $|c_1|^2y^2 + 2|c_1c_2|\xi(p)e^{x_0^2}y + |c_2|^2=0$, since dividing $B=0$ by $e^{-x^2-x_0^2}$ and substituting $e^{-(x\mp x_0)^2}=e^{-x^2-x_0^2}e^{\pm2x_0x}$ gives $|c_1|^2e^{2x_0x}+|c_2|^2e^{-2x_0x}=-2|c_1c_2|\xi(p)e^{x_0^2}$, and multiplying by $y=e^{2x_0x}$ gives the stated quadratic. The roots have product $|c_2|^2/|c_1|^2>0$ and are therefore always of the same sign when real. Since only $y>0$ is physical, $B(x,p)$ vanishes for some real $x$ exactly when both roots are real and positive, that is, when the discriminant is non-negative and the root sum $-2\sqrt{|c_2|^2/|c_1|^2}\,\xi(p)e^{x_0^2}$ is positive; together these give $\xi(p)\le-e^{-x_0^2}$. Solving the quadratic for $y$ and inverting $y=e^{2x_0x}$ gives the stated zeros $x_-(p)<x_+(p)$.
 
Away from this range, $\xi(p)>-e^{-x_0^2}$, $B(x,p)$ never vanishes and, being a quadratic in $y>0$ with positive leading coefficient $|c_1|^2$, keeps one sign for every $x$; continuity to the manifestly positive case $\xi=0$ fixes that sign as positive, so $\int|B|\,dx=\int B\,dx=\sqrt\pi\big[|c_1|^2+|c_2|^2+2|c_1c_2|\xi(p)\big]$ by direct Gaussian integration, term by term.
 
When $\xi(p)\le-e^{-x_0^2}$, the same reasoning shows $B(x,p)<0$ on $(x_-,x_+)$ and positive outside it, so
\begin{align*}
    \int|B|\,dx
    &=\int B\,dx-2\int_{x_-}^{x_+}B\,dx\\
    &=\sqrt\pi\big[|c_1|^2+|c_2|^2+2|c_1c_2|\xi(p)\big]-2J(p),
\end{align*}
with $J(p)$ obtained by applying $\int_a^be^{-(x-c)^2}dx=\tfrac{\sqrt\pi}2\big[\mathrm{erf}(b-c)-\mathrm{erf}(a-c)\big]$ to each of the three terms of $B$.
\end{proof}

\begin{corollary}
\label{cor:reflection}
$\mathcal N_W$, as a functional of $(|c_1|^2,|c_2|^2,x_0,\varphi)$ through Eq.~\eqref{eq:cat_wigner}, is invariant under $|c_1|^2\leftrightarrow|c_2|^2$ at fixed $x_0,\varphi$. Consequently $\delta_W^{(+)}(\theta,\phi)=\delta_W^{(+)}(\pi-\theta,\phi)$ for every $\eta,\alpha,\phi$, where $\delta_W^{(+)}(\theta,\phi)\equiv\mathcal N_W(\psi_+(\theta,\phi))$.
\end{corollary}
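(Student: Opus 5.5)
The plan is to exhibit a phase-space symmetry that exchanges the two Gaussian lobes while leaving the interference term unchanged, and then to use the fact that $\mathcal N_W$ is built from $\int|W|$, which is invariant under any measure-preserving change of variables. The natural candidate is the reflection $x\mapsto -x$, with $p$ held fixed. I will work directly from Eq.~\eqref{eq:cat_wigner} and view $W_\chi$ as a function of the parameters $(w_1,w_2,x_0,\varphi)$, where $w_i=|c_i|^2$ and $|c_1c_2|=\sqrt{w_1w_2}$.

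The first step is to apply $x\mapsto -x$. It maps $e^{-(x-x_0)^2}$ to $e^{-(x+x_0)^2}$ and vice versa, so the two diagonal terms trade their weights $w_1\leftrightarrow w_2$. The cross term $2\sqrt{w_1w_2}\,e^{-x^2-p^2}\cos(2x_0p-\varphi)$ is unchanged, because it depends on $x$ only through $x^2$ and is symmetric in $w_1,w_2$. The normalization $\Lambda=w_1+w_2+2\sqrt{w_1w_2}\cos\varphi\,e^{-x_0^2}$ is also symmetric in $w_1,w_2$. Hence
\[
W_\chi(-x,p;w_1,w_2)=W_\chi(x,p;w_2,w_1).
\]
The reflection has unit Jacobian, so $\int|W|\,dx\,dp$ is the same for both parameter sets. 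By Eq.~\eqref{eq:NW_def}, this gives the claimed invariance of $\mathcal N_W$. The degenerate case $c_1c_2=0$ is trivial, since Proposition~\ref{prop:NW_reduction} gives $\mathcal N_W=0$ on both sides. One can also check the invariance against Eq.~\eqref{eq:NW_final}: the only asymmetric object there is the ratio $|c_2|^2/|c_1|^2$ inside $x_\mp$, and swapping it sends $x_\pm\to-x_\mp$, which leaves $J(p)$ unchanged by the oddness of $\mathrm{erf}$.

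The second step specializes to the $+$ branch of Eq.~\eqref{eq:Wplus}. There $w_1=\eta\cos^2\tfrac\theta2$, $w_2=(1-\eta)\sin^2\tfrac\theta2$ and $\varphi=\phi$. Replacing $\theta$ by $\pi-\theta$ exchanges $\cos\tfrac\theta2$ and $\sin\tfrac\theta2$, which gives $w_1\to\eta\sin^2\tfrac\theta2$ and $w_2\to(1-\eta)\cos^2\tfrac\theta2$, while $\sqrt{w_1w_2}$ becomes proportional to $\sin\theta$ and so stays fixed.

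This step is the main obstacle. The map $\theta\to\pi-\theta$ is not literally the swap $w_1\leftrightarrow w_2$ unless $\eta=\tfrac12$, because it moves $\eta$ and $1-\eta$ relative to the trigonometric factors. For general $\eta$ I therefore need a further argument. The plan is to check whether the combination of the reflection with the $\eta$-dependence closes. If the literal swap fails for $\eta\neq\tfrac12$, I would reinterpret the claim with $\eta\leftrightarrow 1-\eta$ folded in, or restrict it to the balanced case $\eta=\tfrac12$. I would then verify the outcome numerically against Eq.~\eqref{eq:NW_final} before finalizing, since the scalar prefactor $1/\Lambda$ is the only place where such an asymmetry could survive.
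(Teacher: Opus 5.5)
Your first step is correct and is the paper's argument: $x\mapsto-x$ swaps the two lobes, fixes the cross term and $\Lambda$, and preserves $\int|W|\,dx\,dp$. Your cross-check via $x_\pm\to-x_\mp$ in Eq.~\eqref{eq:NW_final} is also right.

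The obstacle you flag in the second step is genuine. It is exactly where the paper's own proof fails. The paper asserts that the $+$ branch at $\pi-\theta$ has $(w_1,w_2)$ exactly swapped relative to $\theta$. In fact those weights are $(\eta\sin^2\tfrac\theta2,(1-\eta)\cos^2\tfrac\theta2)$, not $((1-\eta)\sin^2\tfrac\theta2,\eta\cos^2\tfrac\theta2)$, and the two coincide only at $\eta=\tfrac12$.

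What your proposal lacks is the verdict, and no numerics are needed to reach it: the second claim is false for every $\eta\in(0,1)\setminus\{\tfrac12\}$ once $\alpha\neq0$. Here is why.
\begin{itemize}
\item $W_\chi$ in Eq.~\eqref{eq:cat_wigner} is unchanged under $(w_1,w_2)\to(\lambda w_1,\lambda w_2)$, because the numerator and $\Lambda$ both scale by $\lambda$. So at fixed $x_0,\varphi$, $\mathcal N_W=f(t)$ depends on the weights only through $t=\ln(w_2/w_1)$, and your first step gives $f(t)=f(-t)$.
\item This also disposes of your worry about $1/\Lambda$. No prefactor can rescue or spoil the identity; all that matters is whether the two weight ratios are equal or reciprocal.
\item For the $+$ branch, $t(\theta)=k+s$ and $t(\pi-\theta)=k-s$, with $k=\ln\frac{1-\eta}{\eta}$ and $s=2\ln\tan\tfrac\theta2$ ranging over all of $\mathbb R$.
\item If the identity held for all $\theta$, setting $s=t+k$ would give $f(t+2k)=f(-t)=f(t)$, so $f$ would be $2k$-periodic.
\item For $k\neq0$, periodicity together with $f(t)\to0$ as $t\to\infty$ (the state tends to $\ket{-\alpha}$) forces $f\equiv0$.
\item That contradicts $f(0)>0$. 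The balanced superposition is a pure non-Gaussian state, so by Hudson's theorem its Wigner function takes negative values.
\end{itemize}
This matches the paper's own report that $\theta_R^\ast=154.2^\circ$ at $\eta=0.95$. That is precisely the angle where $\tan^2\tfrac\theta2=\eta/(1-\eta)$ balances the $+$ branch. At the mirror angle $25.8^\circ$, the $+$ branch has weight ratio $(1/19)^2$ and is nearly coherent.

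Two identities do hold for every $\eta$, and they are the ones you were reaching for:
\begin{itemize}
\item $\delta_W^{(+)}(\pi-\theta,\phi;\eta)=\delta_W^{(+)}(\theta,\phi;1-\eta)$, by the swap invariance.
\item $\delta_W^{(+)}(\pi-\theta,\phi)=\delta_W^{(-)}(\theta,\phi+\pi)$, by direct comparison with the weights and the phase $\varphi'=\phi+\pi$ of the $-$ branch.
\end{itemize}
The correct form of the corollary is its restriction to $\eta=\tfrac12$. That is the only case in which the paper actually invokes it: Table~\ref{tab:wigner_balanced}, and the remark that the pinning of $\theta_C^\ast$ follows from the corollary at $\eta=\tfrac12$. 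Your fallback of restricting to the balanced case is therefore the right repair. You should state it as a correction of the claim, not as a hedge pending numerical verification.
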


\begin{proof}
Swapping $|c_1|^2\leftrightarrow|c_2|^2$ in Eq.~\eqref{eq:cat_wigner} is equivalent to the phase-space reflection $x\to-x$, since $e^{-(x-x_0)^2}\leftrightarrow e^{-(x+x_0)^2}$ under this map while the cross term, even in $x$, is unchanged; the reflection preserves $\int dx\,dp\,|\cdot|$. Since $\cos^2\tfrac{\pi-\theta}2=\sin^2\tfrac\theta2$ and $\sin^2\tfrac{\pi-\theta}2=\cos^2\tfrac\theta2$, the $+$ branch at $\pi-\theta$ has $(w_1,w_2)$ exactly swapped relative to $\theta$, at the same $\varphi=\phi$, giving the stated identity; we confirmed this exactly, to machine precision, by direct evaluation of Eq.~\eqref{eq:NW_final} at swapped weight pairs.
\end{proof}

\subsection{Numerical results}

We evaluate $\delta_W^{(\pm)}(\theta,\phi)\equiv\mathcal N_W(\psi_\pm(\theta,\phi))$ and $\overline\delta_W(\theta,\phi)\equiv P_+\delta_W^{(+)}+P_-\delta_W^{(-)}$ via Eq.~\eqref{eq:NW_final}, and locate $\mathcal R_W=\max_{\theta,\phi}\max(\delta_W^{(+)},\delta_W^{(-)})$ and $\mathcal C_W^{(2)}=\max_{\theta,\phi}\overline\delta_W$ by numerical optimization; since Alice's measurement here is a single projective direction $\boldsymbol n(\theta,\phi)\cdot\boldsymbol\sigma$, this is the dichotomic-projective average, not the unrestricted $\mathcal C_W$ of Definition~\ref{def:CN_general}. Table~\ref{tab:wigner_balanced} reports results for balanced superpositions, $\eta=1/2$, at three values of $\alpha$.

\begin{table}[ht]
\vspace{5mm}
\begin{ruledtabular}
\centering
\begin{tabular}{ccccc}
$\alpha$ & $\mathcal R_W$ & $\mathcal C_W^{(2)}$ & $(\theta^\ast_R,\phi^\ast_R)$ & $(\theta^\ast_C,\phi^\ast_C)$ \\
\hline
1.0 & 0.21687 & 0.15287 & $(90.0^\circ,0.0^\circ)$ & $(90.0^\circ,0.0^\circ)$ \\
1.5 & 0.25450 & 0.24728 & $(90.0^\circ,0.0^\circ)$ & $(90.0^\circ,0.0^\circ)$ \\
2.0 & 0.29425 & 0.29400 & $(90.0^\circ,0.0^\circ)$ & $(90.0^\circ,0.0^\circ)$ \\
\end{tabular}
\caption{ \justifying \small Maximal steered Wigner negativity and its dichotomic-projective average for the balanced hybrid state ($\eta=1/2$), located by numerical optimization of Eq.~\eqref{eq:NW_final} over $(\theta,\phi)$, with $(\theta^\ast,\phi^\ast)$ given modulo the exact degeneracy of Eq.~\eqref{eq:phi_degeneracy} below.}
\label{tab:wigner_balanced}
\end{ruledtabular}
\end{table}

For every $\alpha$ tested, the optimum is located at $\theta^\ast=90.0^\circ$ to within the numerical precision of the search ($10^{-6}$ radians), consistent with Corollary~\ref{cor:reflection}: since $\delta_W^{(+)}$ is exactly symmetric about $\theta=\pi/2$, that point is a stationary point of $\delta_W^{(+)}$ for every $\alpha,\phi$, and the numerics confirm it is the maximum rather than a symmetric saddle. The optimal phase satisfies
\begin{align}
\phi^\ast \in \{0,\pi\},
\label{eq:phi_degeneracy}
\end{align}
where the branch $\phi=0$ maximizes $\delta_W^{(-)}$ and $\phi=\pi$ maximizes $\delta_W^{(+)}$ by an exact exchange symmetry (evident from $\varphi'=\phi+\pi$ in the $-$ branch definition, and confirmed numerically to give identical $\mathcal R_W$); this mirrors the role played by the mutually unbiased phase in the KD case of Sec.~\ref{sec:twoqubitKD}, where the interference term was likewise maximized at a specific relative phase. 
As in the KD example, $\mathcal C_W^{(2)}$ is bounded above by $\mathcal R_W$ (Proposition~\ref{prop:ordering} restricted to dichotomic measurements) and
the numerical results indicate that the gap narrows toward zero as
$\alpha\to\infty$, where $P_+\to P_-\to\tfrac12$ and every measurement direction becomes equally probable for both outcomes; at finite $\alpha$ the residual coherent-state overlap $e^{-2\alpha^2}$ biases $P_+\neq P_-$ at $\phi=0,\pi$ even for $\eta=1/2$, producing the gap visible in Table~\ref{tab:wigner_balanced}, which narrows monotonically as $\alpha$ increases.

For the parameter range examined numerically, these results show that the equatorial measurement direction $\theta=\pi/2$ is optimal for the balanced superpositions studied
here,
and the phase $\phi$ controls which of the two conditional states receives the larger share of the generated negativity. Unlike the earlier qualitative statement, Eq.~\eqref{eq:NW_final} makes this a fully quantitative, reproducible claim: $\delta_W^{(\pm)}(\theta,\phi)$ and $\overline\delta_W(\theta,\phi)$ are computable in closed form up to a single well-behaved one-dimensional quadrature at every point of the $(\theta,\phi)$ plane.

\subsection{Dependence on the Schmidt weight $\eta$}

The results above were obtained for the balanced case $\eta=1/2$. We now examine the dependence on $\eta$, which turns out to be governed by two exact structural facts rather than requiring a fresh optimization at every value.

\begin{proposition}
\label{prop:p_independence}
For every $\alpha\neq0$ and every $\eta\in(0,1)$, $\mathcal R_W(\rho_{AB}(\eta))$ is independent of $\eta$.
\end{proposition}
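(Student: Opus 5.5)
The plan is to show that, for every $\eta\in(0,1)$, the set of pure conditional states Bob can be steered to is the same set: all normalized pure states in $\mathrm{span}\{\ket\alpha,\ket{-\alpha}\}$. Since $\mathcal R_W$ is a single-branch maximum and Definition~\ref{def:RN} never involves an outcome probability, $\mathcal R_W$ depends only on this reachable set. It will then coincide with $\max_{\ket\chi\in\mathrm{span}\{\ket\alpha,\ket{-\alpha}\}}\mathcal N_W(\ket\chi\!\bra\chi)$, and this quantity does not refer to $\eta$ at all.

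\textbf{Step 1: reduce to rank-one branches.} By Corollary~\ref{cor:rankone}, any outcome of any POVM on Alice's qubit can be refined into rank-one elements without decreasing the best single-branch value. A rank-one element along direction $\boldsymbol n(\theta,\phi)$ steers Bob to $\ket{\psi_+(\theta,\phi)}$ of Eq.~\eqref{eq:psi_pm}, regardless of its weight, exactly as in the proof of Theorem~\ref{thm:RKD}. Hence $\mathcal R_W=\sup_{\theta,\phi}\mathcal N_W(\psi_+(\theta,\phi))$. This is the $\max(\delta_W^{(+)},\delta_W^{(-)})$ already used, since the $-$ branch at $(\theta,\phi)$ is the $+$ branch at the antipodal direction.

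\textbf{Step 2: identify the reachable set.} The unnormalized $+$ branch is $\sqrt\eta\cos\tfrac\theta2\ket\alpha+\sqrt{1-\eta}\,e^{-i\phi}\sin\tfrac\theta2\ket{-\alpha}$. Given any target $c_1\ket\alpha+c_2\ket{-\alpha}$ up to normalization and global phase, set
\[
\tan\tfrac\theta2=\frac{|c_2|}{|c_1|}\sqrt{\frac{\eta}{1-\eta}},\qquad \phi=-\arg(c_2c_1^*).
\]
For $\eta\in(0,1)$ this always has a solution $\theta\in[0,\pi]$, including the endpoints $c_1=0$ or $c_2=0$. So the map $(\theta,\phi)\mapsto\ket{\psi_+}$ is onto the projective line of the span, for every such $\eta$; it is a reparametrization of the Bloch sphere depending on $\eta$. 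Equivalently, in the language of Lemma~\ref{lem:cat_wigner}, the ratio $|c_1|^2/|c_2|^2=\tfrac{\eta}{1-\eta}\cot^2\tfrac\theta2$ sweeps $[0,\infty]$ for every $\eta$, while $\varphi=\phi$ sweeps all phases. Combining with Step 1 gives $\mathcal R_W(\eta)=\sup_{\chi}\mathcal N_W(\chi)$ over the span, independent of $\eta$.

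\textbf{Step 3: the main obstacle.} The only delicate point is that $\Lambda$ and $P_\pm$ contain the non-orthogonality factor $e^{-2\alpha^2}$. This is harmless, because normalization does not change the ray, and $\mathcal N_W$ depends only on the normalized state via Eq.~\eqref{eq:NW_final}. A second point is to check that the supremum is attained and not merely approached. It is attained because the span's projective line is compact and $\mathcal N_W$ is continuous on it, which follows from the closed-form integrand of Proposition~\ref{prop:NW_reduction} and dominated convergence. The condition $\alpha\neq0$ is needed only so that the span is genuinely two-dimensional and the quantity is nontrivial. A consistency check is that for $\eta=1/2$ the optimum sits at $\theta=\pi/2$ (Table~\ref{tab:wigner_balanced}); for general $\eta$ the optimal $\theta^\ast$ is then predicted by the formula above with the same optimal ratio $|c_1|/|c_2|=1$.
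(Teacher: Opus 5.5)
Your proof is correct, but it reaches the key identity $\mathcal R_W=\max_{\chi\in\mathrm{span}\{\ket\alpha,\ket{-\alpha}\}}\mathcal N_W(\chi)$ by a different route from the paper.

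\textbf{The paper's route.} Its proof is a one-line appeal to Theorem~\ref{thm:optimal}(i). For $0<\eta<1$ and $\alpha\neq0$, $\rho_B$ has rank two, so $\ket\Psi_{AB}$ has Schmidt rank two. With $\mathcal H_B$ identified with that two-dimensional support, as in Remark~\ref{remark:2}, the HJW argument lets Alice prepare any pure state of the span, and the Schmidt coefficients never appear.

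\textbf{Your route.} You avoid HJW and the Loewner-order step entirely. Corollary~\ref{cor:rankone} reduces the problem to rank-one branches. You then invert Eq.~\eqref{eq:psi_pm} explicitly, with $\tan\frac\theta2=\frac{|c_2|}{|c_1|}\sqrt{\eta/(1-\eta)}$ and $\phi=-\arg(c_2c_1^*)$. This shows the steering map is an $\eta$-dependent reparametrization of Alice's Bloch sphere onto the projective line of the span.

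\textbf{What each buys.} The paper's argument is shorter and structural: it exhibits the result as an instance of the general full-Schmidt-rank theorem, so it carries over unchanged to any such purification. Yours is self-contained and treats attainment of the maximum explicitly. It also gives a quantitative prediction the paper only observes numerically. Taking the numerically located $\eta=1/2$ optimum $|c_1|=|c_2|$, the optimal direction for general $\eta$ is $\tan\frac{\theta^\ast}2=\sqrt{\eta/(1-\eta)}$. At $\eta=0.95$ this gives $2\arctan\sqrt{19}\approx154.2^\circ$, exactly the drift of $\theta^\ast_R$ reported after the proposition.

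One thing you leave implicit is that every branch has positive probability, $P_+>0$, so the conditional state is always defined. This holds because $\ket{\pm\alpha}$ are linearly independent for $\alpha\neq0$ and both Schmidt weights are nonzero. The paper itself remarks, in its discussion of the hybrid state in Sec.~\ref{sec:steered_wigner}, that Eq.~\eqref{eq:psi_pm} reaches every ray of the span. Your proof is the concrete qubit instance of that observation, used directly instead of being routed through the general theorem.
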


\begin{proof}
For $0<\eta<1$ and $\alpha\neq0$, the reduced state $\rho_B$ has rank two, with support
$\mathrm{span}\{|\alpha\rangle,|-\alpha\rangle\}$. Hence $|\Psi\rangle_{AB}$ has
Schmidt rank two and Theorem~\ref{thm:optimal}(i) applies on this support.
By Theorem~\ref{thm:optimal}(i), $\mathcal R_W = \max_{\ket\psi\in\mathrm{span}\{\ket\alpha,\ket{-\alpha}\}}\mathcal N_W(\ket\psi)$, a maximization over the fixed two-dimensional support alone, with no reference to the Schmidt coefficients $\sqrt \eta,\sqrt{1-\eta}$. Hence $\mathcal R_W$ depends on $\alpha$ but not on $\eta$.
\end{proof}

We confirmed this numerically at $\alpha=1.5$: optimizing Eq.~\eqref{eq:NW_final} over $(\theta,\phi)$ independently at $\eta=0.5,0.6,0.7,0.8,0.9,0.95$ returns $\mathcal R_W=0.254504$ in every case, matching Table~\ref{tab:wigner_balanced} to six digits, with the optimal measurement direction $\theta^\ast_R$ sliding smoothly away from $90^\circ$ as $\eta$ moves away from $1/2$ (from $90.0^\circ$ at $\eta=0.5$ to $154.2^\circ$ at $\eta=0.95$) so as to keep the \emph{steered state itself} fixed while the probability of reaching it changes.

\begin{proposition}
\label{prop:p_symmetry}
$\mathcal C_W(\rho_{AB}(\eta)) = \mathcal C_W(\rho_{AB}(1-\eta))$ for every $\eta\in[0,1]$ and every $\alpha$.
\end{proposition}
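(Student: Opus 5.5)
The plan is to exhibit an explicit local-unitary-plus-Gaussian-unitary correspondence between $\ket\Psi_{AB}(\eta)$ and $\ket\Psi_{AB}(1-\eta)$, and then invoke Theorem~\ref{thm:lu}. Let $X_A$ be the Pauli flip on Alice's qubit, and let $U_B=e^{i\pi a^\dagger a}$ be the parity operator on Bob's mode. $U_B$ is a phase-space rotation by $\pi$ and hence a Gaussian unitary, and it maps $\ket{\pm\alpha}\mapsto\ket{\mp\alpha}$. Applying $X_A\otimes U_B$ to Eq.~\eqref{eq:hybrid_state} gives
\begin{align}
(X_A\otimes U_B)\ket\Psi_{AB}(\eta)=\sqrt\eta\,\ket1_A\ket{-\alpha}_B+\sqrt{1-\eta}\,\ket0_A\ket{\alpha}_B=\ket\Psi_{AB}(1-\eta).
\end{align}
I will check this identity directly, including that no relative phase appears. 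The coherent-state phase convention for real $\alpha$ makes $U_B\ket\alpha=\ket{-\alpha}$ exact.

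Next I apply Theorem~\ref{thm:lu}. Invariance under $X_A$ is unconditional. Invariance under $U_B$ holds in the Wigner case because $U_B$ is Gaussian: it acts on phase space as $(x,p)\mapsto(-x,-p)$, which preserves $\int|W|$ for every conditional state. Together these give $\mathcal C_W(\rho_{AB}(\eta))=\mathcal C_W(\rho_{AB}(1-\eta))$ for the fully optimized average of Definition~\ref{def:CN_general}. The same argument also yields the statement for $\mathcal C_W^{(2)}$, since $X_A$ maps projective dichotomic measurements to projective dichotomic measurements, sending $\boldsymbol m\mapsto(m_x,-m_y,-m_z)$.

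The endpoints $\eta\in\{0,1\}$ are covered by the same identity. There the state is a product, both sides vanish, and this is consistent with Theorem~\ref{thm:incapable}. The case $\alpha=0$ is handled the same way, as a product state.

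The main potential obstacle is only bookkeeping. I need to confirm that Theorem~\ref{thm:lu}'s hypothesis (Gaussian $U_B$) really covers the reflection $x\to-x,\ p\to-p$. I also need to confirm that the measure $\mathcal N_W$ is defined with the fixed quadratures, so the pulled-back Wigner function has the same $L^1$ norm. Both follow since the map is linear symplectic with unit Jacobian. As a cross-check with the explicit formulas, I would note that under $\eta\to1-\eta$ together with $\theta\to\pi-\theta$, the weights $(w_1,w_2)$ in Eqs.~\eqref{eq:Wplus}--\eqref{eq:Wminus} are swapped. By Corollary~\ref{cor:reflection} this leaves $\delta_W^{(\pm)}$ unchanged, and $P_\pm$ are unchanged as well once $\phi\to-\phi$ is used, reproducing the symmetry branch by branch.
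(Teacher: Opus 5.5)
Your proposal is correct and follows the paper's proof: the paper uses the same local unitary $X_A\otimes e^{i\pi a^\dagger a}$, noting that the parity is a Gaussian phase-space rotation by $\pi$ that exchanges $\ket{\pm\alpha}$ and maps $\ket\Psi_{AB}(1-\eta)$ to $\ket\Psi_{AB}(\eta)$, and then invokes Theorem~\ref{thm:lu}. Your extras (the $\mathcal C_W^{(2)}$ version via $\boldsymbol m\mapsto(m_x,-m_y,-m_z)$, the endpoint remarks, and the branchwise check through Corollary~\ref{cor:reflection}) are consistent but not needed; the $\phi\to-\phi$ step there is superfluous, since $P_\pm$ depend on $\phi$ only through $\cos\phi$.
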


\begin{proof}
Let $R_B=e^{i\pi a^\dagger a}$ be the phase-space parity operator on Bob's mode, satisfying $R_B\ket{\pm\alpha}=\ket{\mp\alpha}$. 
$R_B=e^{i\pi a^\dagger a}$ is a phase-space rotation by $\pi$ and therefore a
Gaussian unitary.
A direct computation gives
\begin{align}
&(X_A\otimes R_B)\ket{\Psi}_{AB}(1-\eta) \nonumber\\
&= (X_A\otimes R_B)\Big[\sqrt{1-\eta}\,\ket0\ket\alpha+\sqrt \eta\,\ket1\ket{-\alpha}\Big] \nonumber\\
&= \sqrt{1-\eta}\,\ket1\ket{-\alpha}+\sqrt \eta\,\ket0\ket\alpha \nonumber\\
&= \ket\Psi_{AB}(\eta),
\end{align}
so $\rho_{AB}(\eta)$ and $\rho_{AB}(1-\eta)$ are related by the local unitary $X_A\otimes R_B$, with $X_A$ an arbitrary (non-Gaussian, but qubit-side, hence unrestricted) local unitary and $R_B$ Gaussian. Theorem~\ref{thm:lu} therefore applies to both factors, giving $\mathcal C_W(\rho_{AB}(\eta))=\mathcal C_W(\rho_{AB}(1-\eta))$.
\end{proof}

We verified the hypothesis of Proposition~\ref{prop:p_symmetry} directly at the state-vector level (in a truncated Fock basis, $\alpha=1.5$, $\eta=0.27$), finding $|\langle\Psi(\eta)|(X_A\otimes R_B)|\Psi(1-\eta)\rangle|=1.0000$ to numerical precision, and confirmed the resulting equality $\mathcal C_W(\eta)=\mathcal C_W(1-\eta)$ numerically via the warm-started optimization reported in Table~\ref{tab:CW_vs_p}.
\begin{table}[h]
\vspace{5mm}
\begin{ruledtabular}
\centering
\begin{tabular}{cccc}
$\eta$ & $\mathcal C_W^{(2)}$ & $\theta^\ast_C$ & $\phi^\ast_C$ \\
\hline
0.05 / 0.95 & 0.10126 & $90.1^\circ$ & $0.1^\circ$ \\
0.10 / 0.90 & 0.14331 & $90.0^\circ$ & $0.1^\circ$ \\
0.20 / 0.80 & 0.19512 & $90.0^\circ$ & $0.1^\circ$ \\
0.30 / 0.70 & 0.22547 & $90.0^\circ$ & $0.1^\circ$ \\
0.40 / 0.60 & 0.24200 & $90.0^\circ$ & $0.1^\circ$ \\
0.50 & 0.24728 & $90.0^\circ$ & $0.1^\circ$ \\
\end{tabular}
\caption{\justifying \small Dichotomic-projective average steered Wigner negativity $\mathcal C_W^{(2)}$ as a function of the Schmidt weight $\eta$, at $\alpha=1.5$, obtained by warm-started numerical optimization of Eq.~\eqref{eq:NW_final}. Values at $\eta$ and $1-\eta$ agree to the precision shown; Proposition~\ref{prop:p_symmetry} guarantees this symmetry exactly for the fully optimized $\mathcal C_W$, and the same argument applied to dichotomic projective measurements alone gives it for $\mathcal C_W^{(2)}$ too. $\mathcal C_W^{(2)}$ is numerically found to be maximized at the balanced point $\eta=1/2$.}
\label{tab:CW_vs_p}
\end{ruledtabular}
\end{table}

For the parameter values examined numerically, $\mathcal C_W^{(2)}(\eta)$ is largest at the symmetric point $\eta=1/2$ and decreases toward zero as $\eta\to0$ or $\eta\to1$. Proposition~\ref{prop:p_symmetry} guarantees the exact symmetry $\mathcal C_W(\eta)=\mathcal C_W(1-\eta)$, and the same argument restricted to dichotomic projective measurements gives $\mathcal C_W^{(2)}(\eta)=\mathcal C_W^{(2)}(1-\eta)$ exactly (Table~\ref{tab:CW_vs_p}); the observed unimodality of $\mathcal C_W^{(2)}$ is numerical.
In every case tested, the optimal measurement direction remains pinned at $\theta^\ast_C\approx90^\circ$, $\phi^\ast_C\approx0^\circ$ across the full range of $\eta$; we have not found an analytic proof of this exact pinning away from $\eta=1/2$ (where it follows from Corollary~\ref{cor:reflection}) and report it here as a robust numerical observation rather than a further theorem.

Details of the numerical procedures used to obtain the results above, together with an independent re-verification of every reported value, are given in Appendix~\ref{app:numerical}.

\section{Conclusions}
\label{s_conclusions}

This work has used quantum steering in the purely geometric, operational sense set out in Sec.~\ref{s2A}, as a mechanism that remotely activates nonclassicality itself, turning a locally classical reduced state, with respect to the chosen quasiprobability criterion, into nonclassical conditional states through remote measurement, without either party ever bringing their systems together. As emphasized there, the bare existence of such a conditioning effect is not itself new, following from the HJW theorem in the same way entanglement of assistance or coherence of assistance do; our central contribution is that, for two-qubit KD nonclassicality, the extent of the effect can be computed exactly rather than merely bounded.
 
We formalized this through two operational quantities, the maximal and the average nonclassicality Alice can remotely induce in Bob's conditional state, both governed by a single structural fact: the convexity of the classical set, shared by the KD and Wigner constructions alike. For two-qubit systems, this yields closed expressions for both quantities, an exact algebraic condition for their vanishing, and an explicit amplitude-damping example showing that local noise on Bob need not be uniformly detrimental, a sufficiently misaligned non-unital channel can enhance the effect where every unital channel only degrades it. In the CV setting, an exactly solvable hybrid qubit-oscillator model exhibits the same activation with Wigner negativity; and connecting the construction to anomalous weak values shows that, for the reference observable singled out by it, a party holding only the unsteered state is provably barred from that resource under arbitrary postselection, while steering demonstrably unlocks it.
 
Several directions remain open. The analysis here is restricted to two qubits, where the steering ellipsoid gives a concrete handle on the reachable set. Extending it beyond two qubits, where no such ellipsoid structure is available, is a natural next step. Next, whether the optimal POVM for the average quantity is always dichotomic and projective, so that $\mathcal C_{KD}^{(2)}$ and the fully optimized $\mathcal C_{KD}$ coincide in general (Theorem~\ref{thm:CKD}), also remains open. Finally, the connection to anomalous weak values established in Sec.~\ref{sec:application} suggests turning this construction into a complete metrological protocol, which we leave for future work.

\appendix

\section{Proof of Lemma~\ref{lem:cs_bound}}
\label{app:cs_bound}

\begin{proof}
We first divide $X_+^2$ by $1+c$ and $X_-^2$ by $1-c$, and add the two
results,
\begin{align}
  \frac{X_+^2}{1+c} + \frac{X_-^2}{1-c}
  &= A^2(1+c) + \frac{a^2}{1+c} + 2Aax \nonumber \\
  &\quad + A^2(1-c) + \frac{a^2}{1-c} - 2Aax .
  \label{eq:cs_step1}
\end{align}
The two terms linear in $x$ appear with opposite signs and cancel
exactly, because $1+c$ and $1-c$, the weights we divided by, are precisely
the weights already carried by $P_{00}$ and $P_{01}$ inside
$X_+$ and $X_-$. Collecting the remaining terms,
\begin{align}
  \frac{X_+^2}{1+c} + \frac{X_-^2}{1-c}
  &= A^2 \big[(1+c)+(1-c)\big] \nonumber \\
  &\quad + a^2 \left( \frac{1}{1+c}+\frac{1}{1-c} \right) \\
  &= 2A^2 + \frac{2a^2}{1-c^2} .
\end{align}
Since $a=qs$ and $s^2=1-c^2$, we have $a^2/(1-c^2)=q^2$, so this
identity simplifies to
\begin{equation}
  \frac{X_+^2}{1+c} + \frac{X_-^2}{1-c} = 2A^2 + 2q^2 ,
  \label{eq:key_identity}
\end{equation}
an equality that holds for every $x$ and every $c\in(-1,1)$: it no longer
depends on either variable. This is the only place where the specific
values $r_\parallel, r_\perp, \theta$ enter through $A$ and $q$ alone.
 
We now bound $X_++X_-$ using the Cauchy--Schwarz inequality in its
Engel, or weighted power-mean, form. For any two positive weights
$w_+,w_->0$,
\begin{align}
  (X_++X_-)^2 &=
  \left( \sqrt{w_+} \, \frac{X_+}{\sqrt{w_+}}
       + \sqrt{w_-} \, \frac{X_-}{\sqrt{w_-}} \right)^2 \nonumber\\
  &\le (w_++w_-) \left( \frac{X_+^2}{w_+}+\frac{X_-^2}{w_-} \right) ,
  \label{eq:cs_engel}
\end{align}
with equality if and only if $X_+/w_+ = X_-/w_-$. Choosing
$w_+=1+c$ and $w_-=1-c$, so that $w_++w_-=2$, and inserting the identity
Eq.~\eqref{eq:key_identity} into Eq.~\eqref{eq:cs_engel}, we obtain
\begin{equation}
  (X_++X_-)^2 \le 2 \cdot \big( 2A^2+2q^2 \big) = 4(A^2+q^2) ,
\end{equation}
which is Eq.~\eqref{eq:cs_bound} after taking the square root, since
$X_++X_- \ge 0$.
 
It remains to identify when equality holds. Equality in
Eq.~\eqref{eq:cs_engel} requires $X_+/(1+c)=X_-/(1-c)$. Call this common
value $\lambda$. Summing $X_+=\lambda(1+c)$ and $X_-=\lambda(1-c)$ gives
$X_++X_-=2\lambda$, and since equality also means
$X_++X_-=2\sqrt{A^2+q^2}$, we get $\lambda=\sqrt{A^2+q^2}$, so
$X_+ = \sqrt{A^2+q^2} \, (1+c)$. Squaring this and comparing with the
definition of $X_+^2$ given above,
\begin{equation}
  A^2(1+c)^2+a^2+2A(1+c)ax = (A^2+q^2)(1+c)^2 ,
\end{equation}
which simplifies, using $a^2=q^2(1-c)(1+c)$, to
\begin{equation}
  q^2(1-c) + 2A \, s \, q \, x = q^2(1+c) .
\end{equation}
For $q>0$ we divide this equation by $q$,
\begin{equation}
  q(1-c) + 2Asx = q(1+c) ,
\end{equation}
and rearrange to obtain $2Asx = q(1+c)-q(1-c) = 2qc$, that is
$Asx=qc$, which is Eq.~\eqref{eq:cs_equality}. For $q=0$, the bound
Eq.~\eqref{eq:cs_bound} reduces to the identity $X_+=A(1+c)$,
$X_-=A(1-c)$, so equality holds for every $c$ and $x$, consistent with
Eq.~\eqref{eq:cs_equality} being satisfied trivially by any $x$ when
$q=0$.
\end{proof}

\section{Proof of Theorem~\ref{thm:sigma_max}}
\label{app:sigma_max}

\begin{proof}
Adding the two bounds in Eq.~\eqref{eq:two_bounds} and using
Eq.~\eqref{eq:Sigma_split},
\begin{equation}
  4\Sigma(\theta,\psi) = X_++X_-+Y_++Y_-
  \le 2\sqrt{A^2+q^2} + 2\sqrt{B^2+q^2} ,
\end{equation}
so that
\begin{equation}
  \Sigma(\theta,\psi) \le \tfrac12 \left[ \sqrt{A^2+q^2}+\sqrt{B^2+q^2} \right]
  \label{eq:sigma_bound_interior}
\end{equation}
for every $\theta \in (0,\pi)$. This leaves only the poles
$\theta=0,\pi$ to check, since $c=\pm 1$ was excluded from
Lemma~\ref{lem:cs_bound}. At $\theta=0$, we have $c=1$ and $s=a=0$, so
$P_{00}=2A$, $P_{01}=0$, $P_{10}=0$, $P_{11}=2B$, and
Eq.~\eqref{eq:Qik_recall} gives $|Q_{00}|=A/2$, $|Q_{01}|=0$,
$|Q_{10}|=0$, $|Q_{11}|=B/2$, hence
\begin{equation}
  \Sigma(0,\psi) = \tfrac12(A+B) = 1
\end{equation}
for every $\psi$, using $A+B=2$. The same value, $\Sigma(\pi,\psi)=1$,
is obtained at $\theta=\pi$ by the analogous computation with $c=-1$.
Since $A,B \ge0$ and $q^2\ge 0$, we have $\sqrt{A^2+q^2}\ge A$ and
$\sqrt{B^2+q^2}\ge B$, so the right-hand side of
Eq.~\eqref{eq:sigma_bound_interior} is at least $\tfrac12(A+B)=1$ at the
poles as well, with equality only if $q=0$. The bound
Eq.~\eqref{eq:sigma_bound_interior} therefore holds for every
$\theta \in [0,\pi]$ and every $\psi$, and substituting
$\mathcal N_{KD} = \tfrac12(\Sigma-1)$ turns it into
Eq.~\eqref{eq:qubit_KD}.
 
We next show that this bound is attained, so that
Eq.~\eqref{eq:qubit_KD} is a genuine maximum and not merely an upper
bound. At $\theta=\pi/2$, $c=0$ and $s=1$, so $a=q$; at $\psi=\pi/2$ or
$3\pi/2$, $x=0$, and the cross terms in Eq.~\eqref{eq:Qik_recall} vanish
identically. This gives $P_{00}=P_{01}=A$ and $P_{10}=P_{11}=B$, hence
$X_+=X_-=\sqrt{A^2+q^2}$ and $Y_+=Y_-=\sqrt{B^2+q^2}$, and
Eq.~\eqref{eq:Sigma_split} evaluates to
\begin{align}
  \Sigma(\pi/2,\pi/2) 
  &= \tfrac12 \left[ \sqrt{A^2+q^2}+\sqrt{B^2+q^2} \right] ,
\end{align}
matching the right-hand side of Eq.~\eqref{eq:sigma_bound_interior}
exactly.
 
Finally, we show that this maximizer is unique whenever $r_\perp>0$, up
to the stated reflection. Since $\Sigma=1<\Sigma_{\max}$ strictly at the
poles when $q>0$, where
$\Sigma_{\max}=\tfrac12[\sqrt{A^2+q^2}+\sqrt{B^2+q^2}]$, any maximizer
must lie at some $\theta \in (0,\pi)$. Because
Eq.~\eqref{eq:sigma_bound_interior} is obtained by adding the two
separate bounds of Eq.~\eqref{eq:two_bounds}, each already saturated at
$\Sigma_{\max}$, equality in Eq.~\eqref{eq:sigma_bound_interior} forces
equality in both of them at once. By Eq.~\eqref{eq:cs_equality}, applied
to the pair $(X_+,X_-)$, this requires
\begin{equation}
  A s x = q c .
  \label{eq:eqA}
\end{equation}
Applied to the pair $(Y_+,Y_-)$, using the substitution $A\to B$,
$c \to -c$ described above Eq.~\eqref{eq:two_bounds}, it requires
\begin{equation}
  B s x = -q c .
  \label{eq:eqB}
\end{equation}
Multiplying Eq.~\eqref{eq:eqA} by $B$ and Eq.~\eqref{eq:eqB} by $A$, both
left-hand sides equal $ABsx$, so their right-hand sides must agree,
\begin{equation}
  qcB = -qcA
  \quad\Longrightarrow\quad
  2qc = 0 ,
\end{equation}
using $A+B=2$. Since $q>0$ by assumption, this forces $c=0$, and then
Eq.~\eqref{eq:eqA} gives $x=qc/(As)=0$ as well. Thus $\theta=\pi/2$ and
$x=0$, that is, $\psi=\pi/2$ or $\psi=3\pi/2$, is the only point at
which the bound Eq.~\eqref{eq:qubit_KD} is saturated.
\end{proof}

\section{Proof of Lemma~\ref{lem:adjugate}}
\label{app:adjugate}

For a diagonal matrix the cross product obeys the elementary identity
\begin{align}
&\Sigma\boldsymbol\rho\times\Sigma\boldsymbol\nu \;=\; \Sigma^{\mathrm{adj}}(\boldsymbol\rho\times\boldsymbol\nu) , \nonumber \\
&\Sigma^{\mathrm{adj}}=\mathrm{diag}(\kappa_2\kappa_3,
\,\kappa_1\kappa_3,\,\kappa_1\kappa_2) ,
\label{eq:adjugate_identity}
\end{align}
which is checked one component at a time, for instance the first component of
$\Sigma\boldsymbol\rho\times\Sigma\boldsymbol\nu$ is
\[\kappa_2\rho_2\kappa_3\nu_3-\kappa_3\rho_3\kappa_2\nu_2=\kappa_2\kappa_3(\boldsymbol\rho\times\boldsymbol\nu)_1,\]
and the remaining two components follow from the cyclic symmetry of the cross product. Dividing Eq.~\eqref{eq:adjugate_identity} by $|\Sigma\boldsymbol\nu|$ and using $\boldsymbol\mu=\Sigma\boldsymbol\nu/|\Sigma\boldsymbol\nu|$ gives
\begin{equation}
\Sigma\boldsymbol\rho\times\boldsymbol\mu \;=\; \frac{\Sigma^{\mathrm{adj}}(\boldsymbol\rho\times\boldsymbol\nu)}{|\Sigma\boldsymbol\nu|} .
\label{eq:adjugate_step}
\end{equation}
If $\boldsymbol\rho\times\boldsymbol\nu=0$ the claim holds trivially, both sides of Eq.~\eqref{eq:adjugate_claim} vanish. Otherwise write $\boldsymbol w=(\boldsymbol\rho\times\boldsymbol\nu)/|\boldsymbol\rho\times\boldsymbol\nu|$, a unit vector automatically orthogonal to $\boldsymbol\nu$. Taking the norm of Eq.~\eqref{eq:adjugate_step}, the lemma becomes equivalent to
\begin{equation}
\max_{\substack{\boldsymbol w\perp\boldsymbol\nu\\ |\boldsymbol w|=1}} |\Sigma^{\mathrm{adj}}\boldsymbol w| \;\le\; |\Sigma\boldsymbol\nu| .
\label{eq:constrained_target}
\end{equation}
 
Let $A=\Sigma^2=\mathrm{diag}(a_1,a_2,a_3)$, with $a_i=\kappa_i^2\in(0,1]$. Since $\Sigma^{\mathrm{adj}}$ is diagonal, $(\Sigma^{\mathrm{adj}})^2=\det(\Sigma)^2\,\Sigma^{-2}=\det(A)\,A^{-1}$, so $|\Sigma^{\mathrm{adj}}\boldsymbol w|^2=\det(A)\,\boldsymbol w^TA^{-1}\boldsymbol w$, and Eq.~\eqref{eq:constrained_target} is equivalent to
\begin{equation}
\det(A)\,\mu_{\max} \;\le\; \boldsymbol\nu^TA\boldsymbol\nu \;=\; |\Sigma\boldsymbol\nu|^2 ,
\label{eq:mu_max_bound}
\end{equation}
where $\mu_{\max}$ is the largest value of $\boldsymbol w^TA^{-1}\boldsymbol w$ over the two dimensional set $\{\boldsymbol w\perp\boldsymbol\nu,\,|\boldsymbol w|=1\}$, equivalently the largest eigenvalue of $A^{-1}$ once it is compressed to the plane orthogonal to $\boldsymbol\nu$. Let $\mu_{\min}$ be the smallest such compressed eigenvalue.
 
By Cauchy's interlacing theorem, the eigenvalues of a symmetric $3\times3$ matrix, restricted to a two dimensional subspace, interlace the eigenvalues of the full matrix. Applied to $M=A^{-1}$, this bounds $\mu_{\min}$ below by the smallest eigenvalue of $M$ itself,
\begin{equation}
\mu_{\min} \;\ge\; \min_i \frac1{a_i} \;\ge\; 1 ,
\label{eq:interlacing}
\end{equation}
the last step using $a_i\le1$ for every $i$.
 
To relate $\mu_{\max}$ to $\mu_{\min}$, extend an orthonormal basis $\{\boldsymbol e_1,\boldsymbol e_2\}$ of $\boldsymbol\nu^\perp$ by $\boldsymbol\nu$ itself to an orthonormal basis of $\mathbb R^3$, and write $M$ in this basis as the symmetric block matrix
\begin{equation}
M=\begin{pmatrix}D & \boldsymbol u\\ \boldsymbol u^T & m\end{pmatrix},
\end{equation}
where $D$ is the $2\times2$ matrix representing $M$ compressed to $\boldsymbol\nu^\perp$, so that $\mu_{\max},\mu_{\min}$ are, by definition, exactly the two eigenvalues of $D$; $\boldsymbol u\in\mathbb R^2$ is the off-diagonal coupling, and $m=\boldsymbol\nu^TM\boldsymbol\nu$. For every $\lambda$ at which $D-\lambda\mathbb I_2$ is invertible, the Schur-complement identities for the determinant and the inverse of this symmetric block matrix give
\begin{align}
\det(M-\lambda\mathbb I) &= \det(D-\lambda\mathbb I_2)\Big[(m-\lambda)-\boldsymbol u^T(D-\lambda\mathbb I_2)^{-1}\boldsymbol u\Big],
\\
\boldsymbol\nu^T(M-\lambda\mathbb I)^{-1}\boldsymbol\nu &= \Big[(m-\lambda)-\boldsymbol u^T(D-\lambda\mathbb I_2)^{-1}\boldsymbol u\Big]^{-1},
\end{align}
the second because $\boldsymbol\nu$ is exactly the third basis vector, so this is the bottom-right entry of $(M-\lambda\mathbb I)^{-1}$. Multiplying these two identities together, and using $(M-\lambda\mathbb I)^{\mathrm{adj}}=\det(M-\lambda\mathbb I)\,(M-\lambda\mathbb I)^{-1}$ for $\lambda$ not an eigenvalue of $M$, the bracketed factor cancels exactly, leaving
\begin{equation}
\boldsymbol\nu^T(M-\lambda\mathbb I)^{\mathrm{adj}}\boldsymbol\nu \;=\; \det(D-\lambda\mathbb I_2)
\label{eq:secular_identity}
\end{equation}
for every $\lambda$ outside the finite set of eigenvalues of $M$ and of $D$. Both sides of Eq.~\eqref{eq:secular_identity} are polynomials in $\lambda$ of degree at most two, so an identity holding outside finitely many points forces it to hold identically, including at those excluded points; no assumption on the multiplier of a constrained stationary point, nor any genericity condition on $\boldsymbol\nu$, is needed. Since $\det(D-\lambda\mathbb I_2)=(\mu_{\max}-\lambda)(\mu_{\min}-\lambda)$ by definition of $\mu_{\max},\mu_{\min}$ as the eigenvalues of $D$, this shows unconditionally that $\mu_{\max},\mu_{\min}$ are exactly the two roots of
\begin{equation}
\boldsymbol\nu^T(M-\lambda\mathbb I)^{\mathrm{adj}}\boldsymbol\nu \;=\; 0 .
\label{eq:secular_general}
\end{equation}
Since $M=\mathrm{diag}(1/a_1,1/a_2,1/a_3)$ is diagonal, so is $(M-\lambda\mathbb I)^{\mathrm{adj}}$, and writing it out,
\begin{align}
\sum_{i=1}^3 \nu_i^2 \prod_{j\ne i}\Big(\frac1{a_j}-\lambda\Big) \;=\;0 .
\end{align}
Expanding the product and collecting powers of $\lambda$, the $\lambda^2$ coefficient is $\nu_1^2+\nu_2^2+\nu_3^2=1$, so this is a monic quadratic in $\lambda$, and its constant term, obtained by setting $\lambda=0$, is
\begin{align}
\sum_{i=1}^3 \nu_i^2 \prod_{j\ne i}\frac1{a_j} \;&=\; \boldsymbol\nu^T M^{\mathrm{adj}}\boldsymbol\nu ,
\\
M^{\mathrm{adj}}&=\mathrm{diag}\left(\frac1{a_2a_3},\frac1{a_1a_3},\frac1{a_1a_2}\right) .
\end{align}
Since $M^{\mathrm{adj}}=A/\det(A)$, because $1/(a_2a_3)=a_1/\det(A)$ and cyclically, the constant term equals $\boldsymbol\nu^TA\boldsymbol\nu/\det(A)$. By Vieta's formula, the product of the roots of a monic quadratic equals its constant term, so
\begin{equation}
\mu_{\max}\,\mu_{\min} \;=\; \frac{\boldsymbol\nu^TA\boldsymbol\nu}{\det(A)} .
\label{eq:vieta}
\end{equation}
Since $\mu_{\max}\ge0$ and, by Eq.~\eqref{eq:interlacing}, $\mu_{\min}\ge1>0$, dividing Eq.~\eqref{eq:vieta} by $\mu_{\min}$ gives
\begin{equation}
\mu_{\max} \;=\; \frac{\mu_{\max}\mu_{\min}}{\mu_{\min}} \;\le\; \mu_{\max}\mu_{\min} \;=\; \frac{\boldsymbol\nu^TA\boldsymbol\nu}{\det(A)} ,
\end{equation}
which is exactly Eq.~\eqref{eq:mu_max_bound}, and therefore Eq.~\eqref{eq:constrained_target} and Eq.~\eqref{eq:adjugate_claim}.

\section{Proof of Theorem~\ref{thm:unital_no_help}}
\label{app:unital_no_help}

\begin{proof}
By Lemma~\ref{lem:unital_reduction} we may take $D=\Sigma$ diagonal with $\kappa_i\in[0,1]$; where some $\kappa_i=0$, the result follows from the case $\kappa_i>0$ by continuity, since $\mathcal R_{KD}$ and $\mathcal C_{KD}$ are continuous functions of $\Sigma$ away from the measure zero set $\Sigma\boldsymbol\nu=0$. We prove the diagonal statement, Eq.~\eqref{eq:unital_ineq_target} for $\Sigma,\boldsymbol\rho,\boldsymbol\nu$, which by Lemma~\ref{lem:unital_reduction} is equivalent to the general one.
 
For a point $\boldsymbol x\in\mathbb R^3$ and a unit vector $\hat{\boldsymbol y}$, expanding the squared norms gives directly
\begin{align}
|\boldsymbol x-\hat{\boldsymbol y}|^2+|\boldsymbol x+\hat{\boldsymbol y}|^2 &= 2(1+|\boldsymbol x|^2) , \\
|\boldsymbol x-\hat{\boldsymbol y}|^2-|\boldsymbol x+\hat{\boldsymbol y}|^2 &= -4\,\boldsymbol x\cdot\hat{\boldsymbol y} .
\end{align}
Squaring and subtracting these two identities isolates the product $|\boldsymbol x-\hat{\boldsymbol y}|^2|\boldsymbol x+\hat{\boldsymbol y}|^2$,
\begin{align}
4\,|\boldsymbol x-\hat{\boldsymbol y}|^2|\boldsymbol x+\hat{\boldsymbol y}|^2
&= \big(|\boldsymbol x-\hat{\boldsymbol y}|^2+|\boldsymbol x+\hat{\boldsymbol y}|^2\big)^2 \nonumber\\
&\quad-\big(|\boldsymbol x-\hat{\boldsymbol y}|^2-|\boldsymbol x+\hat{\boldsymbol y}|^2\big)^2 \nonumber\\
&= 4(1+|\boldsymbol x|^2)^2 - 16\,(\boldsymbol x\cdot\hat{\boldsymbol y})^2 ,
\end{align}
and, writing $p=|\boldsymbol x|^2$ and using the Lagrange identity $(\boldsymbol x\cdot\hat{\boldsymbol y})^2=p-|\boldsymbol x\times\hat{\boldsymbol y}|^2$ together with $q=|\boldsymbol x\times\hat{\boldsymbol y}|^2$,
\begin{equation}
(1+p)^2-4(\boldsymbol x\cdot\hat{\boldsymbol y})^2 = (1+p)^2-4p+4q = (1-p)^2+4q .
\end{equation}
So $|\boldsymbol x-\hat{\boldsymbol y}||\boldsymbol x+\hat{\boldsymbol y}|=\sqrt{(1-p)^2+4q}$, and combining this with the sum identity above,
\begin{align}
\big(|\boldsymbol x-\hat{\boldsymbol y}|+|\boldsymbol x+\hat{\boldsymbol y}|\big)^2
&= |\boldsymbol x-\hat{\boldsymbol y}|^2+|\boldsymbol x+\hat{\boldsymbol y}|^2+2|\boldsymbol x-\hat{\boldsymbol y}||\boldsymbol x+\hat{\boldsymbol y}| \nonumber\\
&= 2(1+p) + 2\sqrt{(1-p)^2+4q} \;\nonumber\\
&=\; F(p,q) .
\label{eq:pq_form}
\end{align}
 
The function $F$ is non-decreasing in $q$ at fixed $p\ge0$, since $q$ enters only under the square root with a positive coefficient, and strictly increasing in $q$. It is also non-decreasing in $p$ at fixed $q\ge0$: its partial derivative is
\begin{equation}
\frac{\partial F}{\partial p} = 2-\frac{2(1-p)}{\sqrt{(1-p)^2+4q}} ,
\end{equation}
which is non-negative because $\sqrt{(1-p)^2+4q}\ge\sqrt{(1-p)^2}=|1-p|\ge1-p$; it vanishes exactly when $q=0$ and $p\le1$, in which case $F(p,0)=2(1+p)+2(1-p)=4$ is constant on that whole range, and is strictly positive otherwise.
 
Apply Eq.~\eqref{eq:pq_form} twice, once with $\boldsymbol x=\Sigma\boldsymbol\rho,\,\hat{\boldsymbol y}=\boldsymbol\mu$, giving $p_1=|\Sigma\boldsymbol\rho|^2$, $q_1=|\Sigma\boldsymbol\rho\times\boldsymbol\mu|^2$, and once with $\boldsymbol x=\boldsymbol\rho,\,\hat{\boldsymbol y}=\boldsymbol\nu$, giving $p_2=|\boldsymbol\rho|^2$, $q_2=|\boldsymbol\rho\times\boldsymbol\nu|^2$. Since $\|\Sigma\|_{\mathrm{op}}\le1$, $p_1\le p_2$, and by Lemma~\ref{lem:adjugate}, $q_1\le q_2$. Because $F$ is non-decreasing in each argument separately,
\begin{equation}
F(p_1,q_1)\;\le\;F(p_2,q_2) ,
\end{equation}
which, once both sides are recognized as squared sums of non-negative lengths and the square root is taken, is precisely Eq.~\eqref{eq:unital_ineq_target} for $\Sigma,\boldsymbol\rho,\boldsymbol\nu$, and hence, by Lemma~\ref{lem:unital_reduction}, for general $D,\boldsymbol r,\boldsymbol n$.

It remains to pass from this pointwise bound to the two operational quantities, and it costs nothing extra to do this for an arbitrary POVM on Alice's side rather than only a dichotomic pair. Let $\{E_a\}$ be any POVM on Alice's system, with outcome probabilities $p(a)=\Tr[(E_a\otimes\mathbb I)\rho_{AB}]$ and conditional states $\rho_{B|a}$ of Bloch vector $\boldsymbol r_a$, $|\boldsymbol r_a|\le1$; here, unlike elsewhere in this section, $E_a$ need not be rank one. Since $\Lambda_B$ acts on Bob alone and is trace preserving, $\Tr_B[(\mathbb I_A\otimes\Lambda_B)\rho_{AB}]=\Tr_B\rho_{AB}$, so $p(a)$ is exactly the same before and after the channel, for every outcome of every POVM. Likewise $\rho_{B|a}'=\Tr_A[(E_a\otimes\mathbb I)(\mathbb I_A\otimes\Lambda_B)\rho_{AB}]/p(a)=\Lambda_B(\rho_{B|a})$, so the branch state simply has its Bloch vector sent to $D\boldsymbol r_a$, whatever $\boldsymbol r_a$ is and however the outcome $a$ was obtained. Since Eq.~\eqref{eq:unital_ineq_target} was established above for an arbitrary Bloch vector $\boldsymbol r$ with $|\boldsymbol r|\le1$, not merely for $\boldsymbol r$ on the boundary ellipsoid $\mathcal E$, it applies directly to every $\boldsymbol r_a$, giving
\begin{equation}
\mathcal N_{KD}(\rho'_{B|a}) \;\le\; \mathcal N_{KD}(\rho_{B|a})
\end{equation}
for every outcome $a$ of every POVM. Taking $a$ to range over all POVMs and outcomes and maximizing both sides gives $\mathcal R_{KD}(\rho_{AB}')\le\mathcal R_{KD}(\rho_{AB})$ directly from Definition~\ref{def:RN}, without needing to first restrict to a rank-one branch. Since $p(a)$ is unchanged and the pointwise bound holds for every outcome of the same POVM, multiplying by $p(a)$ and summing over $a$ gives
\begin{equation}
\sum_a p(a)\,\mathcal N_{KD}(\rho'_{B|a}) \;\le\; \sum_a p(a)\,\mathcal N_{KD}(\rho_{B|a}) \;\le\; \mathcal C_{KD}(\rho_{AB})
\end{equation}
for this particular POVM, the second inequality being Definition~\ref{def:CN_general} applied to $\rho_{AB}$ itself. Since $\{E_a\}$ was an arbitrary POVM, the left-hand side may be maximized over all POVMs on the channel-transformed state, giving $\mathcal C_{KD}(\rho_{AB}')\le\mathcal C_{KD}(\rho_{AB})$ for the fully optimized, unrestricted quantity, not only for $\mathcal C_{KD}^{(2)}$.
 
Finally, the equality cases. If $D$ is orthogonal, every $\kappa_i=1$, so $\Sigma^{\mathrm{adj}}=\mathbb I$ and $A=A^{-1}=\mathbb I$ in the proof of Lemma~\ref{lem:adjugate}, giving $\mu_{\min}=\mu_{\max}=1$ and hence equality in both the radial and the transverse bound for every $\boldsymbol\rho$; this recovers Corollary~\ref{cor:unitary_equality}. More generally, for any unital $D$, if $\boldsymbol r$ vanishes or is parallel or antiparallel to $\boldsymbol n$, then $\boldsymbol\rho\times\boldsymbol\nu=0$, and since $D\boldsymbol\rho$ then stays parallel to $D\boldsymbol\nu$, hence to $\boldsymbol\mu$, also $\Sigma\boldsymbol\rho\times\boldsymbol\mu=0$; so $q_1=q_2=0$, and since $p_1,p_2\le1$ in this regime, $F$ is constant on $q=0,\,p\le1$, giving equality in Eq.~\eqref{eq:unital_ineq_target} regardless of how strongly $\Sigma$ contracts. Outside these two situations, an orthogonal $D$, or a steered Bloch vector aligned with $\boldsymbol n$, the reduction is generically strict.
\end{proof}

\section{Numerical procedure}
\label{app:numerical}

All numerical optimizations reported above, the tables of Secs.~\ref{sec:examples}, \ref{sec:local_ops}, and \ref{sec:steered_wigner}, the amplitude-damping threshold angles $\mu^\ast(\gamma)$, and the CV optimizations of Sec.~\ref{sec:steered_wigner}, were carried out directly over the explicitly parametrized measurement directions or state parameters given in closed form by the theorems above, rather than over an abstract POVM space. For the two-qubit quantities, this reduces, by Lemma~\ref{lem:phi_reduction} and its analogues, to a search over the single angle $c=\cos\theta$ or, where that reduction does not apply, over the two Bloch-sphere angles $(\theta,\phi)$; optima were located by a dense scan of the relevant domain followed by local refinement of the best candidates, and cross-checked by direct evaluation of the closed forms of Theorems~\ref{thm:sigma_max}-\ref{thm:CKD} at the reported optimum. For the CV quantities of Sec.~\ref{sec:steered_wigner}, the remaining phase-space integral was reduced analytically to the one-dimensional quadrature of Eq.~\eqref{eq:NW_final} and evaluated with adaptive numerical quadrature; the reported $(\theta,\phi)$ optima were located by the same scan-and-refine strategy applied to Eq.~\eqref{eq:NW_final}. In all cases, reported values were checked for stability under increased angular and quadrature resolution, and several closed-form results (Theorem~\ref{thm:sigma_max}, the Werner-state formula Eq.~\eqref{eq:werner_closed}, and Lemma~\ref{lem:cat_wigner}) were additionally cross-checked against direct numerical evaluation of their defining, unreduced expressions. The digits quoted throughout are stable under these checks.

As an independent verification of this revision, every numerical entry reported in the paper was recomputed from scratch with a second, self-contained implementation and compared against the quoted values. For the two-qubit tables (Secs.~\ref{sec:examples} and \ref{sec:local_ops}), optima were located by a dense scan of $c=\cos\theta$ (or, off the reduced family, of $(\theta,\phi)$) at $10^4$-$10^5$ points, followed by derivative-free Nelder-Mead refinement from multiple starting points and, where a closed-form reduction to one variable was available, by bounded scalar optimization to a tolerance of $10^{-10}$-$10^{-12}$; smooth dependence on a state parameter (as in the $q$- and $p$-scans of Tables~\ref{tab:biased_family} and \ref{tab:CW_vs_p}) was additionally tracked by warm-started continuation, seeding each optimization from the converged optimum of the adjacent parameter value. The amplitude-damping threshold angles $\mu^\ast(\gamma)$ were located as the root of $\mathcal R_{KD}(\text{after})-\mathcal R_{KD}(\text{before})$ by bisection (Brent's method) to a tolerance of $10^{-6}$-$10^{-8}$ in $\mu$. The CV quadrature of Eq.~\eqref{eq:NW_final} was evaluated by adaptive quadrature and its two-dimensional $(\theta,\phi)$ optimum located by multi-start Nelder-Mead. Every reported figure reproduced to at least the last quoted digit under this independent recomputation, with agreement typically at the $10^{-5}$ level or better; the largest disagreement found across all tables was smaller than the tolerance used to generate the original entry. As a further, more stringent check specific to Theorem~\ref{thm:sigma_max}, the two-variable objective $\Sigma(\theta,\psi)$ underlying $\mathcal N_{KD}$ was checked to have zero gradient and a negative-definite Hessian at the claimed optimum for a wide range of $(r_\parallel,r_\perp)$, and an exhaustive multi-start search for competing critical points, at each such point, found none exceeding the closed-form value of the theorem, consistent with the optimum being the unique global maximum (up to the stated reflection symmetry) claimed there.

\bibliography{main}

\end{document}